\newcommand{\subf}[2]{%
  {\small\begin{tabular}[t]{@{}c@{}}
  #1\\#2
  \end{tabular}}%
}
\newtheorem{thm}{Theorem}[section]
\newtheorem{prop}{Proposition}[section]
\newtheorem{lem}{Lemma}[section]
\newtheorem{cor}{Corollary}[section]
\newtheorem{exa}{Example}[section]
\def\<{\left<}\def\>{\right>}
\def\({\left(}\def\){\right)}
\begin{document}

\title[Parisian excursion  for draw-down reflected L\'evy insurance risk process]{Parisian excursion with capital injection for draw-down reflected L\'evy insurance risk process}
\author{ Budhi Surya,\,\, Wenyuan Wang,  \,\,Xianghua Zhao\,\,and\,\,  Xiaowen Zhou }

\address{Budhi Surya:
School of Mathematics and Statistics, Victoria University of Wellington, New Zealand; Email address: budhi.surya@msor.vuw.ac.nz.
Wenyuan Wang: School of Mathematical Sciences, Xiamen University, Fujian 361005, People's Republic of China;
Email address: wwywang@xmu.edu.cn.
Xianghua Zhao: School of Mathematical Sciences, Qufu Normal University, Qufu 273165, China; Email address: qfzxh@163.com.
Xiaowen Zhou: Department of Mathematics and Statistics, Concordia University, Montreal, Quebec, Canada H3G 1M8; Email address:  xiaowen.zhou@concordia.ca.}


\subjclass[2000]{Primary: 60G51; Secondary: 60E10, 60J35}
\keywords{Spectrally negative L\'{e}vy process, reflected process, draw-down time, potential measure, excursion theory, risk process, Parisian ruin, capital injection.}

\begin{abstract}
This paper discusses Parisian ruin problem with capital injection for  L\'evy insurance risk process. Capital injection takes place at the draw-down time of the surplus process when it drops below a pre-specified function of its last record maximum. The capital is continuously paid to keep the surplus above the draw-down level until either the surplus process goes above the  record high or a Parisian type  ruin occurs, which is announced at the first instance the surplus process has undergone an excursion below the record for an independent exponential period of time consecutively since the time the capital was first injected. Some distributional identities concerning the excursion are presented. Firstly, we give the Parisian ruin probability and the joint Laplace transform (possibly killed at the first passage time above a fixed level of the surplus process) of the ruin time, surplus position at ruin, and the total capital injection at ruin. Secondly, we obtain the $q$-potential measure of the surplus process killed at Parisian ruin. Finally, we give expected present value of the total discounted capital payments up to the Parisian ruin time. The results are derived using recent developments in fluctuation and excursion theory of spectrally negative L\'evy process and are presented  semi explicitly in terms of the scale function of the L\'evy process. Some numerical examples are given to facilitate the analysis of the impact of initial surplus and frequency of observation period to the ruin probability and to the expected total capital injection.
\end{abstract}

\maketitle

\section{Introduction }
Parisian ruin has been actively studied since its introduction by Chesney et al. (1997). In contrary to the classical ruin model of Cram\'er-Lundberg in which case ruin/default occurs at the first instance the underlying (surplus) process crossing below a threshold, Parisian ruin is announced at the first time the process has undertaken an excursion below the default level for a fixed consecutive period of time. It has been applied in finance, among others, by Francois and Morellec (2004), Broadie et al. (2007), and recently by Antill and Grenadier (2019), in particular for studying firm's optimal capital structure in the presence of Chapters 7 and 11 (default and reorganization proceeding). The default level may be determined endogenously in the sense of Leland and Toft (1996) by maximizing the firm's equity value. Under Chapter 11 the firm is granted a dilution period during which the firm is given the opportunity to operate and reorganize itself until its asset value goes above the default level, or otherwise liquidated. In their recent work, Palmowski et al. (2020) revisit the Leland-Toft model under spectrally negative L\'evy process, discussed earlier in Hilberink and Rogers (2002) and Kyprianou and Surya (2007), by considering information of the firm's asset only available periodically at Poisson time. They showed using the results of Albrecher et al. (2016) that under Poisson observation of the firm's assets, nonzero credit spreads holds for firm with lower initial endowment than default level, and in particular, the default time corresponds to the Parisian ruin time with an independent exponential excursion period.

It was introduced to insurance/actuarial science literature started by the work of Dassios  and Wu (2008) for compound Poisson process, then extended to spectrally negative L\'evy process subsequently by Czarna and Palmowski (2011), Loeffen et al. (2013), among others. Further distributional identities concerning Parisian ruin with independent exponential excursion period were discussed in Baurdoux et al. (2016). The results are generalized to fixed excursion period for a class of penalty functions by Loeffen et al. (2018) which identify known results on Parisian ruin.

It is worth mentioning that the Parisian excursion discussed in the above literature gets started at the first passage below a threshold (default level) of the underlying process. In the past decades some discussions have been developed towards risk protection mechanism against certain financial assets' outperformance over their last record maximum, also referred to as high-water mark or draw-down. We refer interested readers to the works by Goetzmann et al. (2003) and Agarwal et al. (2009). Default is triggered when the underlying process has gone below a specified level from its last record maximum. Distributional identities regarding first-passage above a threshold for draw-down process were presented in Avram et al. (2004, 2007) and used for pricing Russian options under randomized maturity, and solving optimal dividend problem where the cumulative paid dividend is given by the running supremum of the surplus process. First-passage identities for draw-down process were later extended to a more general form of threshold boundary for e.g. by Zhou (2007), Wang and Zhou (2018), and Li et al. (2019). Parisian excursion below a fixed level from the last record maximum of the surplus process was considered in Surya (2019).

The introduction of capital injection to the firm may prevent the firm from going default at the time its asset value decreases below a certain threshold and the firm needs to meet its commitment to pay dividends to the shareholders. We refer among others to Kulenko and Schmidli (2008), Tao et al. (2011), Avanzi et al. (2011), Bayraktar et al. (2013), and Wang et al. (2019).

In this paper we study Parisian ruin from the last record maximum of surplus process with capital injection. We assume there is no dividend payment in the model. As the source of uncertainty in the surplus process is the downward jumps of the L\'evy insurance risk process. Capital is provided to the firm by the stakeholder as soon as the surplus process goes below a given  function of the last record maximum of the surplus process. It is continuously provided to the firm to keep the surplus above the draw-down level  until the surplus goes back to above the last record or the ruin occurs. If the surplus process has stayed below the record since the first capital injection longer than an independent exponential period of time, the firm faces the credit event and is liquidated. Distributional identities concerning the Parisian ruin are presented.

\setcounter{section}{1}

The rest of the paper is arranged as follows. In Section \ref{2} we introduce the model and formulate the problems we are interested. The main results and proofs are presented in Section \ref{3}. Section \ref{4} presents some numerical examples of the main results. In particular, they are presented to analyze the impact of observation frequency and initial value of surplus to various shapes of ruin probability and expected nett present value of the total capital injection. Section \ref{5} concludes this paper.

\section{Spectrally negative L\'{e}vy process and its reflected processes}
\setcounter{section}{2}\label{2}

Write $X\equiv\{X(t);t\geq0\}$, defined on a probability space $(\Omega, \mathcal{F},(\mathcal{F}_t)_{t\geq 0},\mathbb{P})$ with probability laws $\{\mathbb{P}_{x};x\in\mathbb{R}\}$ denoting the family of probability laws of $X$ such that $X_0=x$, with $\mathbb{P}=\mathbb{P}_0$, and natural filtration $\{\mathcal{F}_{t};t\geq0\}$  of $X$ satisfying the usual conditions of right-continuity and completeness. In particular, we exclude a spectrally negative L\'{e}vy process that is not a purely increasing linear drift or the negative of a subordinator. The L\'evy-It\^o sample paths decomposition of the L\'evy process is given by
\begin{equation}\label{eq:LevyIto}
\begin{split}
X_t=\mu t + \sigma B_t &+\int_0^t\int_{\{x<-1\}} x\mathcal{N}(\mathrm{d}x,\mathrm{d}s) + \int_0^t\int_{\{-1\leq x <0\}} x\big(\mathcal{N}(\mathrm{d}x,\mathrm{d}s)-\nu(\mathrm{d}x)\mathrm{d}s\big),
\end{split}
\end{equation}
where $\mu\in\mathbb{R}$, $\sigma\geq0$ and $(B_t)_{t\geq0}$ is standard Brownian motion, whilst $\mathcal{N}(\mathrm{d}x,\mathrm{d}t)$ denotes the Poisson random measure associated with the jumps process $\Delta X_t:=X_t-X_{t-}$ of $X$. This Poisson random measure has compensator given by $\nu(\mathrm{d}x)\mathrm{d}t$, where $\nu$ is the L\'evy measure satisfying the condition:
\begin{equation}\label{eq:intcond}
\int_{-\infty}^0 (1\wedge x^2)\nu(\mathrm{d}x)<\infty.
\end{equation}
In the expression (\ref{eq:LevyIto}), $X$ may define the surplus process of an insurance firm in which $\mu$ represents the premium rate charged on the insurance holder and $\sigma B_t$ is of volatile trading uncertainties which results from the firm investing in financial market. The other two jump components correspond to the compensated small claims and uncompensated big claims from the insurance holder whose distribution is given by the L\'evy measure $\nu$ satisfying the integral constraint (\ref{eq:intcond}). In the classical model of Cramer-Lundberg risk process with positive drift $c>0$, $c= \mu -\int_{-\infty}^0 x \mathbf{1}_{\{x>-1\}} \nu(\mathrm{d}x)$ with $\nu(\mathrm{d}x)=\beta F(\mathrm{d}x)$ for Poisson claim arrival intensity $\beta>0$ and distribution $F$ of the claim size.

Denote the running supremum process $\overline{X}\equiv\{\sup\limits_{0\leq s\leq t}X(s),\,t\geq0\}$ with $\overline{X}(
0)=x$ under $\mathbb{P}_{x}$.
Given a value  $a\in\mathbb{R}$, the process $X$ reflected from below at the level $a$ is defined as
\[X(t)-(\underline{X}(t)-a)\wedge 0, \,\, t\geq 0 \]
where $\underline{X}(t):=\inf_{0\leq s\leq t}X(s)$ with $\underline{X}(
0)=x$ under $\mathbb{P}_{x}$, denotes the running infimum process.
Let $\{Y(t), t\geq0\}$ be the process $X$ reflected from below at the level $0$ (cf., Pistorius (2004)).

The  draw-down time associated to a draw-down function $\xi$ on $(-\infty,\infty)$ satisfying $\xi(x)<x, \,\, x\in(-\infty,\infty)$,  the $\xi$-draw-down time in short, is defined as
$$\tau_{\xi}\equiv \tau_{\xi}(X):=\inf\{t\geq0: X(t)<\xi(\overline{X}(t))\}$$
with the convention $\inf\emptyset:=\infty$.
 We define the process $X$ reflected at the $\xi$-draw-down time $\tau_{\xi}$ as
  \[X(t)-\mathbf{1}_{[\tau_\xi,\infty)}(t)\left(\inf_{\tau_\xi\leq s\leq t}{X}(s)- \xi(\overline{X}(\tau_\xi))\right)\wedge 0, \,\, t\geq 0, \]
where we call $\xi(\overline{X}(\tau_\xi))$ the draw-down level at the draw-down time $\tau_\xi$.


We now define the draw-down reflected process $U$ for $X$.
Intuitively, the process $U$ initially agrees with $X$ until the first draw-down time of $U$. Then it starts to evolve according to $X$ reflected at the draw-down level until the next draw-down time of $U$ when it is reflected at the draw-down level again, and so on. Then given that $U(s)=\overline{U}(s):=\sup_{0\leq t\leq s}U(t)$, the process $\{U(t);t\geq s\}$ evolves without reflection until the next draw-down time $\tau_\xi$; and given that $\overline{U}(s)> U(s)$,
the process $\{U(t); t\geq s\}$ is reflected from below at the current draw-down level $\xi(\overline{U}(s))$ until it comes back to the level  $\overline{U}(s)$. Note that the process $U$ is not a Markov process in general, but the process $(U,\overline{U})$ is Markovian.
Write $\mathbb{P}_{x,y}$ and $\mathbb{E}_{x,y}$ for the law of $(U,\overline{U})$ such that $U(0)=x$ and $\overline{U}(0)=y$. For simplicity, denote $\mathbb{P}_{x}=\mathbb{P}_{x,x}$ and $\mathbb{E}_{x}=\mathbb{E}_{x,x}$.

To be more precise, define $T_0:=0$ and $U(T_0):=X(0) $.
 Suppose first that  for $n\geq 1$, $U(t)$ has been defined on $[0, T_n]$ for $T_n<\infty$, $n\geq 1$.
 Let $X_{n+1}$ be an independent copy of $X$ starting at $U(T_n)$ and $U_{n+1}$ be the process $X_{n+1}$ reflected at its $\xi$-draw-down time $\tau_{\xi}({X}_{n+1})$. If $\tau_{\xi}(X_{n+1})=\infty$, let $T_{n+1}:=\infty$, and if $\tau_{\xi}(X_{n+1})<\infty$, let
$$T_{n+1}:=T_n+\inf\{t\geq 0: U_{n+1}(t)>\overline{X}_{n+1}(\tau_{\xi}(X_{n+1})) \}, $$
 where $\overline{X}_{n+1}(t):=\sup\limits_{0\leq s\leq t}X_{n+1}(s)$. Observe that $T_{n+1}<\infty$ if $\tau_{\xi}(X_{n+1})<\infty $.
Then define
\[U(T_n+t):=U_{n+1}(t) \quad\text{for}\quad t\in [0, T_{n+1}-T_n) \quad\text{and}\quad U(T_{n+1}):=U_{n+1}(T_{n+1}-T_n)\quad\text{if}\quad T_{n+1}<\infty.\]
Suppose now that $U(t)$ has been defined on $[0, T_n=\infty)$ for $ n\geq 0$.
For convenience, let $T_{n+1}:=\infty$.
For the well-definedness  of the process $U$, we are referred to Wang and Zhou (2019).

For the process $X$, define its first up-crossing time of level $b\in(-\infty,\infty)$ and first down-crossing time of level $c\in(-\infty,\infty)$, respectively, by
\begin{eqnarray}
\tau^{+}_{b}:=\inf\{t\geq0: X(t)>b\}\,\,\, \text{and}\,\,\, \tau_{c}^{-}:=\inf\{t\geq0: X(t)<c\}.\nonumber
\end{eqnarray}

For the processes $Y$ and $U$, their first up-crossing times of $b\in(-\infty,\infty)$ are defined respectively by
\begin{eqnarray}
\sigma^{+}_{b}:=\inf\{t\geq0: Y(t)>b\} \,\,\, \text{and} \,\,\,\kappa_{b}^{+}:=\inf\{t\geq0: U(t)>b\}.\nonumber
\end{eqnarray}
The Parisian ruin time $\theta_{\lambda}$ of the process $U$ is defined as
\begin{eqnarray}
N_{\lambda}:=\inf\{n\geq 1: \,T_{n}-T_{n-1}-\tau_{\xi}(X_{n})>e_{\lambda}^{(n)}\},\quad \theta_{\lambda}:=T_{N_{\lambda}-1}+\tau_{\xi}(X_{N_{\lambda}})+e_{\lambda}^{(N_{\lambda})},\nonumber
\end{eqnarray}
where $\{e_{\lambda}^{(n)};n\geq 1\}$ is a sequence of i.i.d. exponential randoms with parameter $\lambda>0$, and is independent of $X$.

Due to the absence of positive jumps, it is therefore sensible to define
\begin{eqnarray}
\psi(\theta):=\ln \mathbb{E}_{x}\left(\mathrm{e}^{\theta (X_{1}-x)}\right)=\mu\theta+\frac{1}{2}\sigma^{2}\theta^{2}+\int_{(-\infty,0)}\left(\mathrm{e}^{\theta x}-1-\theta x\mathbf{1}_{(-1,0)}(x)\right)\nu(\mathrm{d}x),\nonumber
\end{eqnarray}
It is known that $\psi(\theta)$ is finite for  $\theta\in[0,\infty)$ in which case it is strictly convex and infinitely differentiable.
As in Bertoin (1996), the $q$-scale functions $\{W_{q};q\geq0\}$ of $X$ are defined as follows. For each $q\geq0$, $W_{q}:\,[0,\infty)\rightarrow[0,\infty)$ is the unique strictly increasing and continuous function with Laplace transform
\begin{eqnarray}
\int_{0}^{\infty}\mathrm{e}^{-\theta x}W_{q}(x)\mathrm{d}x=\frac{1}{\psi(\theta)-q},\quad \mbox{for }\theta>\Phi(q),\label{eq:scale}
\end{eqnarray}
where $\Phi(q)$ is the largest solution of the equation $\psi(\theta)=q$. Further define $W_{q}(x)=0 $ for $x<0$, and write $W$ for the $0$-scale function $W_{0}$.

It is known that $W_{q}(0)=0 $ if and only if process $X$ has sample paths of unbounded variation.
If $X$ has sample paths of unbounded variation, or if $X$ has sample paths of bounded variation and the L\'{e}vy measure has no atoms, then the scale function $W_{q}$ is continuously differentiable over $(0, \infty)$.
By Loeffen (2008), if $X$ has a L\'{e}vy measure which has a completely monotone density, then $W_{q}$ is twice continuously differentiable over $(0, \infty)$ when $X$ is of unbounded variation.
Moreover, if process $X$ has a nontrivial Gaussian component, then $W_{q}$ is twice continuously differentiable over $(0, \infty)$. Below is an example of the scale function $W_q$ for downward jump-diffusion process with exponentially distributed jumps, which will be used for numerical examples discussed in Section 4.

\begin{figure}[t!]
\centering
\begin{tabular}{cc}
\subf{\includegraphics[width=0.4\textwidth]{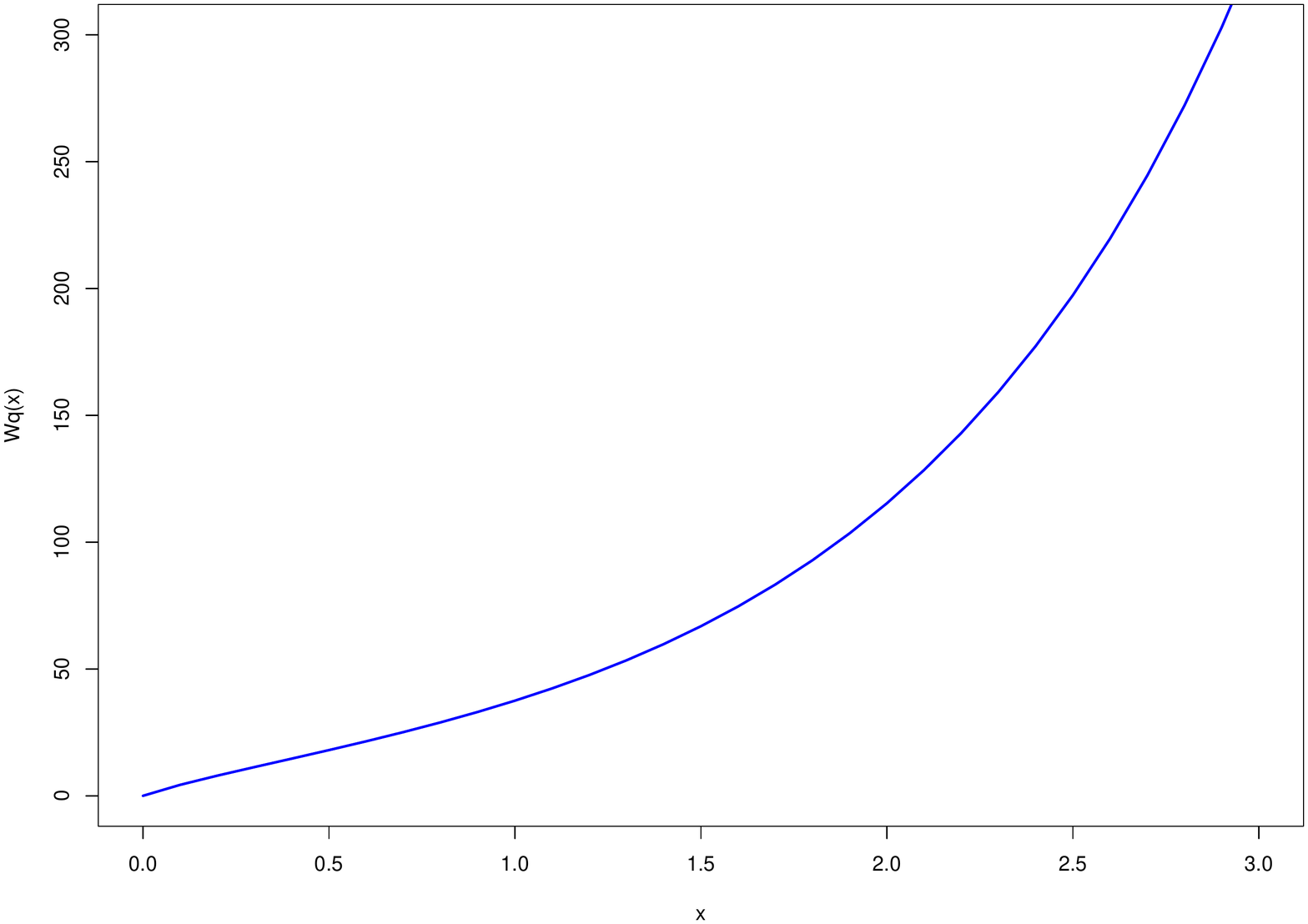}}
       {(a)  $W_q(x)$ for $\sigma=0.2$.}

&
\subf{\includegraphics[width=0.4\textwidth]{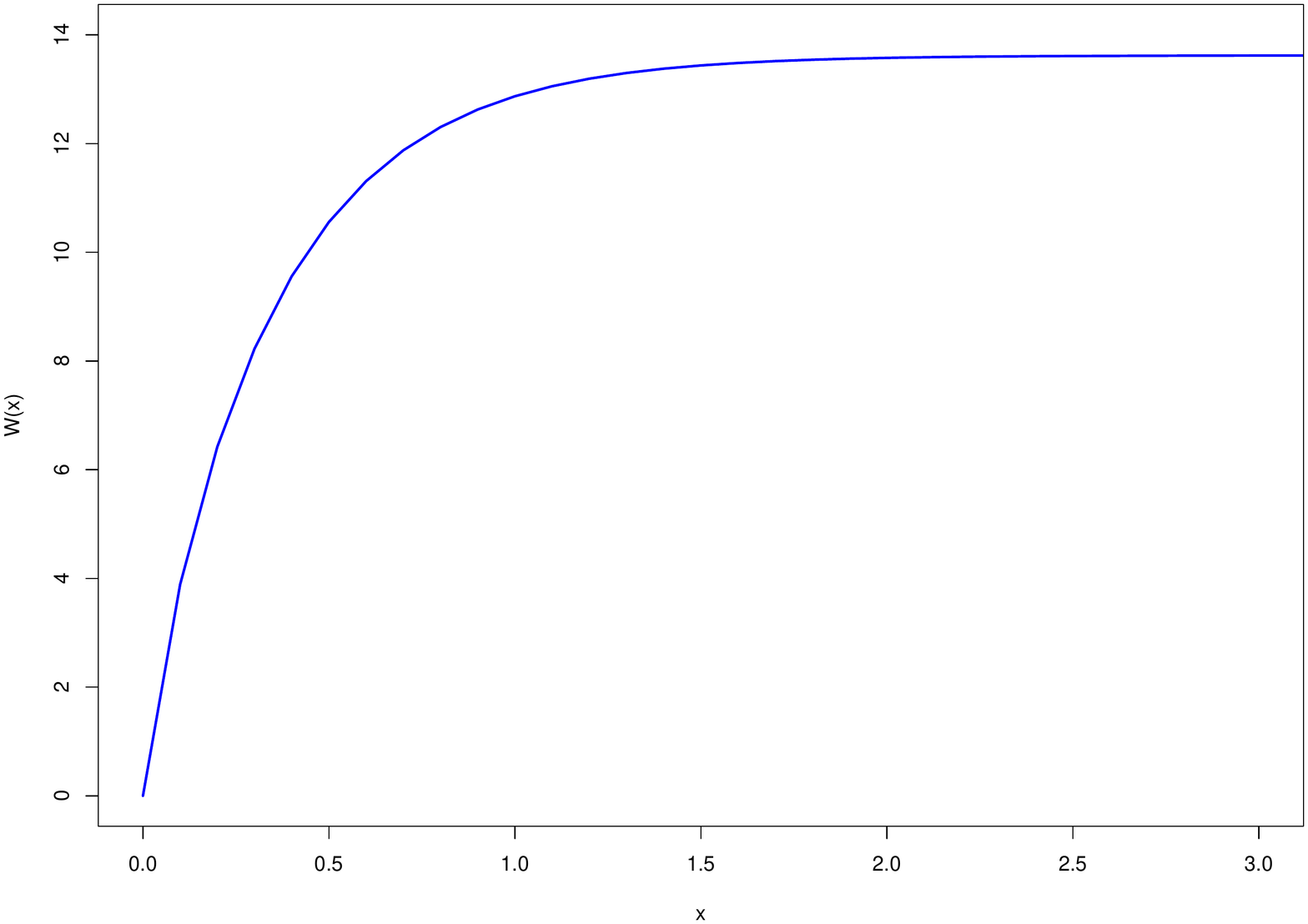}}
       {(b)  $W^{\Phi(q)}(x)$ for $\sigma=0.2$.}
\\
\subf{\includegraphics[width=0.4\textwidth]{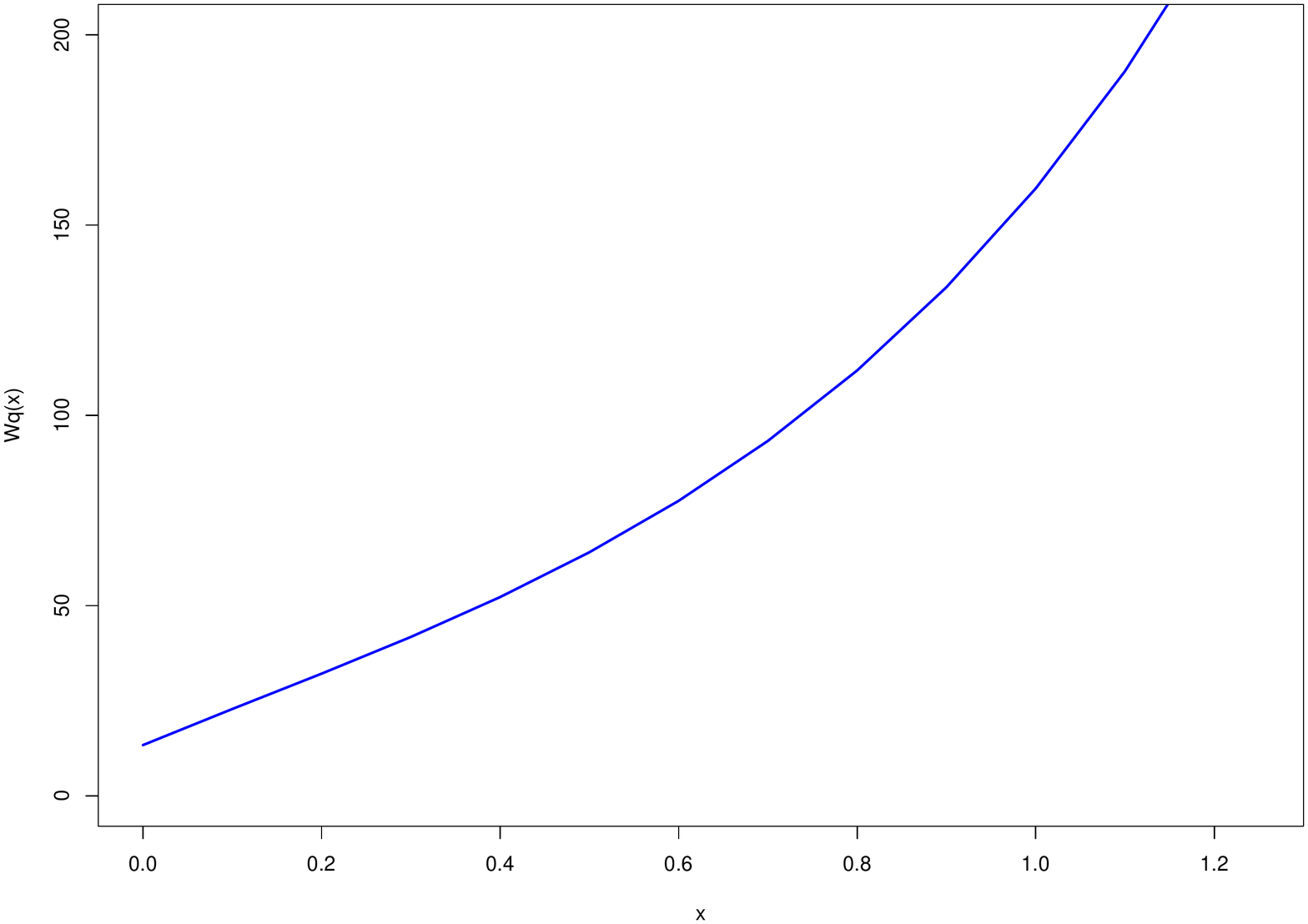}}
       {(c)  $W_q(x)$ for $\sigma=0$.}
&
\subf{\includegraphics[width=0.4\textwidth]{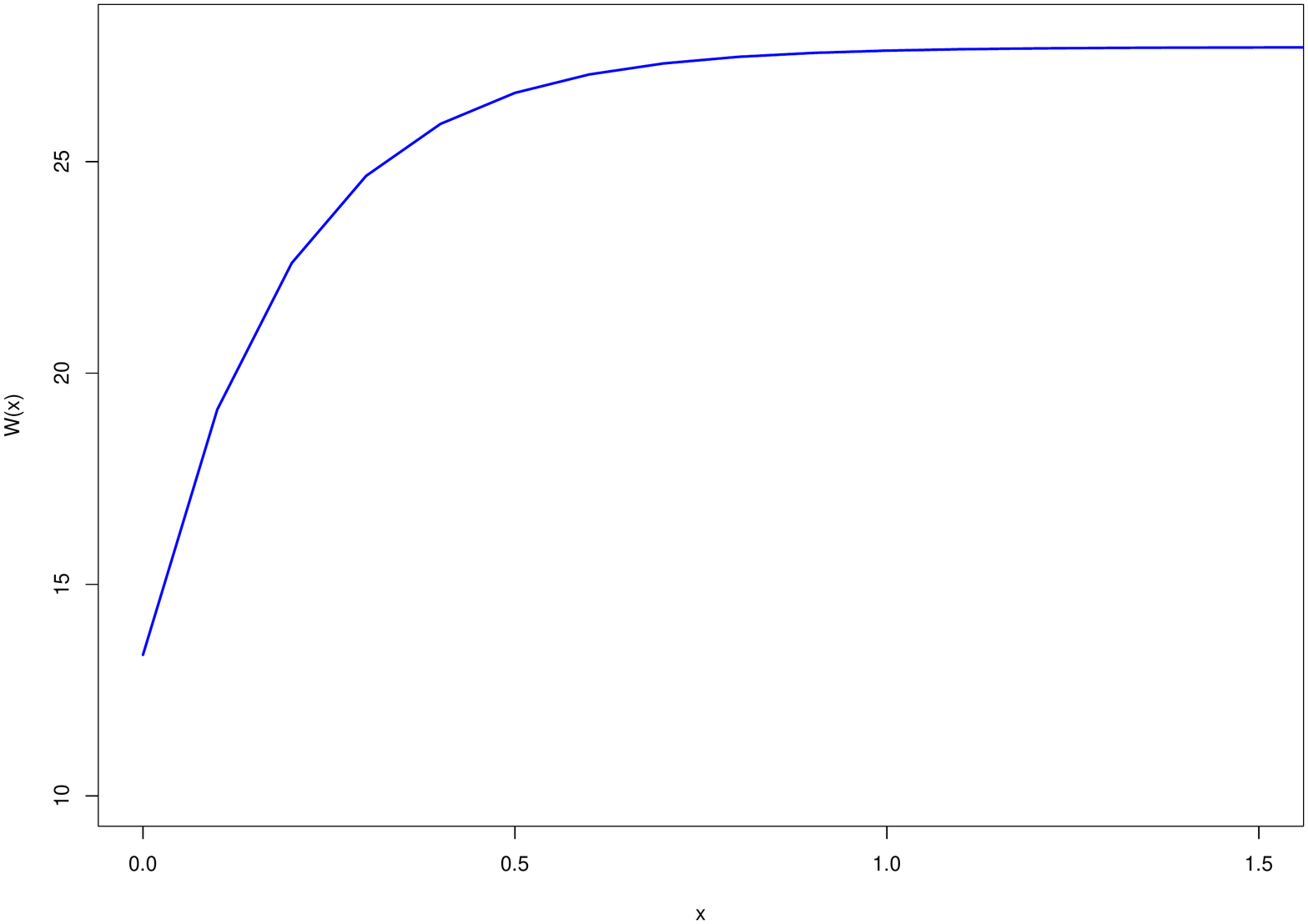}}
       {(d) $W^{\Phi(q)}(x)$ for $\sigma=0$.}
\\
\end{tabular}
\caption{Scale function $W_q(x)$ and $W^{\Phi(q)}(x)$ for downward jump-diffusion process with Laplace exponent $\psi(\theta)=\mu\theta+\frac{\sigma^2}{2}\theta^2-\frac{a\theta}{\theta+c}$ for $\mu=0.075, a=0.5, c=9, q=0.05$. }\label{fig:SF}
\end{figure}

\begin{exa}\label{ex:Ex1}\rm
Consider one-sided jump-diffusion process $X$ with $\psi(\theta)=\mu\theta+\frac{\sigma^2}{2}\theta^2-\frac{a\theta}{\theta+c}$ for all $\theta\in\mathbb{R}$ s.t. $\theta\neq -c$. It is known that the inverse of the Laplace transform (\ref{eq:scale}) for $q>0$ is
\begin{align}\label{eq:ScaleF}
W_q(x)=\frac{\mathrm{e}^{-\beta_2 x}}{\psi^{\prime}(-\beta_2)} + \frac{\mathrm{e}^{-\beta_1 x}}{\psi^{\prime}(-\beta_1)} + \frac{\mathrm{e}^{\Phi(q)x}}{\psi^{\prime}(\Phi(q))}, \quad \forall x\geq 0,
\end{align}
where $-\beta_1$, $-\beta_2$, and $\Phi(q)$ denotes three roots of $\psi(\theta)=q$ s.t. $-\beta_2<-c<-\beta_1<0<\Phi(q)$. It is straightforward to check by taking Laplace transform that $W_q(x)=\mathrm{e}^{\Phi(q)x}W^{\Phi(q)}(x)$ where
\begin{align}\label{eq:ScaleF2}
W^{\Phi(q)}(x)=\frac{\mathrm{e}^{-(\beta_2+\Phi(q)) x}}{\psi^{\prime}(-\beta_2)} + \frac{\mathrm{e}^{-(\beta_1+\Phi(q)) x}}{\psi^{\prime}(-\beta_1)} + \frac{1}{\psi^{\prime}(\Phi(q))}, \quad \forall x\geq 0,
\end{align}
with $W^{\Phi(q)}(x)=0$ for $x<0$. It is known that $\psi^{\prime}(-\beta_2)<0$, $\psi^{\prime}(-\beta_1)<0$ and $\psi^{\prime}(\Phi(u))>0$. In fact $W^{\Phi(q)}(x)$ plays the role of $W(x)$ under the Esscher transform of measure $\mathbb{P}^{\Phi(q)}$ defined by $\frac{\mathrm{d}\mathbb{P}^{\Phi(q)}}{\mathrm{d}\mathbb{P}}\Big\vert_{\mathcal{F}_t}=\mathrm{e}^{\Phi(q) X_t -qt}$. It is straight forward to check that $W^{\Phi(q)}(x)$ is increasing for $x\geq 0$, and so is $W_q(x)$, concave and is bounded from above by $1/\psi^{\prime}(\Phi(q))$.
\end{exa}

The interested readers are referred to Chan et al. (2011) and Kuznetsov et al. (2012) for more detailed discussions on the smoothness of scale functions.
For results on numerical computation of the scale function, the readers are referred to Surya (2008), Hubalek and Kyprianou (2011) and the references therein.

 Further define
\begin{eqnarray}
Z_{q}(x):=1+q\int_{0}^{x}W_{q}(z)\mathrm{d}z,\quad x\geq0,\nonumber
\end{eqnarray}
and
\begin{eqnarray}
Z_{q}(x,\theta):=\mathrm{e}^{\theta x}\left(1-\left(\psi(\theta)-q\right)\int_{0}^{x}\mathrm{e}^{-\theta z}W_{q}(z)\mathrm{d}z\right),\quad \theta\geq0, \,x\geq0,\nonumber
\end{eqnarray}
with $Z(x,\theta):=Z_{0}(x,\theta)$,
and
$$\overline{W}_{q}(x):=\int_{0}^{x}W_{q}(z)\mathrm{d}z,\quad q\geq0, x\geq0,$$
and
$$\overline{Z}_{q}(x):=\int_{0}^{x}Z_{q}(z)\mathrm{d}z=x+q\int_{0}^{x}\int_{0}^{z}W_{q}(w)\mathrm{d}w\mathrm{d}z,\quad q\geq0, x\geq0.$$

In the sequel, without loss of generality we assume $X_{1}\equiv X$. By Li et al. (2017), we have
\begin{eqnarray}\label{part1}
\mathbb{E}_{x}(\mathrm{e}^{-q\kappa_{b}^{+}}\mathbf{1}_{\{\kappa_{b}^{+}<\tau_{\xi}\}})=
\mathbb{E}_{x}\left(\mathrm{e}^{-q\tau_{b}^{+}}\mathbf{1}_{\{\tau_{b}^{+}<\tau_{\xi}\}}\right)
=\exp\left(-\int_{x}^{b}\frac{W_{q}^{\prime}(\overline{\xi}\left(z\right))}
{W_{q}(\overline{\xi}\left(z\right))}\mathrm{d}z\right),
\end{eqnarray}
where $\overline{\xi}(z)=z-\xi(z)$.
For $x\in[0,b]$ and $q\geq0$, from Proposition 2 in Pistorius (2004) we have
\begin{eqnarray}\label{two.sid.exit.Y}
\mathbb{E}_{x}(\mathrm{e}^{-q\sigma^{+}_{b}})=\frac{Z_{q}(x)}{Z_{q}(b)}.
\end{eqnarray}

By Kyprianou (2006), the resolvent measure corresponding to $X$ is absolutely continuous with respect to the Lebesgue measure with  density given by
\begin{eqnarray}\label{h2}
\hspace{-0.3cm}&&\hspace{-0.3cm}\int_{0}^{\infty}\mathrm{e}^{-qt}\mathbb{P}_{x}(X(t)\in \mathrm{d}y;t<\tau_{c}^{-}\wedge \tau_{b}^{+})\mathrm{d}t
\nonumber\\
\hspace{-0.3cm}&=&\hspace{-0.3cm}
\left(\frac{W_{q}(x-c)}{W_{q}(b-c)}W_{q}(b-y)-W_{q}(x-y)\right)\mathbf{1}_{(c,b)}(y)\mathrm{d}y,
\end{eqnarray}
for $x\in(c, b)$.
By Pistorius (2004), the resolvent measure corresponding to $Y$ is also absolutely continuous with respect to the Lebesgue measure and has a version of density given by
\begin{eqnarray}\label{reso.meas.Y}
\hspace{-0.3cm}&&\hspace{-0.3cm}\int_{0}^{\infty}\mathrm{e}^{-qt}\mathbb{P}_{x}(Y(t)\in \mathrm{d}y,t<\sigma^{+}_{b})\mathrm{d}t
\nonumber\\
\hspace{-0.3cm}&=&\hspace{-0.3cm}\left(\frac{Z_{q}(x)}{Z_{q}(b)}W_{q}(b-y)-W_{q}(x-y)\right)
\mathbf{1}_{[0,b)}(y)\mathrm{d}y,
\end{eqnarray}
where $x\in[0, b)$.

Define the total amount of capital injections made until time $t$ for the draw-down reflected process as
\begin{eqnarray}
R(t)\hspace{-0.3cm}&:=&\hspace{-0.3cm}
-\sum_{k=1}^{N-1}\mathbf{1}_{[T_{k-1}+\tau_{\xi}(X_k),\infty)}(t)\left(   \inf_{\tau_\xi(X_k)\leq s\leq T_k\wedge t-T_{k-1}}{X}_{k}(s)-\xi(\overline{X}_{k}(\tau_{\xi}(X_k)))\right)\wedge 0.
\nonumber
\end{eqnarray}
where $N:=\inf\{n: T_n=\infty\}=\inf\{n: \tau_\xi(X_{n})=\infty\}$.

In this paper, we are interested in evaluating:
\begin{itemize}
\item[(a)]
Expectation of the net present value of principal payment of one unit at time $\kappa_b^+\wedge\theta_{\lambda}$:
\begin{eqnarray}
U_{\xi}(x;b)=\mathbb{E}_x\Big(\mathrm{e}^{-q(\kappa_b^+\wedge\theta_{\lambda})}\Big). \nonumber
\end{eqnarray}
%
\item[(b)] The joint Laplace transform of $\kappa_{b}^{+}\wedge \theta_{\lambda}$, $U(\kappa_{b}^{+}\wedge \theta_{\lambda})$ and $R(\kappa_{b}^{+}\wedge \theta_{\lambda})$, i.e.
\begin{eqnarray}
\quad \quad G_\xi(x;b)
\hspace{-0.3cm}&
=&\hspace{-0.3cm}
\mathbb{E}_{x}\left(\mathrm{e}^{-q \left(\kappa_{b}^{+}\wedge \theta_{\lambda}\right)+u U\left(\kappa_{b}^{+}\wedge \theta_{\lambda}\right)-v R\left(\kappa_{b}^{+}\wedge \theta_{\lambda}\right)}\right),\quad q,\lambda, u,v\in \mathbb{R}_{+},\,  b\in\mathbb{R},\, x\in(-\infty,b]
.\nonumber
\end{eqnarray}
\item[(c)]
The potential measure of $U$ involving the Parisian ruin time, i.e.
\begin{eqnarray}
\label{}
\int_{0}^{\infty}\mathrm{e}^{-qt}\mathbb{P}_{x}\left(U(t)\in \mathrm{d}u,t<\kappa_{b}^{+}\wedge \theta_{\lambda}\right)\mathrm{d}t,\quad q,\lambda\in \mathbb{R}_{+},\,  b\in\mathbb{R},\,x,u\in(-\infty,b].
\nonumber
\end{eqnarray}
\item[(d)]
The expectation of the total discounted capital injections until $\kappa_{b}^{+}\wedge \theta_{\lambda}$, i.e.
\begin{eqnarray}
V_{\xi}(x;b)\hspace{-0.3cm}&=&\hspace{-0.3cm}\mathbb{E}_{x}\left(\int_{0}^{\kappa_{b}^{+}\wedge \theta_{\lambda}}\mathrm{e}^{-q t}\mathrm{d}R(t)\right),\quad q,\lambda\in \mathbb{R}_{+},\,  b\in\mathbb{R},\,x\in(-\infty,b].
\nonumber
\end{eqnarray}

\end{itemize}

We also briefly recall concepts in excursion theory for the reflected process $\{\overline{X}(t)-X(t);t\geq0\}$, and we refer to Bertoin (1996) for more details.
For $x\in(-\infty,\infty)$, the process $\{L(t):= \overline{X}(t)-x, t\geq0\}$ serves as a local time at $0$ for
the Markov process $\{\overline{X}(t)-X(t);t\geq0\}$ under $\mathbb{P}_{x}$.
Let the corresponding inverse local time be defined as
$$L^{-1}(t):=\inf\{s\geq0: L(s)>t\}=\sup\{s\geq0: L(s)\leq t\}.$$
Further let $L^{-1}(t-):=\lim\limits_{s\uparrow t}L^{-1}(s)$.
Define a Poisson point process $\{(t, e_{t}); t\geq0\}$ as
$$e_{t}(s):=X(L^{-1}(t))-X(L^{-1}(t-)+s), \,\,s\in(0,L^{-1}(t)-L^{-1}(t-)],$$
whenever the lifetime of $e_{t}$ is positive, i.e. $L^{-1}(t)-L^{-1}(t-)>0$.
Whenever $L^{-1}(t)-L^{-1}(t-)=0 $, define $e_{t}:=\Upsilon$ with $\Upsilon$ being an additional isolated point.
A result of It\^{o} states that $e$ is a Poisson point process with
characteristic measure $n$
if $\{\overline{X}(t)-X(t);t\geq0\}$ is recurrent; otherwise $\{e_{t}; t\leq L(\infty)\}$ is a Poisson point process stopped at the first excursion of infinite lifetime. Here, $n$ is a measure on the space  $\mathcal{E}$ of excursions,
i.e. the space $\mathcal{E}$  of c\`{a}dl\`{a}g functions $f$ satisfying
\begin{eqnarray}
&&f:\,(0,\zeta)\rightarrow (0,\infty)\,\quad \mbox{for some } \zeta=\zeta(f)\in(0,\infty],
\nonumber\\
&&f:\,\{\zeta\}\rightarrow (0,\infty)\,\,\,\,\,\,\quad \mbox{if } \zeta<\infty,
\nonumber
\end{eqnarray}
where $\zeta=\zeta(f)$ is the excursion length or lifetime; see Definition 6.13 of Kyprianou (2006) for the definition of $\mathcal{E}$.
Denote by $\varepsilon(\cdot)$, or $\varepsilon$ for short, a generic excursion
belonging to the space $\mathcal{E}$ of canonical excursions.
The excursion height of a canonical excursion $\varepsilon$ is
denoted by $\overline{\varepsilon}=\sup\limits_{t\in[0,\zeta]}\varepsilon(t)$. The first passage time of a canonical excursion $\varepsilon$ is defined
by
$$
\rho_{b}^{+}\equiv\rho_{b}^{+}(\varepsilon) :=\inf\{t\in[0,\zeta]: \varepsilon(t)>b\},
$$
with the convention $\inf\emptyset:=\zeta$.

Denote by $\varepsilon_{g}$
the excursion (away from $0$)  with left-end point $g$ for the reflected process $\{\overline{X}(t)-X(t);t\geq0\}$, and by $\zeta_{g}$ and $\overline{\varepsilon}_{g}$  the excursion's lifetime and the excursion's height, respectively; see Section IV.4 of Bertoin (1996).

\section{Main results}
\setcounter{section}{3}\label{3}

In this section we present the main results concerning the general draw-down reflected process $U$ with Parisian stopping.
For preparation, we recall the following result of Wang and Zhou (2019).

\begin{lem}\label{lemma2}
Given $\theta,\,q\in(0,\infty)$ and measurable function $\phi:\,(-\infty,\infty)\rightarrow(-\infty,\infty)$, we have
\begin{eqnarray}\label{10}
\hspace{-0.3cm}&&\hspace{-0.3cm}
\mathbb{E}_{x}\left(\mathrm{e}^{-q \tau_{\xi}}\,\mathrm{e}^{\theta X(\tau_{\xi})}\,\phi\left(\overline{X}(\tau_{\xi})\right); \tau_{\xi}<\tau_{b}^{+}\right)
\nonumber\\
\hspace{-0.3cm}&=&\hspace{-0.3cm}
\int_{x}^{b}\phi\left(s\right)\mathrm{e}^{\theta \xi(s)}
\exp\left(-\int_{x}^{s}\frac{W_{q}^{\prime}(\overline{\xi}\left(z\right))}
{W_{q}(\overline{\xi}\left(z\right))}\mathrm{d}z\right)
\nonumber\\
\hspace{-0.3cm}&&\hspace{-0.3cm}
\times\left(\frac{W_{q}^{\prime}(\overline{\xi}(s))}{W_{q}(\overline{\xi}(s))}Z_{q}(\overline{\xi}(s),\theta )-\theta  Z_{q}(\overline{\xi}(s),\theta )-(q-\psi(\theta ))W_{q}(\overline{\xi}(s))\right)
\mathrm{d}s,\quad x\in(-\infty, b].
\end{eqnarray}
In particular, we have
\begin{eqnarray}\label{12}
\hspace{-0.3cm}&&\hspace{-0.3cm}
\mathbb{E}_{x}\left(\mathrm{e}^{-q \tau_{\xi}}\,\phi\left(\overline{X}(\tau_{\xi})\right); \tau_{\xi}<\tau_{b}^{+}\right)
=
\int_{x}^{b}\phi\left(s\right)
\exp\left(-\int_{x}^{s}\frac{W_{q}^{\prime}(\overline{\xi}\left(z\right))}
{W_{q}(\overline{\xi}\left(z\right))}\mathrm{d}z\right)
\nonumber\\
\hspace{-0.3cm}&&\hspace{3.8cm}
\times\left(\frac{W_{q}^{\prime}(\overline{\xi}(s))}{W_{q}(\overline{\xi}(s))}Z_{q}(\overline{\xi}(s))-qW_{q}(\overline{\xi}(s))\right)
\mathrm{d}s,\quad x\in(-\infty, b]
,
\end{eqnarray}
and
\begin{eqnarray}\label{13}
\hspace{-0.3cm}&&\hspace{-0.3cm}
\mathbb{E}_{x}\left(\mathrm{e}^{\theta X(\tau_{\xi})}\,\phi\left(\overline{X}(\tau_{\xi})\right); \tau_{\xi}<\tau_{b}^{+}\right)
=
\int_{x}^{b}\phi\left(s\right)\mathrm{e}^{\theta \xi(s)}
\exp\left(-\int_{x}^{s}\frac{W^{\prime}(\overline{\xi}\left(z\right))}
{W^{}(\overline{\xi}\left(z\right))}\mathrm{d}z\right)
\nonumber\\
\hspace{-0.3cm}&&\hspace{3cm}
\times\left(\frac{W^{\prime}(\overline{\xi}(s))}{W^{}(\overline{\xi}(s))}Z^{}(\overline{\xi}(s),\theta )-\theta  Z^{}(\overline{\xi}(s),\theta )+\psi(\theta )W^{}(\overline{\xi}(s))\right)
\mathrm{d}s,\quad x\in(-\infty, b]
,
\end{eqnarray}
and
\begin{eqnarray}\label{14}
\hspace{-0.3cm}&&\hspace{-0.3cm}
\mathbb{E}_{x}\left(\mathrm{e}^{-q \tau_{\xi}}\left(\xi\left(\overline{X}(\tau_{\xi})\right)-X(\tau_{\xi})\right); \tau_{\xi}<\tau_{b}^{+}\right)
=\int_{x}^{b}
\exp\left(-\int_{x}^{s}\frac{W_{q}^{\prime}(\overline{\xi}\left(z\right))}
{W_{q}(\overline{\xi}\left(z\right))}\mathrm{d}z\right)
\nonumber\\
\hspace{-0.3cm}&&\hspace{0.3cm}
\times\left(Z_{q}(\overline{\xi}(s))-\psi^{\prime}(0+)W_{q}(\overline{\xi}(s))-\frac{\overline{Z}_{q}(\overline{\xi}(s))-\psi^{\prime}(0+)
\overline{W}_{q}(\overline{\xi}(s))}{W_{q}(\overline{\xi}(s))}W_{q}^{\prime}(\overline{\xi}(s))\right)\mathrm{d}s,\quad x\in(-\infty, b]
.
\end{eqnarray}
\end{lem}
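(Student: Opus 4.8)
The plan is to prove the master identity \eqref{10} by excursion theory and then obtain \eqref{12}, \eqref{13} and \eqref{14} as specializations. Since $\tau_\xi$ is determined entirely by the excursions of the reflected process $\{\overline{X}(t)-X(t);t\geq0\}$ away from $0$, I would run the argument through the excursion framework recalled at the end of Section \ref{2}. Recall that $L(t)=\overline{X}(t)-x$ serves as local time, so excursions are indexed by the level of the running supremum: an excursion $\varepsilon_g$ with left-end point $g$ occurs while $\overline{X}$ sits at height $x+g$. On $\{\tau_\xi<\tau_b^+\}$ the draw-down is triggered by the first excursion, occurring at some supremum level $s\in[x,b]$, whose height exceeds the running depth budget $\overline{\xi}(s)=s-\xi(s)$; at that instant $\overline{X}(\tau_\xi)=s$ and $X(\tau_\xi)=\xi(s)-\bigl(\varepsilon(\rho_{\overline{\xi}(s)}^{+})-\overline{\xi}(s)\bigr)$, the bracket being the overshoot of the excursion below the draw-down level, so that $\mathrm{e}^{\theta X(\tau_\xi)}=\mathrm{e}^{\theta\xi(s)}\mathrm{e}^{-\theta(\varepsilon(\rho^+_{\overline{\xi}(s)})-\overline{\xi}(s))}$.

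Next I would decompose the expectation according to the supremum level $s=\overline{X}(\tau_\xi)$ and apply the exponential (compensation) formula for the Poisson point process of excursions, keeping the discount factor along the inverse local time. Two ingredients feed the integrand. The first is the survival factor: the $q$-discounted ``probability'' that $\overline{X}$ climbs from $x$ to $s$ without any excursion deep enough to cross its running draw-down level, which is exactly the quantity in \eqref{part1}, namely $\exp\bigl(-\int_x^s\tfrac{W_q'(\overline{\xi}(z))}{W_q(\overline{\xi}(z))}\,\mathrm{d}z\bigr)$; here $\tfrac{W_q'(u)}{W_q(u)}$ plays the role of the $q$-discounted rate, per unit increase of the supremum, at which a draw-down-triggering excursion appears when the current depth budget is $u$. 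The second is the single-excursion contribution at level $s$: the excursion measure $n$ of the weight $\mathrm{e}^{-q\rho^+_{u}}\mathrm{e}^{-\theta(\varepsilon(\rho^+_{u})-u)}$ on $\{\rho^+_{u}<\zeta\}$ with $u=\overline{\xi}(s)$, encoding the discount accrued during the final excursion and the exponential weighting of the overshoot. Pulling out the deterministic factors $\phi(s)\,\mathrm{e}^{\theta\xi(s)}$ and integrating in $s$ over $[x,b]$ should reproduce \eqref{10}, provided the excursion-measure weight collapses to $\tfrac{W_q'(\overline{\xi}(s))}{W_q(\overline{\xi}(s))}Z_q(\overline{\xi}(s),\theta)-\theta Z_q(\overline{\xi}(s),\theta)-(q-\psi(\theta))W_q(\overline{\xi}(s))$.

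The delicate step, and the one I expect to be the main obstacle, is precisely this evaluation of $n\bigl(\mathrm{e}^{-q\rho^+_{u}}\mathrm{e}^{-\theta(\varepsilon(\rho^+_{u})-u)};\rho^+_{u}<\zeta\bigr)$ in closed form and its correct combination with the normalization of the local time and the survival factor. I would derive it from the known first-passage and overshoot identities for a generic excursion, equivalently by disintegrating the two-sided resolvent \eqref{h2} and the one-sided exit transform across an infinitesimal increment of the supremum, which is where $Z_q(\cdot,\theta)$ and the creeping term $(q-\psi(\theta))W_q(\cdot)$ enter. Care is needed to treat the bounded- and unbounded-variation cases uniformly (the atom of $n$ corresponding to creeping across the draw-down level when $\sigma>0$ versus crossing by a jump). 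A clean alternative, which I would fall back on if the excursion bookkeeping becomes unwieldy, is to discretize the supremum into levels $x=s_0<\cdots<s_n=b$, apply the strong Markov property at the passage times $\tau^+_{s_i}$ together with \eqref{part1} and \eqref{h2} on each slab, and pass to the limit $n\to\infty$.

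Finally the three corollaries follow without further excursion theory. Setting $\theta=0$ in \eqref{10} gives $Z_q(\cdot,0)=Z_q(\cdot)$ and $\psi(0)=0$, which is \eqref{12}; setting $q=0$ sends $W_q,Z_q(\cdot,\theta)$ to $W,Z(\cdot,\theta)$ and flips $-(q-\psi(\theta))$ into $+\psi(\theta)$, giving \eqref{13}. For \eqref{14} I would apply \eqref{10} with $\phi(s)=\mathrm{e}^{-\theta\xi(s)}$, so its left-hand side becomes $\mathbb{E}_x\bigl(\mathrm{e}^{-q\tau_\xi}\mathrm{e}^{\theta(X(\tau_\xi)-\xi(\overline{X}(\tau_\xi)))};\tau_\xi<\tau_b^+\bigr)$, and then differentiate in $\theta$ and evaluate at $\theta=0$, since $-\partial_\theta$ brings down the factor $\xi(\overline{X}(\tau_\xi))-X(\tau_\xi)$. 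On the right-hand side the prefactor is $\theta$-free and $\phi(s)\mathrm{e}^{\theta\xi(s)}=1$, so only the bracket is differentiated; using $Z_q(u,0)=Z_q(u)$, $\partial_\theta(q-\psi(\theta))|_{0}=-\psi'(0+)$, and the elementary identity $\partial_\theta Z_q(u,\theta)|_{\theta=0}=\overline{Z}_q(u)-\psi'(0+)\overline{W}_q(u)$, obtained directly from the definitions of $Z_q(\cdot,\theta)$, $\overline{Z}_q$ and $\overline{W}_q$ after an integration by parts, collapses the bracket to the expression displayed in \eqref{14}.
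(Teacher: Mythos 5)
The paper does not actually prove this lemma: it is recalled verbatim as a result of Wang and Zhou (2019), so there is no in-paper argument to compare yours against line by line. Judged on its own terms, your strategy is the standard (and correct) one for drawdown identities, and your reductions of \eqref{12}, \eqref{13} and \eqref{14} from the master identity \eqref{10} are fully and correctly executed — in particular the computation $\partial_\theta Z_q(u,\theta)\big|_{\theta=0}=\overline{Z}_q(u)-\psi^{\prime}(0+)\overline{W}_q(u)$, which is the only nonobvious step there, checks out (via $\overline{Z}_q(u)=uZ_q(u)-q\int_0^u zW_q(z)\,\mathrm{d}z$), and the choice $\phi(s)=\mathrm{e}^{-\theta\xi(s)}$ followed by $-\partial_\theta|_{\theta=0}$ reproduces the bracket in \eqref{14} exactly.

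The one genuine gap is that the central computation — the excursion-measure functional $n\bigl(\mathrm{e}^{-q\rho^+_{u}}\mathrm{e}^{-\theta(\varepsilon(\rho^+_{u})-u)};\rho^+_{u}<\zeta\bigr)$, or equivalently the per-slab limit in your fallback discretization — is identified but never carried out, and it is precisely where all the content of \eqref{10} lives. To close it, the ingredient you want is not the resolvent \eqref{h2} directly but the two-sided exit-below transform
\begin{equation*}
\mathbb{E}_{y}\left(\mathrm{e}^{-q\tau_{0}^{-}+\theta X(\tau_{0}^{-})};\tau_{0}^{-}<\tau_{a}^{+}\right)=Z_{q}(y,\theta)-W_{q}(y)\frac{Z_{q}(a,\theta)}{W_{q}(a)},\qquad 0\leq y\leq a,
\end{equation*}
(itself obtainable from \eqref{h2} by the compensation formula plus the Gaussian creeping term, which is where the bounded/unbounded variation dichotomy you mention gets absorbed). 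Applying this on a slab of width $\epsilon$ at supremum level $s$ with $a=u=\overline{\xi}(s)$, the $O(\epsilon)$ drawdown contribution is governed by
\begin{equation*}
-W_{q}(u)\,\frac{\mathrm{d}}{\mathrm{d}u}\left[\frac{Z_{q}(u,\theta)}{W_{q}(u)}\right]=\frac{W_{q}^{\prime}(u)}{W_{q}(u)}Z_{q}(u,\theta)-\theta Z_{q}(u,\theta)-(q-\psi(\theta))W_{q}(u),
\end{equation*}
using $\partial_u Z_q(u,\theta)=\theta Z_q(u,\theta)-(\psi(\theta)-q)W_q(u)$; this is exactly the bracket in \eqref{10}, and the product over slabs of the survival factors from \eqref{part1} gives the exponential. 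So your plan does terminate in the stated formula, but as written the proposal asserts rather than verifies the key identity; supplying the display above (or the equivalent excursion-measure evaluation) is what turns the sketch into a proof.
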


\medskip
We start with the Laplace transform of the upper exiting time for the process $U$.

\medskip
\begin{prop}\label{3.1}
For $q,\lambda\in(0,\infty)$ we have
\begin{eqnarray}
\label{upper.lower.boun.resu.}
\mathbb{E}_{x}\left(\mathrm{e}^{-q\kappa_{b}^{+}}\mathbf{1}_{\{\kappa_{b}^{+}<\theta_{\lambda}\}}\right)
\hspace{-0.3cm}&=&\hspace{-0.3cm}
\exp\left(-\int_{x}^{b}\ell_{1}(w)\mathrm{d}w\right),\quad x\in(-\infty,b],
\end{eqnarray}
where
\begin{eqnarray}
\ell_{1}(w)=\frac{W_{q}^{\prime}(\overline{\xi}(w))}
{W_{q}(\overline{\xi}(w))}
\left(1-\frac{Z_{q}(\overline{\xi}(w))}{Z_{q+\lambda}(\overline{\xi}(w))}\right)
+\frac{qW_{q}(\overline{\xi}(w))}{Z_{q+\lambda}(\overline{\xi}(w))}.\nonumber
\end{eqnarray}
\end{prop}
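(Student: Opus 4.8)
The plan is to exploit the regenerative epoch structure encoded in the construction of $U$ in order to derive a Volterra-type integral equation for
\[
A(x;b):=\mathbb{E}_{x}\left(\mathrm{e}^{-q\kappa_{b}^{+}}\mathbf{1}_{\{\kappa_{b}^{+}<\theta_{\lambda}\}}\right),
\]
and then to solve that equation in closed form. I would first decompose on the outcome of the first epoch, i.e.\ on what happens before the first draw-down time $\tau_{\xi}$ of $X$. On $\{\tau_{b}^{+}<\tau_{\xi}\}$ the process $U$ agrees with $X$ up to $\tau_{b}^{+}$, so that $\kappa_{b}^{+}=\tau_{b}^{+}$ and no reflection, hence no Parisian ruin, has yet occurred; this event therefore contributes in full the quantity $\mathbb{E}_{x}(\mathrm{e}^{-q\tau_{b}^{+}}\mathbf{1}_{\{\tau_{b}^{+}<\tau_{\xi}\}})$, evaluated in closed form by \eqref{part1}. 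On the complementary event $\{\tau_{\xi}<\tau_{b}^{+}\}$ the draw-down occurs at a record level $s=\overline{X}(\tau_{\xi})\in[x,b)$, after which the reflected process is pushed up to start at $\xi(s)$.

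By the strong Markov property at $\tau_{\xi}$, the reflected motion over the ensuing excursion below the record is $\xi(s)$ plus an independent copy of the process $Y$ started at $0$ (the overshoot of the draw-down being absorbed by the reflection). The epoch ends when this reflected process returns to the record $s$, that is at the first passage $\sigma^{+}_{\overline{\xi}(s)}$ of $Y$ over the height $\overline{\xi}(s)=s-\xi(s)$, while Parisian ruin is declared if this return takes longer than an independent clock $e_{\lambda}$. The crucial point is that the contribution of the whole reflection phase depends on the epoch only through $s$: merging the $q$-discount of real time with the exponential killing at rate $\lambda$ gives
\[
\mathbb{E}_{0}\left(\mathrm{e}^{-q\sigma^{+}_{\overline{\xi}(s)}}\mathbf{1}_{\{\sigma^{+}_{\overline{\xi}(s)}<e_{\lambda}\}}\right)=\mathbb{E}_{0}\left(\mathrm{e}^{-(q+\lambda)\sigma^{+}_{\overline{\xi}(s)}}\right)=\frac{1}{Z_{q+\lambda}(\overline{\xi}(s))},
\]
using \eqref{two.sid.exit.Y} together with $Z_{q+\lambda}(0)=1$. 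On the survival event the pair $(U,\overline{U})$ sits at the new record $(s,s)$, and since $U$ creeps upward at records it regenerates there as a fresh copy, contributing precisely $A(s;b)$. Applying the draw-down identity \eqref{12} of Lemma~\ref{lemma2} with the $s$-dependent weight $\phi(s)=A(s;b)/Z_{q+\lambda}(\overline{\xi}(s))$ then yields the integral equation
\[
A(x;b)=\mathrm{e}^{-\int_{x}^{b}g(z)\,\mathrm{d}z}+\int_{x}^{b}\mathrm{e}^{-\int_{x}^{s}g(z)\,\mathrm{d}z}\,m(s)\,\frac{A(s;b)}{Z_{q+\lambda}(\overline{\xi}(s))}\,\mathrm{d}s,
\]
where $g(z)=W_{q}^{\prime}(\overline{\xi}(z))/W_{q}(\overline{\xi}(z))$ and $m(s)=g(s)Z_{q}(\overline{\xi}(s))-qW_{q}(\overline{\xi}(s))$.

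It remains to solve this equation. Setting $P(x):=\exp\big(\int_{x}^{b}g(z)\,\mathrm{d}z\big)A(x;b)$ removes the first term and the inner integrating factor, turning the equation into $P(x)=1+\int_{x}^{b}m(s)Z_{q+\lambda}(\overline{\xi}(s))^{-1}P(s)\,\mathrm{d}s$ with $P(b)=1$; differentiating gives the linear ordinary differential equation $P^{\prime}(x)=-m(x)Z_{q+\lambda}(\overline{\xi}(x))^{-1}P(x)$, whose solution is $P(x)=\exp\big(\int_{x}^{b}m(s)Z_{q+\lambda}(\overline{\xi}(s))^{-1}\,\mathrm{d}s\big)$. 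Undoing the substitution yields $A(x;b)=\exp\big(-\int_{x}^{b}[\,g(w)-m(w)Z_{q+\lambda}(\overline{\xi}(w))^{-1}\,]\,\mathrm{d}w\big)$, and a direct rearrangement of $g-m/Z_{q+\lambda}(\overline{\xi})$ produces exactly the stated integrand $\ell_{1}$.

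The main obstacle I anticipate is the rigorous justification of the regenerative decomposition rather than the ensuing computation. Specifically, one must argue that, although $U$ itself is not Markov, the Markov process $(U,\overline{U})$ restarts from scratch at each new record $(s,s)$ (so that $A(s;b)$ legitimately factors out), that the overshoot $\xi(s)-X(\tau_{\xi})$ at the draw-down does not affect the law of the subsequent reflection phase (so that it is governed by an independent copy of $Y$ and the factor $1/Z_{q+\lambda}(\overline{\xi}(s))$), and that the per-epoch exponential clocks and the discounting multiply correctly across epochs. One should also record that the Volterra kernel is bounded on the finite interval $[x,b]$, so the integral equation has a unique solution and the differentiation step is legitimate; the closed-form $A$ can then be verified by substitution if one prefers to bypass the abstract regeneration argument.
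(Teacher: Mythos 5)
Your proposal is correct and follows essentially the same route as the paper: the same decomposition on $\{\tau_b^+<\tau_\xi\}$ versus $\{\tau_\xi<\tau_b^+\}$, the same use of \eqref{part1}, \eqref{two.sid.exit.Y} (giving the factor $1/Z_{q+\lambda}(\overline{\xi}(s))$) and \eqref{12} to arrive at the identical Volterra equation, which the paper likewise reduces to the first-order ODE $f'=\ell_1 f$ with $f(b)=1$. Your integrating-factor substitution for solving that equation is a cosmetic variant of the paper's direct differentiation, and your closing remarks on justifying the regeneration are exactly the points the paper handles via the strong Markov property of $(U,\overline{U})$.
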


\begin{proof}[Proof:]\,\,\,
Denote by $f(x)$ the left hand side of (\ref{upper.lower.boun.resu.}). We have
\begin{eqnarray}\label{part.}
f(x)=\mathbb{E}_{x}\left(\mathrm{e}^{-q\kappa_{b}^{+}}\mathbf{1}_{\{\kappa_{b}^{+}<\tau_{\xi}\}}\right)
+\mathbb{E}_{x}\left(\mathrm{e}^{-q\kappa_{b}^{+}}\mathbf{1}_{\{\tau_{\xi}<\kappa_{b}^{+}<\theta_{\lambda}\}}\right),\quad x\in(-\infty,b].
\end{eqnarray}
Note that by definition,  $\tau_{\xi}<\kappa_{b}^{+}$ implies $\overline{X}(\tau_{\xi})<b$ which further implies $T_{1}<\kappa_{b}^{+}$. Hence, taking use of \eqref{two.sid.exit.Y} and \eqref{12}, we get for $x\in(-\infty,b]$
\begin{eqnarray}\label{part2}
\hspace{-0.3cm}&&\hspace{-0.3cm}\mathbb{E}_{x}\left(\mathrm{e}^{-q\kappa_{b}^{+}}\mathbf{1}_{\{\tau_{\xi}<\kappa_{b}^{+}<\theta_{\lambda}\}}\right)
=\mathbb{E}_{x}\left(\mathrm{e}^{-q\kappa_{b}^{+}}\mathbf{1}_{\{\tau_{\xi}<T_{1}<\kappa_{b}^{+}<\theta_{\lambda}\}}
\right)
\nonumber\\
\hspace{-0.3cm}&=&\hspace{-0.3cm}
\mathbb{E}_{x}\left(\mathrm{e}^{-q\tau_{\xi}}\mathbf{1}_{\{\tau_{\xi}<\tau_{b}^{+}\}} \left[\left.\mathbb{E}_{}\left(\mathrm{e}^{-q \sigma^{+}_{z}}\mathbf{1}_{\{\sigma^{+}_{z}<e_{\lambda}\}}\right)\right|_{z=\overline{\xi}(\overline{X}(\tau_{\xi}))}\right]f(\overline{X}(\tau_{\xi}))\right)
\nonumber\\
\hspace{-0.3cm}&=&\hspace{-0.3cm}
\mathbb{E}_{x}\left(\mathrm{e}^{-q\tau_{\xi}}\mathbf{1}_{\{\tau_{\xi}<\tau_{b}^{+}\}} \frac{f(\overline{X}(\tau_{\xi}))}{Z_{q+\lambda}(\overline{\xi}(\overline{X}(\tau_{\xi})))}\right)
\nonumber\\
\hspace{-0.3cm}&=&\hspace{-0.3cm}
\int_{x}^{b}\frac{f(s)}{Z_{q+\lambda}(\overline{\xi}(s))}
\exp\left(-\int_{x}^{s}\frac{W_{q}^{\prime}(\overline{\xi}(z))}{ W_{q}(\overline{\xi}(z))}\mathrm{d}z\right)
\left(\frac{W_{q}^{\prime}(\overline{\xi}(s))}{W_{q}(\overline{\xi}(s))}Z_{q}(\overline{\xi}(s))-qW_{q}(\overline{\xi}(s))\right)\mathrm{d}s.
\end{eqnarray}
Combining (\ref{part1}), (\ref{part.}) and (\ref{part2}), we obtain for $x\in(-\infty, b]$
\begin{eqnarray}\label{f(x)}
f(x)\hspace{-0.3cm}&=&\hspace{-0.3cm}\exp\left(-\int_{x}^{b}\frac{W_{q}^{\prime}(\overline{\xi}\left(z\right))}
{W_{q}(\overline{\xi}\left(z\right))}\mathrm{d}z\right)
\nonumber\\
\hspace{-0.3cm}&&\hspace{-0.5cm}
+\int_{x}^{b}\frac{f(s)}{Z_{q+\lambda}(\overline{\xi}(s))}
\exp\left(-\int_{x}^{s}\frac{W_{q}^{\prime}(\overline{\xi}(z))}{ W_{q}(\overline{\xi}(z))}\mathrm{d}z\right)
\left(\frac{W_{q}^{\prime}(\overline{\xi}(s))}{W_{q}(\overline{\xi}(s))}Z_{q}(\overline{\xi}(s))-qW_{q}(\overline{\xi}(s))\right)\mathrm{d}s.
\end{eqnarray}
Taking derivative on both sides of (\ref{f(x)}) with respect to $x$, we have
\begin{eqnarray}\label{f'(x)}
f^{\prime}(x)\hspace{-0.3cm}&=&\hspace{-0.3cm}
\frac{W_{q}^{\prime}(\overline{\xi}\left(x\right))}
{W_{q}(\overline{\xi}\left(x\right))}
f(x)
-\frac{f(x)}{Z_{q+\lambda}(\overline{\xi}(x))}\left(\frac{W_{q}^{\prime}(\overline{\xi}(x))}{W_{q}(\overline{\xi}(x))}Z_{q}(\overline{\xi}(x))-qW_{q}(\overline{\xi}(x))\right)
\nonumber\\
\hspace{-0.3cm}&=&\hspace{-0.3cm}
\ell_{1}(x)f(x),\quad x\in(-\infty,b].
\end{eqnarray}
Solving (\ref{f'(x)}) we obtain for $x\in(-\infty,b]$
\begin{eqnarray}\label{fwithC}
f(x)
\hspace{-0.3cm}&=&\hspace{-0.3cm}
C\exp\left(-\int_{x}^{b}\ell_{1}(w)\mathrm{d}w\right),
\end{eqnarray}
for some constant $C$.
The boundary condition $f(b)=1$ together with (\ref{fwithC}) yields (\ref{upper.lower.boun.resu.}).
\end{proof}

\begin{prop}
For $q,\lambda\in (0,\infty)$, we have for $x\in(-\infty,b]$,
\begin{eqnarray}\label{eq:main1}
\mathbb{E}_x\Big(\mathrm{e}^{-q\theta_{\lambda}}\mathbf{1}_{\{\theta_{\lambda}<\kappa_b^+\}}\Big)=\int_x^b \exp\Big(-\int_x^y \ell_{1}(w)dw\Big)\overline{\ell}_{1}(y)\mathrm{d}y,
\end{eqnarray}
where $\ell_{1}(x)$ is given in (\ref{upper.lower.boun.resu.}), while the function $\overline{\ell}_{1}(x)$ is defined by
\begin{eqnarray*}
\overline{\ell}_{1}(x)
=
\frac{\lambda}{q+\lambda}\left(1 - \frac{1}{Z_{q+\lambda}(\overline{\xi}(x))} \right)\left(\frac{W_q^{\prime}(\overline{\xi}(x))}{W_q(\overline{\xi}(x))}Z_q(\overline{\xi}(x)) - qW_q(\overline{\xi}(x))\right).
\end{eqnarray*}
\end{prop}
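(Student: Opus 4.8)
The plan is to run the same renewal scheme as in Proposition \ref{3.1}, now tracking the Parisian-ruin event instead of the upper exit. Write $g(x)$ for the left-hand side of \eqref{eq:main1}. Since $\tau_{\xi}<\kappa_b^+$ forces $\overline{X}(\tau_{\xi})<b$, and $U$ cannot cross $b$ while running an excursion below its record, I would decompose $g(x)$ according to the first excursion below the running maximum: either the Parisian clock expires during this first excursion (equivalently $N_{\lambda}=1$, in which case $\theta_{\lambda}=\tau_{\xi}+e_{\lambda}<\kappa_b^+$ automatically), or the first excursion ends successfully, whereupon $(U,\overline{U})$ restarts from the fresh maximum $\overline{X}(\tau_{\xi})<b$ and the event $\{\theta_{\lambda}<\kappa_b^+\}$ is inherited by the restarted problem. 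This gives $g(x)=\mathrm{A}(x)+\mathrm{B}(x)$.

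For the one-excursion ingredients I would use \eqref{two.sid.exit.Y} at $x=0$, where $Z_{q+\lambda}(0)=1$. With $z=\overline{\xi}(\overline{X}(\tau_{\xi}))$ the climb required of the reflected excursion, integrating out the independent exponential clock yields
\[
\mathbb{E}\left(\mathrm{e}^{-q\sigma^{+}_{z}}\mathbf{1}_{\{\sigma^{+}_{z}<e_{\lambda}\}}\right)=\mathbb{E}\left(\mathrm{e}^{-(q+\lambda)\sigma^{+}_{z}}\right)=\frac{1}{Z_{q+\lambda}(z)},\qquad \mathbb{E}\left(\mathrm{e}^{-qe_{\lambda}}\mathbf{1}_{\{e_{\lambda}<\sigma^{+}_{z}\}}\right)=\frac{\lambda}{q+\lambda}\left(1-\frac{1}{Z_{q+\lambda}(z)}\right).
\]
Applying the strong Markov property of $(U,\overline{U})$ at $\tau_{\xi}$ (and additionally at $T_1$ for the restart), the successful-first-excursion term $\mathrm{B}(x)$ is exactly the integral on the right of \eqref{part2} with $f$ replaced by $g$, while for $\mathrm{A}(x)$ Lemma \ref{lemma2}, eq.\ \eqref{12}, applied with $\phi(s)=\frac{\lambda}{q+\lambda}\big(1-1/Z_{q+\lambda}(\overline{\xi}(s))\big)$ gives
\[
\mathrm{A}(x)=\int_{x}^{b}\exp\left(-\int_{x}^{s}\frac{W_{q}^{\prime}(\overline{\xi}(z))}{W_{q}(\overline{\xi}(z))}\mathrm{d}z\right)\overline{\ell}_{1}(s)\,\mathrm{d}s,
\]
the point being that $\phi(s)$ times the bracket $\big(\frac{W_{q}^{\prime}(\overline{\xi}(s))}{W_{q}(\overline{\xi}(s))}Z_{q}(\overline{\xi}(s))-qW_{q}(\overline{\xi}(s))\big)$ in \eqref{12} is precisely $\overline{\ell}_{1}(s)$.

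Adding $\mathrm{A}$ and $\mathrm{B}$ produces a linear integral equation for $g$. Differentiating it in $x$ exactly as \eqref{f'(x)} was derived from \eqref{f(x)} (the kernel equals $1$ on the diagonal and its $x$-derivative reproduces $\frac{W_{q}^{\prime}(\overline{\xi}(x))}{W_{q}(\overline{\xi}(x))}$ times the whole integral), and using the identity $\ell_{1}(x)=\frac{W_{q}^{\prime}(\overline{\xi}(x))}{W_{q}(\overline{\xi}(x))}-\frac{1}{Z_{q+\lambda}(\overline{\xi}(x))}\big(\frac{W_{q}^{\prime}(\overline{\xi}(x))}{W_{q}(\overline{\xi}(x))}Z_{q}(\overline{\xi}(x))-qW_{q}(\overline{\xi}(x))\big)$ read off from \eqref{upper.lower.boun.resu.}, everything collapses to the first-order linear ODE
\[
g^{\prime}(x)=\ell_{1}(x)\,g(x)-\overline{\ell}_{1}(x),\qquad x\in(-\infty,b].
\]
Mirroring the boundary condition $f(b)=1$ of Proposition \ref{3.1}, here $g(b)=0$ since $\kappa_b^+=0<\theta_{\lambda}$ under $\mathbb{P}_{b}$, so $\{\theta_{\lambda}<\kappa_b^+\}$ is null. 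Solving with integrating factor $I(x)=\exp(-\int_{x}^{b}\ell_{1}(w)\mathrm{d}w)$ and $I(x)/I(y)=\exp(-\int_{x}^{y}\ell_{1}(w)\mathrm{d}w)$ then delivers \eqref{eq:main1}.

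I expect the only real obstacle to be the bookkeeping of the renewal decomposition: justifying the two applications of the strong Markov property through the Markov pair $(U,\overline{U})$ at $\tau_{\xi}$ and $T_1$, correctly pinning down that first-round ruin contributes the factor $\mathrm{e}^{-q(\tau_{\xi}+e_{\lambda})}$ on $\{e_{\lambda}<\sigma^{+}_{z}\}$ while a successful first excursion restarts the problem at level $\overline{X}(\tau_{\xi})$, and verifying the collapse of the first-excursion term to $\overline{\ell}_{1}$. After that, the passage to the ODE and its resolution are routine and identical in spirit to Proposition \ref{3.1}.
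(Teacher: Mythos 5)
Your proposal is correct and follows essentially the same route as the paper: the same decomposition of the first excursion into the events $\{T_1-\tau_\xi>e_\lambda\}$ and $\{T_1-\tau_\xi\le e_\lambda\}$, the same use of \eqref{two.sid.exit.Y} and \eqref{12} to compute the one-excursion quantities $\frac{\lambda}{q+\lambda}\bigl(1-1/Z_{q+\lambda}(z)\bigr)$ and the restart factor $1/Z_{q+\lambda}(z)$, and the same reduction to the ODE $g'=\ell_1 g-\overline{\ell}_1$ with boundary condition $g(b)=0$. No gaps.
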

\begin{proof}
Denote by $\overline{f}_{\xi}(x;b)$ the left hand side of \eqref{eq:main1}.
We have
\begin{eqnarray}
\overline{f}_{\xi}(x;b) \hspace{-0.3cm}&=&\hspace{-0.3cm}\mathbb{E}_x\Big( \mathrm{e}^{-q \theta_{\lambda}}\mathbf{1}_{\{\theta_{\lambda}<\kappa_b^+, \tau_{\xi}<\tau_b^+\}} \Big) \nonumber\\
\hspace{-0.3cm}&=&\hspace{-0.3cm} \mathbb{E}_x\Big( \mathrm{e}^{-q \theta_{\lambda}}\mathbf{1}_{\{\theta_{\lambda}<\kappa_b^+, \tau_{\xi}<\tau_b^+, T_1-\tau_{\xi} >  e_{\lambda}\}} \Big) + \mathbb{E}_x\Big( \mathrm{e}^{-q \theta_{\lambda}}\mathbf{1}_{\{\theta_{\lambda}<\kappa_b^+, \tau_{\xi}<\tau_b^+,T_1-\tau_{\xi} \leq  e_{\lambda}\}} \Big). \label{eq:eq1}
\end{eqnarray}
By the strong Markov property of the bi-variate process $(U,\overline{U})$ one can obtain
\begin{eqnarray}
\mathbb{E}\Big( \mathrm{e}^{-q\theta_{\lambda}}\mathbf{1}_{\{\theta_{\lambda}<\kappa_b^+, \tau_{\xi}<\tau_b^+, T_1-\tau_{\xi} >  e_{\lambda}\}}\Big\vert \mathcal{F}_{\tau_{\xi}}\Big)
\hspace{-0.3cm}&=&\hspace{-0.3cm}
\mathrm{e}^{-q\tau_{\xi}}\mathbf{1}_{\{\tau_{\xi}<\tau_b^+\}}
\left[\left.\mathbb{E}_{}\Big(\mathrm{e}^{-q e_{\lambda}}\mathbf{1}_{\{ e_{\lambda}<\sigma_{z}^+\}}\Big)\right|_{z=\overline{\xi}(\overline{X}_{\tau_{\xi}})}\right] \nonumber\\
\hspace{-0.3cm}&=&\hspace{-0.3cm}
\mathrm{e}^{-q\tau_{\xi}}\mathbf{1}_{\{\tau_{\xi}<\tau_b^+\}}\frac{\lambda}{q+\lambda}\Big(1-\Big[\mathbb{E}\Big(\mathrm{e}^{-(q+\lambda)\sigma_z^+}\Big)\Big\vert_{z=\overline{\xi}(\overline{X}_{\tau_{\xi}})}\Big] \Big) \nonumber\\
\hspace{-0.3cm}&=&\hspace{-0.3cm}
\mathrm{e}^{-q\tau_{\xi}}\mathbf{1}_{\{\tau_{\xi}<\tau_b^+\}}\frac{\lambda}{q+\lambda}\Big(1-\frac{1}{Z_{q+\lambda}(\overline{\xi}(\overline{X}_{\tau_{\xi}}))} \Big), \label{eq:eq1a}
\end{eqnarray}
and
\begin{eqnarray}
\hspace{-0.3cm}&&\hspace{-0.3cm}
\mathbb{E}\Big(\mathbb{E}\Big( \mathrm{e}^{-q\theta_{\lambda}}\mathbf{1}_{\{\theta_{\lambda}<\kappa_b^+, \tau_{\xi}<\tau_b^+, T_1-\tau_{\xi} \leq  e_{\lambda}\}}\Big\vert \mathcal{F}_{T_{1}}\Big)\Big\vert \mathcal{F}_{\tau_{\xi}}\Big)
\nonumber\\
\hspace{-0.3cm}&=&\hspace{-0.3cm}
\mathrm{e}^{-q\tau_{\xi}}\mathbf{1}_{\{\tau_{\xi}<\tau_b^+\}}
\left[\left.\mathbb{E}_{}\Big(\mathrm{e}^{-q \sigma_{z}^+}\mathbf{1}_{\{ \sigma_{z}^+\leq e_{\lambda}\}}\Big)\right|_{z=\overline{\xi}(\overline{X}_{\tau_{\xi}})}\right]\mathbb{E}_{\overline{X}_{\tau_{\xi}}}\Big(\mathrm{e}^{-q\theta_{\lambda}}\mathbf{1}_{\{\theta_{\lambda}<\kappa_b^+\}}\Big)
\nonumber\\
\hspace{-0.3cm}&=&\hspace{-0.3cm}
\mathrm{e}^{-q\tau_{\xi}}\mathbf{1}_{\{\tau_{\xi}<\tau_b^+\}}\frac{\overline{f}_{\xi}(\overline{X}_{\tau_{\xi}};b)}{Z_{q+\lambda}(\overline{\xi}(\overline{X}_{\tau_{\xi}}))}\label{eq:eq1b}.
\end{eqnarray}
Following the two identities (\ref{eq:eq1a}) and (\ref{eq:eq1b}), we obtain from (\ref{eq:eq1}) and (\ref{12}) the equation
\begin{eqnarray*}
&&\overline{f}_{\xi}(x;b)=\frac{\lambda}{q+\lambda}\mathbb{E}_x\left(\mathrm{e}^{-q\tau_{\xi}}\mathbf{1}_{\{\tau_{\xi}<\tau_b^+\}}\Big(1-\frac{1}{Z_{q+\lambda}(\overline{\xi}(\overline{X}_{\tau_{\xi}}))} \Big) \right) \nonumber\\
&&\hspace{2cm}+ \mathbb{E}\left(\mathrm{e}^{-q\tau_{\xi}}\mathbf{1}_{\{\tau_{\xi}<\tau_b^+\}}\frac{\overline{f}_{\xi}(\overline{X}_{\tau_{\xi}};b)}{Z_{q+\lambda}(\overline{\xi}(\overline{X}_{\tau_{\xi}}))}\right) \nonumber\\
&&= \frac{\lambda}{q+\lambda} \int_x^b \Big[ 1- \frac{1}{Z_{q+\lambda}(\overline{\xi}(s))} \Big]\exp\left(-\int_x^s \frac{W_q^{\prime}(\overline{\xi}(z))}{W_q(\overline{\xi}(z))}dz\right)\left(\frac{W_q^{\prime}(\overline{\xi}(s))}{W_q(\overline{\xi}(s))} Z_q(\overline{\xi}(s)) - q W_q(\overline{\xi}(s))  \right)\mathrm{d}s\\
&& \hspace{1cm}+ \int_x^b\frac{\overline{f}_{\xi}(s;b)}{Z_{q+\lambda}(\overline{\xi}(s))}\exp\left(-\int_x^s \frac{W_q^{\prime}(\overline{\xi}(z))}{W_q(\overline{\xi}(z))}dz\right)\left(\frac{W_q^{\prime}(\overline{\xi}(s))}{W_q(\overline{\xi}(s))} Z_q(\overline{\xi}(s)) - q W_q(\overline{\xi}(s))  \right)\mathrm{d}s.
\end{eqnarray*}
After taking partial derivative w.r.t $x$ on both sides, we have by the  Leibniz integral rule,
\begin{eqnarray}
\overline{f}_{\xi}^{\prime}(x;b)
\hspace{-0.3cm}&=&\hspace{-0.3cm}
-\frac{\lambda}{q+\lambda}\Big[1-\frac{1}{Z_{q+\lambda}(\overline{\xi}(x))}\Big]
\left(\frac{W_q^{\prime}(\overline{\xi}(x))}{W_q(\overline{\xi}(x))} Z_q(\overline{\xi}(x)) - q W_q(\overline{\xi}(x)) \right) \nonumber\\
\hspace{-0.3cm}&&\hspace{-0.3cm}
-\frac{\overline{f}_{\xi}(x;b)}{Z_{q+\lambda}(\overline{\xi}(x))}\left(\frac{W_q^{\prime}(\overline{\xi}(x))}{W_q(\overline{\xi}(x))} Z_q(\overline{\xi}(x)) - q W_q(\overline{\xi}(x))  \right) + \frac{W_q^{\prime}(\overline{\xi}(x))}{W_q(\overline{\xi}(x))}\overline{f}_{\xi}(x;b) \nonumber\\
\hspace{-0.3cm}&=&\hspace{-0.3cm}
\ell_1(x)\overline{f}_{\xi}(x;b) - \overline{\ell}_{1}(x)\label{eq:eq3}.
\end{eqnarray}
Identity (\ref{eq:main1}) follows by solving the differential equation (\ref{eq:eq3}) subject to the boundary condition $\overline{f}_{\xi}(b;b)=0$. \end{proof}

\medskip

\begin{cor}[Parisian Ruin probability]
For $\lambda\in (0,\infty)$, the Parisian ruin probability is given by
\begin{eqnarray}\label{eq:ruinprobab}
\mathbb{P}_x\big(\theta_{\lambda}<\infty)=1-\exp\Big(-\int_x^{\infty} \ell_1(w;\lambda)dw\Big), \quad x\in\mathbb{R} ,
\end{eqnarray}
with
\begin{eqnarray*}
\ell_1(x;\lambda)=\frac{W^{\prime}(\overline{\xi}(x))}{W(\overline{\xi}(x))}\left(1-\frac{1}{Z_{\lambda}(\overline{\xi}(x))}\right).
\end{eqnarray*}
\end{cor}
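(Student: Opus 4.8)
The plan is to read off the ruin probability from Proposition \ref{3.1} by sending first $q\downarrow 0$ and then $b\uparrow\infty$ in the identity \eqref{upper.lower.boun.resu.}. First I would remove the discounting: on the event $\{\kappa_b^+<\theta_\lambda\}$ one necessarily has $\kappa_b^+<\infty$ (two infinite times cannot satisfy a strict inequality), so $\mathrm{e}^{-q\kappa_b^+}\mathbf{1}_{\{\kappa_b^+<\theta_\lambda\}}\to\mathbf{1}_{\{\kappa_b^+<\theta_\lambda\}}$ pointwise and boundedly as $q\downarrow0$, and dominated convergence gives that the left-hand side of \eqref{upper.lower.boun.resu.} tends to $\mathbb{P}_x(\kappa_b^+<\theta_\lambda)$. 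On the right-hand side I would use the continuity of the scale functions in the spectral parameter, namely $W_q\to W$, $W_q'\to W'$, $Z_q(x)=1+q\int_0^x W_q\to 1$, $Z_{q+\lambda}\to Z_\lambda$ and $qW_q\to 0$, so that the integrand collapses to
\[
\ell_1(w)\;\longrightarrow\;\frac{W'(\overline{\xi}(w))}{W(\overline{\xi}(w))}\Big(1-\frac{1}{Z_\lambda(\overline{\xi}(w))}\Big)=\ell_1(w;\lambda),
\]
whence, after passing the limit through the (finite-interval) integral, $\mathbb{P}_x(\kappa_b^+<\theta_\lambda)=\exp\big(-\int_x^b\ell_1(w;\lambda)\,\mathrm{d}w\big)$.

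Next I would let $b\uparrow\infty$. Since $\kappa_b^+$ is nondecreasing in $b$, the events $\{\kappa_b^+<\theta_\lambda\}$ are decreasing, so continuity from above yields $\mathbb{P}_x\big(\bigcap_b\{\kappa_b^+<\theta_\lambda\}\big)=\exp\big(-\int_x^\infty\ell_1(w;\lambda)\,\mathrm{d}w\big)$. The result then follows by taking complements, provided one identifies $\bigcap_b\{\kappa_b^+<\theta_\lambda\}=\{\theta_\lambda=\infty\}$ up to a $\mathbb{P}_x$-null set.

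This last identification is the step I expect to be the main obstacle, and it is where the probabilistic content sits. The inclusion $\bigcap_b\{\kappa_b^+<\theta_\lambda\}\subseteq\{\theta_\lambda=\infty\}$ is the easy direction: on $\{\theta_\lambda<\infty\}$ the c\`adl\`ag path of $U$ restricted to the compact interval $[0,\theta_\lambda]$ is bounded, say by $M$, so $\kappa_b^+>\theta_\lambda$ for every $b>M$ and the intersection cannot occur. For the reverse inclusion I would argue that survival forces the surplus to reach every level: on $\{\theta_\lambda=\infty\}$ no excursion of the draw-down reflected process below its running record can have infinite lifetime, since such an excursion would outlast its independent exponential clock $e_\lambda^{(n)}$ and trigger Parisian ruin; hence every segment completes, the record $\overline{U}$ increases without bound, and $\kappa_b^+<\infty$ for all $b$. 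The delicate point to pin down rigorously is ruling out the scenario in which the process survives while the record $\overline U$ stays trapped below a finite level through infinitely many short excursions; I would close this using the excursion/clock comparison together with the well-definedness and drift structure of $U$ established in Wang and Zhou (2019). With the event identification in hand, the corollary follows immediately.
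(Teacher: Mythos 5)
Your argument is correct in outline, but it takes a genuinely different --- and strictly harder --- route than the one the paper sets up. The corollary sits immediately after the proposition giving \eqref{eq:main1}, and the intended derivation is to let $q\downarrow 0$ there: since $Z_0\equiv 1$ and $qW_q\to 0$, the integrand collapses to $\exp\big(-\int_x^y\ell_1(w;\lambda)\,\mathrm{d}w\big)\,\ell_1(y;\lambda)$, so $\mathbb{P}_x(\theta_\lambda<\kappa_b^+)=1-\exp\big(-\int_x^b\ell_1(w;\lambda)\,\mathrm{d}w\big)$, and then $b\uparrow\infty$ finishes the proof because $\{\theta_\lambda<\kappa_b^+\}\uparrow\{\theta_\lambda<\infty\}$ is an \emph{exact} identity of events: one inclusion is trivial and the other follows from boundedness of the c\`adl\`ag path on the compact interval $[0,\theta_\lambda]$. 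By instead starting from Proposition \ref{3.1}, i.e.\ \eqref{upper.lower.boun.resu.}, and taking complements, you are forced to prove the reverse inclusion $\{\theta_\lambda=\infty\}\subseteq\bigcap_b\{\kappa_b^+<\theta_\lambda\}$ up to null sets, i.e.\ that survival forces $\overline U(t)\to\infty$. You correctly identify this as the crux, and your sketch is the right way to close it: each completed draw-down episode survives its exponential clock with conditional probability $1/Z_\lambda(\overline{\xi}(\cdot))$, which is bounded away from $1$ while the record stays in a compact set, so infinitely many completed episodes with bounded record have probability zero; and a terminal segment with $\tau_\xi=\infty$ forces $\overline X\to\infty$ by the standard draw-down exit identity. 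But as written this is a sketch rather than a proof (you would still need to rule out accumulation of the $T_n$ and to control $\overline{\xi}$ on compacts for general $\xi$), whereas the paper's route avoids the lemma entirely. Your $q\downarrow 0$ limits in the scale functions and the continuity-from-above step are unobjectionable.
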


\medskip

\begin{thm}
For $q,\lambda\in(0,\infty)$, we have for $x\in(-\infty,b]$ that
\begin{eqnarray}
U_{\xi}(x;b)=\int_x^b \exp\Big(-\int_x^y \ell_{1}(w)dw\Big)\overline{\ell}_{1}(y)\mathrm{d}y +  \exp\Big(-\int_x^b \ell_{1}(w)dw\Big).
\end{eqnarray}
\begin{proof}
The proof follows from combining the two results (\ref{upper.lower.boun.resu.}) and (\ref{eq:main1}).
\end{proof}
\end{thm}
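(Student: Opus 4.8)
The plan is to reduce the statement entirely to the two preceding propositions by decomposing the expectation $U_{\xi}(x;b)=\mathbb{E}_x(\mathrm{e}^{-q(\kappa_b^+\wedge\theta_{\lambda})})$ according to which of the two competing times realizes the minimum. Since $q>0$, on the event $\{\kappa_b^+\wedge\theta_{\lambda}=\infty\}$ the discount factor $\mathrm{e}^{-q(\kappa_b^+\wedge\theta_{\lambda})}$ equals zero, so only the events on which the minimum is finite contribute to the expectation.

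Concretely, first I would write
\[
U_{\xi}(x;b)=\mathbb{E}_x\left(\mathrm{e}^{-q\kappa_b^+}\mathbf{1}_{\{\kappa_b^+<\theta_{\lambda}\}}\right)+\mathbb{E}_x\left(\mathrm{e}^{-q\theta_{\lambda}}\mathbf{1}_{\{\theta_{\lambda}<\kappa_b^+\}}\right)+\mathbb{E}_x\left(\mathrm{e}^{-q(\kappa_b^+\wedge\theta_{\lambda})}\mathbf{1}_{\{\kappa_b^+=\theta_{\lambda}<\infty\}}\right),
\]
and then argue that the last term vanishes. The reason is that $\theta_{\lambda}$ is built from the independent sequence $\{e_{\lambda}^{(n)};n\geq 1\}$ of exponential clocks appearing in its definition, whose laws are atomless and independent of $X$; consequently the path-determined time $\kappa_b^+$ and the clock-determined time $\theta_{\lambda}$ cannot coincide with positive probability, i.e. $\mathbb{P}_x(\kappa_b^+=\theta_{\lambda}<\infty)=0$. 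This is the only point requiring a small separate justification; the remainder is bookkeeping.

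With the tie term removed, I would invoke Proposition \ref{3.1}, identity (\ref{upper.lower.boun.resu.}), to evaluate the first term as $\exp(-\int_x^b\ell_{1}(w)\mathrm{d}w)$, and the second proposition, identity (\ref{eq:main1}), to evaluate the second term as $\int_x^b\exp(-\int_x^y\ell_{1}(w)\mathrm{d}w)\,\overline{\ell}_{1}(y)\mathrm{d}y$. Summing these two expressions yields exactly the asserted formula for $U_{\xi}(x;b)$, valid for all $x\in(-\infty,b]$. The main (and essentially the sole) obstacle is the measure-zero argument for the tie event $\{\kappa_b^+=\theta_{\lambda}<\infty\}$; once that is in place the conclusion is immediate, since both stopping-time functionals have already been computed in closed form in the two propositions.
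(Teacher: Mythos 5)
Your proposal is correct and follows essentially the same route as the paper, which simply adds the formula from Proposition \ref{3.1} for $\mathbb{E}_x(\mathrm{e}^{-q\kappa_b^+}\mathbf{1}_{\{\kappa_b^+<\theta_{\lambda}\}})$ to the formula \eqref{eq:main1} for $\mathbb{E}_x(\mathrm{e}^{-q\theta_{\lambda}}\mathbf{1}_{\{\theta_{\lambda}<\kappa_b^+\}})$. Your extra observation that the tie event $\{\kappa_b^+=\theta_{\lambda}<\infty\}$ is null (by the atomless independent exponential clocks) is a small point the paper leaves implicit, and it is correctly handled.
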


\medskip
The next result gives an expression of the joint Laplace transform concerning $\kappa_{b}^{+}\wedge \theta_{\lambda}$.

\medskip
\begin{thm}\label{3.4}
For any $q, u,v,\lambda\in(0,\infty)$ with $u\in(0,\Phi_{q+\lambda})$, we have
\begin{eqnarray}
\label{lap.of.exp.tot.dis.cap.inj.}
\quad\quad G_\xi(x;b)
\hspace{-0.3cm}&=&\hspace{-0.3cm}
\left(\mathrm{e}^{ub}-\int_{x}^{b}\overline{\ell}_{2}(z)\,\exp\left(\int_{z}^{b}\ell_{2}(w)\mathrm{d}w\right)\mathrm{d}z\right)\exp\left(-\int_{x}^{b}\ell_{2}(w)\mathrm{d}w\right),\quad x\in(-\infty, b],
\end{eqnarray}
where
\begin{eqnarray}
\ell_{2}(w)
\hspace{-0.3cm}&=&\hspace{-0.3cm}
\frac{W_{q}^{\prime}(\overline{\xi}(w))}
{W_{q}(\overline{\xi}(w))}
\left(1-\frac{Z_{q}(\overline{\xi}(w),v)}
{Z_{q+\lambda}(\overline{\xi}(w),v)}\right)+
\frac{v  Z_{q}(\overline{\xi}(w),v)+(q-\psi(v ))W_{q}(\overline{\xi}(w))}
{Z_{q+\lambda}(\overline{\xi}(w),v)},
\nonumber\\
\overline{\ell}_{2}(w)\hspace{-0.3cm}&=&\hspace{-0.3cm}
-\mathrm{e}^{u\xi(w)}
\hbar(\overline{\xi}(w))
\left(\frac{W_{q}^{\prime}(\overline{\xi}(w))}{W_{q}(\overline{\xi}(w))}Z_{q}(\overline{\xi}(w),v)-v  Z_{q}(\overline{\xi}(w),v)-(q-\psi(v ))W_{q}(\overline{\xi}(w))\right)
,\nonumber
\\
\label{hbar}
\hbar(w)
\hspace{-0.3cm}&=&\hspace{-0.3cm}
\frac{\lambda}{\Phi_{q+\lambda}-u}
\left(
W_{q+\lambda}(0+)
+\int_{y<0}\mathrm{e}^{u y+(v-u) y}\left(W_{q+\lambda}^{\prime}(-y)-\Phi_{q+\lambda}W_{q+\lambda}(-y)\right)
\mathrm{d}y\right.
\nonumber\\
\hspace{-0.3cm}&&\hspace{-0.3cm}
-\frac{1}{Z_{q+\lambda}(w,v)}
\bigg(
W_{q+\lambda}(0+)\,\mathrm{e}^{uw}
\nonumber\\
\hspace{-0.3cm}&&\hspace{-0.3cm}
\left.\left.+\int_{y<w}\mathrm{e}^{u y+(v-u) (y\wedge 0)}\left(W_{q+\lambda}^{\prime}(w-y)-\Phi_{q+\lambda}W_{q+\lambda}(w-y)\right)
\mathrm{d}y\right)\right).\nonumber
\end{eqnarray}
\end{thm}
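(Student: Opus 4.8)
The plan is to follow the renewal-equation-to-ODE scheme of Proposition~\ref{3.1} and its predecessors, now transporting the two extra weights $\mathrm{e}^{uU}$ and $\mathrm{e}^{-vR}$ through the first reflected excursion below the running record. First I would split $G_\xi(x;b)$ at the initial draw-down time $\tau_\xi$ according to whether $U$ reaches $b$ beforehand:
\begin{align*}
G_\xi(x;b)&=\mathbb{E}_{x}\big(\mathrm{e}^{-q\tau_{b}^{+}+ub}\mathbf{1}_{\{\tau_{b}^{+}<\tau_{\xi}\}}\big)\\
&\quad+\mathbb{E}_{x}\big(\mathrm{e}^{-q(\kappa_{b}^{+}\wedge\theta_{\lambda})+uU(\kappa_{b}^{+}\wedge\theta_{\lambda})-vR(\kappa_{b}^{+}\wedge\theta_{\lambda})}\mathbf{1}_{\{\tau_{\xi}<\tau_{b}^{+}\}}\big).
\end{align*}
On $\{\tau_{b}^{+}<\tau_{\xi}\}$ the paths of $U$ and $X$ coincide up to $\kappa_{b}^{+}=\tau_{b}^{+}$, so that $U(\kappa_{b}^{+})=b$ and $R(\kappa_{b}^{+})=0$; hence the first term equals $\mathrm{e}^{ub}\,\mathbb{E}_{x}(\mathrm{e}^{-q\tau_{b}^{+}}\mathbf{1}_{\{\tau_{b}^{+}<\tau_{\xi}\}})$, which is explicit through \eqref{part1}.

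For the second term I would condition on $\mathcal{F}_{\tau_\xi}$ and apply the strong Markov property of $(U,\overline{U})$. On $\{\tau_\xi<\tau_b^+\}$ the process starts its first reflected excursion below the record $\overline{X}(\tau_\xi)$ at the draw-down level $\xi(\overline{X}(\tau_\xi))$; write $z:=\overline{\xi}(\overline{X}(\tau_\xi))$ for the height the excursion must climb to regain the record. Splitting on the Parisian clock, on $\{\sigma_z^+\le e_\lambda\}$ the excursion returns to the record before ruin, so $T_1<\kappa_b^+$ and the problem regenerates from $\overline{X}(\tau_\xi)$, whereas on $\{e_\lambda<\sigma_z^+\}$ ruin is declared at $\theta_\lambda=\tau_\xi+e_\lambda$. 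In the regeneration case the strong Markov property produces a factor $\mathbb{E}(\mathrm{e}^{-(q+\lambda)\sigma_z^+-v\,(\text{capital over the excursion})})=1/Z_{q+\lambda}(z,v)$, extending \eqref{two.sid.exit.Y} to carry the regulator through the weight $v$ (with $Z_{q+\lambda}(0,v)=1$), multiplied by the restarted value $G_\xi(\overline{X}(\tau_\xi);b)$. In the ruin case it produces the resolvent-type excursion quantity $\lambda\,\mathbb{E}\big[\int_0^{\sigma_z^+}\mathrm{e}^{-(q+\lambda)t+uU(t)-vR(t)}\,\mathrm{d}t\big]$, which is exactly what $\hbar(z)$ evaluates.

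Averaging the two contributions over $\mathcal{F}_{\tau_\xi}$, the instantaneous capital $\xi(\overline{X}(\tau_\xi))-X(\tau_\xi)$ injected at $\tau_\xi$ supplies a weight $\mathrm{e}^{vX(\tau_\xi)-v\xi(\overline{X}(\tau_\xi))}$, which I would absorb by applying the general identity \eqref{10} of Lemma~\ref{lemma2} with $\theta=v$: taking $\phi(s)=\mathrm{e}^{(u-v)\xi(s)}\hbar(\overline{\xi}(s))$ for the ruin case and $\phi(s)=\mathrm{e}^{-v\xi(s)}G_\xi(s;b)/Z_{q+\lambda}(\overline{\xi}(s),v)$ for the regeneration case. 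This yields the renewal equation
\begin{align*}
G_\xi(x;b)&=\mathrm{e}^{ub}\exp\!\Big(-\int_x^b\frac{W_q'(\overline{\xi}(z))}{W_q(\overline{\xi}(z))}\,\mathrm{d}z\Big)\\
&\quad+\int_x^b\Big(\mathrm{e}^{u\xi(s)}\hbar(\overline{\xi}(s))+\frac{G_\xi(s;b)}{Z_{q+\lambda}(\overline{\xi}(s),v)}\Big)\exp\!\Big(-\int_x^s\frac{W_q'(\overline{\xi}(z))}{W_q(\overline{\xi}(z))}\,\mathrm{d}z\Big)\mathcal{D}(s)\,\mathrm{d}s,
\end{align*}
where $\mathcal{D}(s)=\frac{W_q'(\overline{\xi}(s))}{W_q(\overline{\xi}(s))}Z_q(\overline{\xi}(s),v)-vZ_q(\overline{\xi}(s),v)-(q-\psi(v))W_q(\overline{\xi}(s))$ is the bracket of \eqref{10} at $\theta=v$. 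Differentiating in $x$ with the Leibniz rule removes the integral and gives the linear first-order ODE $G_\xi'(x;b)=\ell_2(x)G_\xi(x;b)+\overline{\ell}_2(x)$, since $\frac{W_q'(\overline{\xi}(x))}{W_q(\overline{\xi}(x))}-\mathcal{D}(x)/Z_{q+\lambda}(\overline{\xi}(x),v)=\ell_2(x)$ and $-\mathrm{e}^{u\xi(x)}\hbar(\overline{\xi}(x))\mathcal{D}(x)=\overline{\ell}_2(x)$; solving this against the boundary value $G_\xi(b;b)=\mathrm{e}^{ub}$ produces the stated formula.

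The main obstacle is the evaluation of $\hbar$, i.e. of the reflected resolvent $\lambda\,\mathbb{E}[\int_0^{\sigma_z^+}\mathrm{e}^{-(q+\lambda)t+uU(t)-vR(t)}\mathrm{d}t]$ weighted jointly by position ($u$) and accumulated capital ($v$). I would compute it from the reflected resolvent density \eqref{reso.meas.Y} combined with an Esscher change of measure $\mathbb{P}^{\Phi(q+\lambda)}$ absorbing the weight $\mathrm{e}^{uU}$; this is what generates the prefactor $\lambda/(\Phi_{q+\lambda}-u)$ and forces the integrability condition $u\in(0,\Phi_{q+\lambda})$, the kernels $W_{q+\lambda}'(\cdot)-\Phi_{q+\lambda}W_{q+\lambda}(\cdot)$ being the tilted scale densities and the weight $\mathrm{e}^{(v-u)(y\wedge 0)}$ recording the additional capital charged along the part of the path pushed below the draw-down level. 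Carrying the undershoot at $\tau_\xi$ consistently into this excursion calculation through \eqref{10}, so that the two appearances of the capital weight $v$ (inside $\mathcal{D}$ and inside $\hbar$) reassemble into precisely $\ell_2$ and $\overline{\ell}_2$, is the most delicate bookkeeping and where I expect the bulk of the work to lie.
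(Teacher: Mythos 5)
Your overall architecture coincides with the paper's: the same three-way decomposition at $\tau_\xi$ (ruin during the first excursion, regeneration at $T_1$, direct passage to $b$), the same use of the identity $\mathbb{E}\big(\mathrm{e}^{v\underline{X}(\sigma_z^+)-(q+\lambda)\sigma_z^+}\big)=1/Z_{q+\lambda}(z,v)$ for the regeneration factor, the same application of \eqref{10} with $\theta=v$ and the two choices of $\phi$ to produce the renewal equation, and the same differentiation/ODE step with boundary value $G_\xi(b;b)=\mathrm{e}^{ub}$. All of that is correct and matches \eqref{add.1}--\eqref{G.dif.}.

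The gap is in your plan for evaluating $\hbar$, which you yourself flag as the main obstacle. The quantity to compute is $\lambda\,\mathbb{E}\int_0^{\sigma_z^+}\mathrm{e}^{-(q+\lambda)t}\,\mathrm{e}^{uY(t)+v(\underline{X}(t)\wedge 0)}\,\mathrm{d}t$, a functional of the \emph{joint} law of the reflected position $Y(t)$ and the accumulated regulator $-(\underline{X}(t)\wedge 0)$, killed at $\sigma_z^+$. The marginal resolvent density \eqref{reso.meas.Y} carries no information about the regulator, and an Esscher tilt with parameter $u$ leaves behind the residual weight $\mathrm{e}^{(v-u)(\underline{X}(t)\wedge 0)}$, so for $v\neq u$ the combination of \eqref{reso.meas.Y} and a change of measure cannot produce $\hbar$. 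The paper's route is different and is the missing idea: write $\mathbf{1}_{\{\sigma_z^+>e_\lambda\}}=1-\mathbf{1}_{\{\sigma_z^+<e_\lambda\}}$, apply the strong Markov property at $\sigma_z^+$ (using that $Y$ creeps, so $Y(\sigma_z^+)=z$, and that the post-$\sigma_z^+$ infimum contributes only through $\mathrm{e}^{v\underline{X}(\sigma_z^+)}$), and thereby reduce everything to the \emph{unrestricted} expectation $\mathbb{E}_z\big(\mathrm{e}^{uX(e_{q+\lambda})+(v-u)(\underline{X}(e_{q+\lambda})\wedge 0)}\big)$, which is then evaluated from the explicit bivariate density of $\big(X(e_{q+\lambda}),\underline{X}(e_{q+\lambda})\big)$ in Lemma 1 of Bertoin (1997); the kernels $W_{q+\lambda}'(\cdot)-\Phi_{q+\lambda}W_{q+\lambda}(\cdot)$, the atom $W_{q+\lambda}(0+)$, and the prefactor $1/(\Phi_{q+\lambda}-u)$ (hence the hypothesis $u<\Phi_{q+\lambda}$) all come from integrating that density, not from any reflected resolvent. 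You correctly anticipate the shape of the answer, but without the complement-plus-strong-Markov reduction and the bivariate infimum law your computation of $\hbar$ would not go through as described.
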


\begin{proof}[Proof:]\,\,\,
Given $q, u,v\in(0,\infty)$ and $x\in(-\infty, b]$, applying the Markov property of the process $(U,\overline{U})$ we have
\begin{eqnarray}\label{add.1}
\hspace{-0.3cm}&&\hspace{-0.3cm}
G_\xi(x;b):=
\mathbb{E}_{x}\left(\mathrm{e}^{-q \left(\kappa_{b}^{+}\wedge \theta_{\lambda}\right)+u U(\kappa_{b}^{+}\wedge \theta_{\lambda})-v R\left(\kappa_{b}^{+}\wedge \theta_{\lambda}\right)}\right)
\nonumber\\
\hspace{-0.3cm}&=&\hspace{-0.3cm}
\mathbb{E}_{x}\left(\mathbb{E}_{x}\left(\left.\mathrm{e}^{-q \,(\tau_{\xi}+e_{\lambda}^{(1)})+u U(\tau_{\xi}+e_{\lambda}^{(1)})-v R(\tau_{\xi}+e_{\lambda}^{(1)})}\mathbf{1}_{\{\tau_{\xi}<\kappa_{b}^{+},\,T_{1}-\tau_{\xi}>e_{\lambda}^{(1)}\}}\right|\mathcal{F}_{\tau_{\xi}}
\right)\right)
\nonumber\\
\hspace{-0.3cm}&&\hspace{-0.3cm}
+
\mathbb{E}_{x}\left(\mathbb{E}_{x}\left(\left.\mathrm{e}^{-q \left(\kappa_{b}^{+}\wedge \theta_{\lambda}\right)+u U\left(\kappa_{b}^{+}\wedge \theta_{\lambda}\right)-v R\left(\kappa_{b}^{+}\wedge \theta_{\lambda}\right)}\mathbf{1}_{\{\tau_{\xi}<\kappa_{b}^{+},\,T_{1}-\tau_{\xi}\leq e_{\lambda}^{(1)}\}}\right|\mathcal{F}_{T_{1}}
\right)\right)
\nonumber\\
\hspace{-0.3cm}&&\hspace{-0.3cm}
+
\mathbb{E}_{x}\left(\mathrm{e}^{-q \tau_{b}^{+}+u X(\tau_{b}^{+})-v R(\tau_{b}^{+})}\mathbf{1}_{\{\tau_{b}^{+}<\tau_{\xi}\}}\right)
\nonumber\\
\hspace{-0.3cm}&=&\hspace{-0.3cm}
\mathbb{E}_{x}\left(
\mathrm{e}^{-q \tau_{\xi}+(u-v)\xi(\overline{X}(\tau_{\xi}))+vX(\tau_{\xi})}\mathbf{1}_{\{\tau_{\xi}<\tau_{b}^{+}\}}
\left(\left.\mathbb{E}_{}\left(\mathrm{e}^{-q e_{\lambda}+u Y(e_{\lambda})+v(\underline{X}(e_{\lambda}))}\mathbf{1}_{\{\sigma_{z}^{+}>e_{\lambda}\}}
\right)\right|_{z=\overline{\xi}(\overline{X}(\tau_{\xi}))}\right)\right)
\nonumber\\
\hspace{-0.3cm}&&\hspace{-0.3cm}
+
\mathbb{E}_{x}\left(\mathrm{e}^{-q \tau_{\xi}}\mathbf{1}_{\{\tau_{\xi}<\kappa_{b}^{+}\}}\mathrm{e}^{-v(\xi(\overline{X}(\tau_{\xi}))-X(\tau_{\xi}))}
\,G_{\xi}(\overline{X}(\tau_{\xi});b)\,\mathbb{E}_{x}\left(\left.
\mathrm{e}^{-v (R(T_{1})-R(\tau_{\xi}))-(q+\lambda) (T_{1}-\tau_{\xi})}\right|\mathcal{F}_{\tau_{\xi}}\right)\right)
\nonumber\\
\hspace{-0.3cm}&&\hspace{-0.3cm}
+
\mathbb{E}_{x}\left(\mathrm{e}^{-q \tau_{b}^{+}+u X(\tau_{b}^{+})-v R(\tau_{b}^{+})}\mathbf{1}_{\{\tau_{b}^{+}<\tau_{\xi}\}}\right)
\nonumber\\
\hspace{-0.3cm}&=&\hspace{-0.3cm}
\mathbb{E}_{x}\left(
\mathrm{e}^{-q \tau_{\xi}+(u-v)\xi(\overline{X}(\tau_{\xi}))+vX(\tau_{\xi})}\mathbf{1}_{\{\tau_{\xi}<\tau_{b}^{+}\}}
\left(\left.\mathbb{E}_{}\left(\mathrm{e}^{-q e_{\lambda}+u Y(e_{\lambda})+v(\underline{X}(e_{\lambda}))}\mathbf{1}_{\{\sigma_{z}^{+}>e_{\lambda}\}}
\right)\right|_{z=\overline{\xi}(\overline{X}(\tau_{\xi}))}\right)\right)
\nonumber\\
\hspace{-0.3cm}&&\hspace{-0.3cm}
+
\mathbb{E}_{x}\left(\mathrm{e}^{-q \tau_{\xi}}\mathbf{1}_{\{\tau_{\xi}<\kappa_{b}^{+}\}}\mathrm{e}^{vX(\tau_{\xi})}
\mathrm{e}^{-v\xi(\overline{X}(\tau_{\xi}))}
\,G_{\xi}(\overline{X}(\tau_{\xi});b)\left(\left.\mathbb{E}_{}\left(
\mathrm{e}^{v(\underline{X}(\sigma_{z}^{+}))-(q+\lambda) \sigma_{z}^{+}}\right)\right|_{z=\overline{\xi}(\overline{X}(\tau_{\xi}))}\right)\right)
\nonumber\\
\hspace{-0.3cm}&&\hspace{-0.3cm}
+
\mathrm{e}^{u b}\,\mathbb{E}_{x}\left(\mathrm{e}^{-q \tau_{b}^{+}}\mathbf{1}_{\{\tau_{b}^{+}<\tau_{\xi}\}}\right).
\end{eqnarray}
In addition,
\begin{eqnarray}
\label{w36}
\hbar(z)\hspace{-0.3cm}&:=&\hspace{-0.3cm}\mathbb{E}_{}\left(\mathrm{e}^{-q e_{\lambda}+u Y(e_{\lambda})+v(\underline{X}(e_{\lambda})\wedge 0)}\mathbf{1}_{\{\sigma_{z}^{+}>e_{\lambda}\}}
\right)
\nonumber\\
\hspace{-0.3cm}&=&\hspace{-0.3cm}\mathbb{E}_{}\left(\mathrm{e}^{-q e_{\lambda}+u Y(e_{\lambda})+v(\underline{X}(e_{\lambda})\wedge 0)}\right)
-\mathbb{E}_{}\left(\mathrm{e}^{-q e_{\lambda}+u Y(e_{\lambda})+v(\underline{X}(e_{\lambda})\wedge 0)}\mathbf{1}_{\{\sigma_{z}^{+}<e_{\lambda}\}}\right)
\nonumber\\
\hspace{-0.3cm}&=&\hspace{-0.3cm}
\lambda\,\mathbb{E}_{}\left(
\int_{0}^{\infty}\mathrm{e}^{-(q+\lambda) t}\,\mathrm{e}^{u Y(t)+v(\underline{X}(t)\wedge 0)}
\mathrm{d}t
-\int_{\sigma_{z}^{+}}^{\infty}\mathrm{e}^{-(q+\lambda) t}\,\mathrm{e}^{u Y(t)+v(\underline{X}(t)\wedge 0)}
\mathrm{d}t\right)
\nonumber\\
\hspace{-0.3cm}&=&\hspace{-0.3cm}
\frac{\lambda}{q+\lambda}\mathbb{E}_{}\left(
\mathrm{e}^{u Y(e_{q+\lambda})+v(\underline{X}(e_{q+\lambda})\wedge 0)}\right)
\nonumber\\
\hspace{-0.3cm}&&\hspace{-0.3cm}
-
\lambda\,\mathbb{E}_{}\left(\int_{0}^{\infty}
\mathbb{E}_{}\left(\left.\mathrm{e}^{-(q+\lambda) (\sigma_{z}^{+}+t)}\,\mathrm{e}^{u Y(\sigma_{z}^{+}+t)+v(\underline{X}(\sigma_{z}^{+}+t)\wedge 0)}
\right|\mathcal{F}_{\sigma_{z}^{+}}\right)\mathrm{d}t\right)
\nonumber\\
\hspace{-0.3cm}&=&\hspace{-0.3cm}
\frac{\lambda}{q+\lambda}\mathbb{E}_{}\left(
\mathrm{e}^{u Y(e_{q+\lambda})+v(\underline{X}(e_{q+\lambda})\wedge 0)}\right)
\nonumber\\
\hspace{-0.3cm}&&\hspace{-0.3cm}
-\frac{\lambda}{q+\lambda}\mathbb{E}_{}\left(
\mathrm{e}^{-(q+\lambda) \sigma_{z}^{+}+v(\underline{X}(\sigma_{z}^{+}))}
\,\mathbb{E}_{z}\left(\mathrm{e}^{u Y(e_{q+\lambda})+v(\underline{X}(e_{q+\lambda})\wedge 0)}
\right)\right)
\nonumber\\
\hspace{-0.3cm}&=&\hspace{-0.3cm}
\frac{\lambda}{q+\lambda}\mathbb{E}_{}\left(
\mathrm{e}^{u X(e_{q+\lambda})+(v-u)(\underline{X}(e_{q+\lambda})\wedge 0)}\right)
\nonumber\\
\hspace{-0.3cm}&&\hspace{-0.3cm}
-
\frac{\lambda}{q+\lambda}\,\mathbb{E}_{}\left(
\mathrm{e}^{-(q+\lambda) \sigma_{z}^{+}+v(\underline{X}(\sigma_{z}^{+}))}
\right)
\,\mathbb{E}_{z}\left(\mathrm{e}^{u X(e_{q+\lambda})+(v-u) (\underline{X}(e_{q+\lambda})\wedge 0)}
\right),\quad z\in(0,\infty),
\end{eqnarray}
where for the fifth equality we have taken use of
\begin{eqnarray}
\mathbb{E}_{}\left(\left.\mathrm{e}^{u Y(\sigma_{z}^{+}+t)+v(\underline{X}(\sigma_{z}^{+}+t)\wedge 0)}
\right|\mathcal{F}_{\sigma_{z}^{+}}\right)
\hspace{-0.3cm}&=&\hspace{-0.3cm}
\mathbb{E}_{}\left(\left.\mathrm{e}^{(u-v) Y(\sigma_{z}^{+}+t)+v(X(\sigma_{z}^{+}+t)-X(\sigma_{z}^{+})+z+X(\sigma_{z}^{+})-z)}
\right|\mathcal{F}_{\sigma_{z}^{+}}\right)
\nonumber\\
\hspace{-0.3cm}&=&\hspace{-0.3cm}
\mathrm{e}^{v(X(\sigma_{z}^{+})-z)}
\mathbb{E}_{}\left(\left.\mathrm{e}^{(u-v) Y(\sigma_{z}^{+}+t)+v(X(\sigma_{z}^{+}+t)-X(\sigma_{z}^{+})+z)}
\right|\mathcal{F}_{\sigma_{z}^{+}}\right)
\nonumber\\
\hspace{-0.3cm}&=&\hspace{-0.3cm}
\mathrm{e}^{v(Y(\sigma_{z}^{+})+(\underline{X}(\sigma_{z}^{+})\wedge 0)-z)}
\mathbb{E}_{z}\left(\mathrm{e}^{(u-v) Y(t)+vX(t)}\right)
\nonumber\\
\hspace{-0.3cm}&=&\hspace{-0.3cm}
\mathrm{e}^{v(\underline{X}(\sigma_{z}^{+})\wedge 0)}
\mathbb{E}_{z}\left(\mathrm{e}^{u Y(t)+v(\underline{X}(t)\wedge 0)}\right),
\nonumber
\end{eqnarray}
which holds true since
\begin{eqnarray}
Y(\sigma_{z}^{+}+t)
\hspace{-0.3cm}&=&\hspace{-0.3cm}
X(\sigma_{z}^{+}+t)-X(\sigma_{z}^{+})-\inf_{s\leq t}(X(\sigma_{z}^{+}+s)-X(\sigma_{z}^{+}))
\wedge (-Y(\sigma_{z}^{+}))
\nonumber\\
\hspace{-0.3cm}&=&\hspace{-0.3cm}
X(\sigma_{z}^{+}+t)-X(\sigma_{z}^{+})+z-\inf_{s\leq t}(X(\sigma_{z}^{+}+s)-X(\sigma_{z}^{+})+z)
\wedge 0
\nonumber\\
\hspace{-0.3cm}&:=&\hspace{-0.3cm}
\widetilde{X}(t)-\underline{\widetilde{X}}(t)\wedge 0,
\nonumber
\end{eqnarray}
where $\underline{\widetilde{X}}$ denotes the running infimum process of the process $\widetilde{X}=\{X(\sigma_{z}^{+}+t)-X(\sigma_{z}^{+})+z; t\geq 0\}$ which starts from $z\in(0,\infty)$, is independent of $\mathcal{F}_{\sigma_{z}^{+}}$ and is identical in law to $(X, \mathbb{P}_{z})$.

By adapting (24) in Albrecher et al. (2016) one has
\begin{eqnarray}
\label{w37}
\mathbb{E}_{}\left(
\mathrm{e}^{-(q+\lambda) \sigma_{z}^{+}+v(\underline{X}(\sigma_{z}^{+}))}
\right)
\hspace{-0.3cm}&=&\hspace{-0.3cm}
\mathbb{E}_{}\left(
\mathrm{e}^{v(\underline{X}(\sigma_{z}^{+}))}
\mathbf{1}_{\{\sigma_{z}^{+}<e_{q+\lambda}\}}
\right)
\nonumber\\
\hspace{-0.3cm}&=&\hspace{-0.3cm}
\frac{1}{Z_{q+\lambda}(z,v)},\quad z\in(0,\infty).
\end{eqnarray}
In addition, by Lemma 1 of Bertoin (1997) with minor adaptation one has
\begin{eqnarray}
\hspace{-0.3cm}&&\hspace{-0.3cm}
\mathbb{P}_{z}\left(X(e_{q+\lambda})\in \mathrm{d}x,
\underline{X}(e_{q+\lambda})\geq y\right)
\nonumber\\
\hspace{-0.3cm}&=&\hspace{-0.3cm}
(q+\lambda)\left(\mathrm{e}^{-\Phi_{q+\lambda}(x-y)}W_{q+\lambda}(z-y)-\mathbf{1}_{\{z\geq x\}}W_{q+\lambda}(z-x)\right)\mathrm{d}x,\quad y\in(-\infty, x\wedge z],\, z\in(0,\infty),\nonumber
\end{eqnarray}
which implies that
\begin{eqnarray}
\label{w38}
\hspace{-0.3cm}&&\hspace{-0.3cm}
\mathbb{P}_{z}\left(X(e_{q+\lambda})\in \mathrm{d}x,
\underline{X}(e_{q+\lambda})\in \mathrm{d}y\right)
\nonumber\\
\hspace{-0.3cm}&=&\hspace{-0.3cm}
(q+\lambda)\,\mathrm{e}^{-\Phi_{q+\lambda}(x-y)}\left(W_{q+\lambda}^{\prime}(z-y)-\Phi_{q+\lambda}W_{q+\lambda}(z-y)\right)
\mathbf{1}_{\{y<z\}}\mathbf{1}_{\{y\leq x\}}\mathrm{d}x\mathrm{d}y
\nonumber\\
\hspace{-0.3cm}&&\hspace{-0.3cm}
+(q+\lambda)\,\mathrm{e}^{-\Phi_{q+\lambda}(x-z)}
W_{q+\lambda}(0+)\mathbf{1}_{\{z\leq x\}}\delta_{z}(\mathrm{d}y) \mathrm{d}x,\quad y\in(-\infty, z],\,x\in [y,\infty),\, z\in(0,\infty).
\end{eqnarray}
Combining \eqref{w36}, \eqref{w37} and \eqref{w38}, we have for $z\in(0,\infty)$
\begin{eqnarray}
\hbar(z)
\hspace{-0.3cm}&=&\hspace{-0.3cm}
\lambda
\left(W_{q+\lambda}(0+)\int_{0}^{\infty}
\mathrm{e}^{u x}
\mathrm{e}^{-\Phi_{q+\lambda}x} \mathrm{d}x
\right.
\nonumber\\
\hspace{-0.3cm}&&\hspace{-0.3cm}
\left.+\int_{y<0}\int_{x\geq y}\mathrm{e}^{u x+(v-u) (y\wedge 0)}\mathrm{e}^{-\Phi_{q+\lambda}(x-y)}\left(W_{q+\lambda}^{\prime}(-y)-\Phi_{q+\lambda}W_{q+\lambda}(-y)\right)
\mathrm{d}x\mathrm{d}y\right)
\nonumber\\
\hspace{-0.3cm}&&\hspace{-0.3cm}
-\frac{\lambda}{Z_{q+\lambda}(z,v)}
\left(W_{q+\lambda}(0+)\int_{z}^{\infty}
\mathrm{e}^{u x}
\mathrm{e}^{-\Phi_{q+\lambda}(x-z)} \mathrm{d}x
\right.
\nonumber\\
\hspace{-0.3cm}&&\hspace{-0.3cm}
\left.+\int_{y<z}\int_{x\geq y}\mathrm{e}^{u x+(v-u) (y\wedge 0)}\mathrm{e}^{-\Phi_{q+\lambda}(x-y)}\left(W_{q+\lambda}^{\prime}(z-y)-\Phi_{q+\lambda}W_{q+\lambda}(z-y)\right)
\mathrm{d}x\mathrm{d}y\right)
.\nonumber
\end{eqnarray}
By \eqref{10}, \eqref{w36} and \eqref{w37} one can rewrite \eqref{add.1} as
\begin{eqnarray}
\label{G.int.}
G_\xi(x;b)
\hspace{-0.3cm}&=&\hspace{-0.3cm}
\mathbb{E}_{x}\left(
\mathrm{e}^{-q \tau_{\xi}+vX(\tau_{\xi})}
\mathrm{e}^{(u-v)\xi(\overline{X}(\tau_{\xi}))}
\hbar(\overline{\xi}(\overline{X}(\tau_{\xi})))
\mathbf{1}_{\{\tau_{\xi}<\tau_{b}^{+}\}}\right)
\nonumber\\
\hspace{-0.3cm}&&\hspace{-0.3cm}
+
\mathbb{E}_{x}\left(\mathrm{e}^{-q \tau_{\xi}+vX(\tau_{\xi})}
\frac{\mathrm{e}^{-v\xi(\overline{X}(\tau_{\xi}))}G_{\xi}(\overline{X}(\tau_{\xi});b)}{Z_{q+\lambda}(\overline{\xi}(\overline{X}(\tau_{\xi})),v)}
\mathbf{1}_{\{\tau_{\xi}<\kappa_{b}^{+}\}}
\right)
\nonumber\\
\hspace{-0.3cm}&&\hspace{-0.3cm}
+
\mathrm{e}^{u b}\,\mathbb{E}_{x}\left(\mathrm{e}^{-q \tau_{b}^{+}}\mathbf{1}_{\{\tau_{b}^{+}<\tau_{\xi}\}}\right)
\nonumber\\
\hspace{-0.3cm}&=&\hspace{-0.3cm}
\int_{x}^{b}\left(\mathrm{e}^{u\xi(s)}
\hbar(\overline{\xi}(s))+
\frac{G_{\xi}(s;b)}{Z_{q+\lambda}(\overline{\xi}(s),v)}
\right)
\exp\left(-\int_{x}^{s}\frac{W_{q}^{\prime}(\overline{\xi}\left(z\right))}
{W_{q}(\overline{\xi}\left(z\right))}\mathrm{d}z\right)
\nonumber\\
\hspace{-0.3cm}&&\hspace{-0.3cm}
\times\left(\frac{W_{q}^{\prime}(\overline{\xi}(s))}{W_{q}(\overline{\xi}(s))}Z_{q}(\overline{\xi}(s),v)-v  Z_{q}(\overline{\xi}(s),v)-(q-\psi(v ))W_{q}(\overline{\xi}(s))\right)
\mathrm{d}s
\nonumber\\
\hspace{-0.3cm}&&\hspace{-0.3cm}
+\mathrm{e}^{ub}\exp\left(-\int_{x}^{b}\left(\frac{W_{q}^{\prime}(\overline{\xi}\left(z\right))}
{W_{q}(\overline{\xi}\left(z\right))}
\right)\mathrm{d}z\right).
\end{eqnarray}
Differentiating \eqref{G.int.} with respect to $x$ gives
\begin{eqnarray}
\label{G.dif.}
G_{\xi}^{\prime}(x,b)
\hspace{-0.3cm}&=&\hspace{-0.3cm}
\ell_{2}(x)G_{\xi}(x,b)+\overline{\ell}_{2}(x).
\end{eqnarray}
Solving (\ref{G.dif.}) with boundary condition $G_{\xi}(b;b)=\mathrm{e}^{ub}$, we obtain (\ref{lap.of.exp.tot.dis.cap.inj.}).
\end{proof}

\medskip
We then obtain an expression of the resolvent density for the process $U$.

\medskip
\begin{thm}\label{3.2}
\label{reso.meas.U.}
For $q\in (0,\infty)$, the resolvent measure of $U$ is absolutely continuous with respect to the Lebesgue measure with  density given by
\begin{eqnarray}
\label{resovent.meas.}
\hspace{-0.3cm}&&\hspace{-0.3cm}
\int_{0}^{\infty}\mathrm{e}^{-qt}\mathbb{P}_{x}\left(U(t)\in \mathrm{d}u,t<\kappa_{b}^{+}\wedge \theta_{\lambda}\right)\mathrm{d}t
\nonumber\\
\hspace{-0.3cm}&=&\hspace{-0.3cm}
W_{q}(0)
\exp\left(-\int_{x}^{y}
\ell_{1}(w)
\mathrm{d}w\right)
\mathbf{1}_{(x,b)}(u)\mathrm{d}u
\nonumber\\
\hspace{-0.3cm}&&\hspace{-0.3cm}
+\int_{x}^{b}
\exp\left(-\int_{x}^{y}
\ell_{1}(w)
\mathrm{d}w\right)
\ell_{3}(y,u)
\mathbf{1}_{(\xi(y),y)}(u)\mathrm{d}y  \mathrm{d}u,
\quad x,u\in(-\infty, b],
\end{eqnarray}
where $\ell_{1}$ is defined as in Theorem \ref{3.1}, and
\begin{eqnarray}
\ell_{3}(y,u)\hspace{-0.3cm}&=&\hspace{-0.3cm}\left(\frac{W_{q}^{\prime}(\overline{\xi}(y))}{W_{q}(\overline{\xi}(y))}Z_{q}(\overline{\xi}(y))-qW_{q}(\overline{\xi}(y))\right)
\frac{Z_{\lambda}(u-\xi(y))
W_{q}(y-u)}
{Z_{\lambda}(\overline{\xi}(y))Z_{q+\lambda}(\overline{\xi}(y))}
\nonumber\\
\hspace{-0.3cm}&&\hspace{-0.3cm}
+
W_{q}^{\prime}(y-u)-\frac{W_{q}^{\prime}(\overline{\xi}(y))}{W_{q}(\overline{\xi}(y))}W_{q}(y-u).\nonumber
\end{eqnarray}
\end{thm}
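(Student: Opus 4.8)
The plan is to mimic the strategy of Proposition \ref{3.1} and Theorem \ref{3.4}: condition on the first draw-down time $\tau_{\xi}$ of $U$ started from its running maximum, use the Markovian property of $(U,\overline{U})$ to obtain an integral equation for the density, and then reduce it to a first-order linear ODE in $x$ solved with an integrating factor. Write $r_{\xi}(x,u)$ for the Lebesgue density on the left-hand side of (\ref{resovent.meas.}), i.e. $\int_{0}^{\infty}\mathrm{e}^{-qt}\mathbb{P}_{x}(U(t)\in\mathrm{d}u,\,t<\kappa_{b}^{+}\wedge\theta_{\lambda})\,\mathrm{d}t=r_{\xi}(x,u)\,\mathrm{d}u$. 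First I would split the occupation according to whether the epoch $t$ falls before the first draw-down time (where $U$ coincides with $X$ and the Parisian clock is off) or after it (where $U$ is reflected at the current draw-down level and the clock is running); these are precisely the two pieces appearing in (\ref{resovent.meas.}).

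For the pre-draw-down contribution, on $\{t<\tau_{\xi}\wedge\tau_{b}^{+}\}$ the process $U$ is the free process $X$ watched below its own running maximum and above the curve $\xi$. Conditioning on the running maximum taking the value $y$, this occupation is read off from the two-sided resolvent (\ref{h2}) of $X$: the value $u=y$ (creeping at a fresh record) produces the term carrying $W_{q}(0)$, while $u\in(\xi(y),y)$ produces the excursion density $W_{q}^{\prime}(y-u)-\frac{W_{q}^{\prime}(\overline{\xi}(y))}{W_{q}(\overline{\xi}(y))}W_{q}(y-u)$, i.e. the clock-off part of $\ell_{3}(y,u)$. For the post-draw-down contribution I would use the strong Markov property at $\tau_{\xi}$: on $\{\tau_{\xi}<\tau_{b}^{+}\}$, with $s=\overline{X}(\tau_{\xi})$ and $z=\overline{\xi}(s)$, the reflected phase is the process $Y$ started at $0$, run until $\sigma_{z}^{+}$ (return to the record) or until the independent clock $e_{\lambda}$ rings. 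Its discounted occupation at $u$ before the clock, together with the non-fatal return factor $\mathbb{E}(\mathrm{e}^{-(q+\lambda)\sigma_{z}^{+}})=1/Z_{q+\lambda}(z)$ coming from (\ref{two.sid.exit.Y}), should yield the clock-on part of $\ell_{3}$ and the recursive term $r_{\xi}(s,u)/Z_{q+\lambda}(\overline{\xi}(s))$. Integrating these against the law of $(\tau_{\xi},\overline{X}(\tau_{\xi}))$ via Lemma \ref{lemma2}, in particular (\ref{12}), produces an integral equation for $r_{\xi}(\cdot,u)$ whose kernel is exactly the one appearing in (\ref{f(x)}) and (\ref{G.int.}).

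Finally I would differentiate the integral equation in $x$ by the Leibniz rule; the recursive structure should collapse, exactly as in (\ref{f'(x)}) and (\ref{G.dif.}), to the first-order linear equation $\partial_{x}r_{\xi}(x,u)=\ell_{1}(x)\,r_{\xi}(x,u)-\big(\text{source at }y=x\big)$, with $\ell_{1}$ the same effective rate as in Proposition \ref{3.1}; solving it with the integrating factor $\exp(-\int_{x}^{\cdot}\ell_{1})$ and the boundary condition $r_{\xi}(b,u)=0$ (no occupation once $U$ already sits at $b$) delivers (\ref{resovent.meas.}). The main obstacle I anticipate is the explicit evaluation of the reflected, clock-on occupation: a naive $(q+\lambda)$-discounted resolvent of $Y$ obtained from (\ref{reso.meas.Y}) must be reconciled, through the renewal over successive reflected phases at a fixed record level and the accompanying Parisian killing, with the stated closed form carrying the factors $Z_{\lambda}(u-\xi(y))$ and $Z_{q+\lambda}(\overline{\xi}(y))$ in $\ell_{3}$. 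Getting this term right --- while keeping careful track of the clock that is reset at each fresh draw-down and of the creeping behaviour encoded by $W_{q}(0)$ in the bounded-variation case --- is where essentially all of the work lies; once $\ell_{3}$ is correctly identified, the assembly and the ODE solution are routine.
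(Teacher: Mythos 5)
Your outline reproduces the paper's architecture step for step: split the occupation measure into the pre-draw-down and post-draw-down parts, separate the time spent at the running maximum (the $W_q(0)$ atom) from the time spent strictly below it, use the strong Markov property at $\tau_{\xi}$ together with (\ref{12}) to obtain a renewal-type integral equation, and collapse it by differentiation in $x$ to a first-order linear ODE with rate $\ell_{1}$ solved under the terminal condition at $b$. That is exactly the paper's plan. However, the two steps you defer are where essentially all of the content of the proof lives, and as written they are genuine gaps rather than routine verifications.

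First, the pre-draw-down occupation below the running maximum cannot be ``read off from the two-sided resolvent (\ref{h2})'' by conditioning on the running maximum taking the value $y$: the lower barrier $\xi(\overline{X}(t))$ moves with the maximum, so the fixed-interval identity (\ref{h2}) does not apply directly, and the survival factor $\exp(-\int_x^y W_q'(\overline{\xi})/W_q(\overline{\xi}))$ for past excursions has to be produced by some mechanism. The paper supplies it via the compensation formula for the Poisson point process of excursions away from the maximum, as in (\ref{qg1.raw.}), and then identifies the required excursion-measure functional $\int_0^\infty n\big(\mathrm{e}^{-qs}h(b-\varepsilon(s))\mathbf{1}_{\{s<\zeta\wedge\rho^{+}_{b-c}\}}\big)\mathrm{d}s$ by equating (\ref{h1}) with (\ref{h2}) and differentiating in $b$, which gives (\ref{impo.iden.for.n.01}). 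Second --- and you correctly flag this as the crux --- the clock-on occupation of the reflected phase is not the $(q+\lambda)$-resolvent (\ref{reso.meas.Y}) of $Y$ alone: the event being weighted is $\{e_q<\sigma^{+}_{a}<e_{\lambda}\}$, and the paper evaluates it through the conditioning identity
$\mathbb{E}\big(h(Y(e_q))\mathbf{1}_{\{e_q<\sigma^{+}_{a}<e_{\lambda}\}}\big)=q\int_0^\infty\mathrm{e}^{-(q+\lambda)t}\,\mathbb{E}\big(\tfrac{Z_{\lambda}(Y(t))}{Z_{\lambda}(a)}h(Y(t))\mathbf{1}_{\{t<\sigma^{+}_{a}\}}\big)\mathrm{d}t$,
where $\mathbb{E}(\mathrm{e}^{-\lambda(\sigma^{+}_{a}-t)}\mid\mathcal{F}_t)=Z_{\lambda}(Y(t))/Z_{\lambda}(a)$ comes from (\ref{two.sid.exit.Y}); this single identity is the origin of the factor $Z_{\lambda}(u-\xi(y))/\big(Z_{\lambda}(\overline{\xi}(y))Z_{q+\lambda}(\overline{\xi}(y))\big)$ in $\ell_{3}$. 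Since you acknowledge not having produced this term, and the first step is asserted by an argument that does not work as stated, the proposal is a correct road map that matches the paper's route, but it is not yet a proof.
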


\begin{proof}[Proof:]\,\,\,
Recall $\overline{U}(t)=\sup_{s\in[0,t]}U(s)$ and let $e_{q}$ be an exponential  random variable independent of $X$.
For $q>0$, $x\leq b$ and any continuous, non-negative and bounded function $h$, let
\begin{eqnarray}
\label{gene.reso.meas.}
\hspace{-0.3cm}&&\hspace{-0.3cm}qg(x):=\int_{0}^{\infty}q \mathrm{e}^{-qt}\mathbb{E}_{x}\left(h(U(t));t<\kappa_{b}^{+}\wedge \theta_{\lambda}\right)\mathrm{d}t
\nonumber\\
\hspace{-0.3cm}&=&\hspace{-0.3cm}
\mathbb{E}_{x}\left(h(X(e_{q}))\mathbf{1}_{\{X(e_{q})<\overline{X}(e_{q}),\,e_{q}<\tau_{b}^{+}\wedge \tau_{\xi}\}}\right)
+\mathbb{E}_{x}\left(h(U(e_{q}))\mathbf{1}_{\{U(e_{q})<\overline{U}(e_{q}),\tau_{\xi}<e_{q}<\kappa_{b}^{+}\wedge \theta_{\lambda}\}}\right)
\nonumber\\
\hspace{-0.3cm}&&\hspace{-0.3cm}
+\mathbb{E}_{x}\left(\int_{0}^{\infty}q \mathrm{e}^{-qt}h(X(t))\mathbf{1}_{\{X(t)=\overline{X}(t),\,t<\tau_{b}^{+}\wedge \tau_{\xi}\}}\mathrm{d}t\right)
+\mathbb{E}_{x}\left(h(U(e_{q}))\mathbf{1}_{\{U(e_{q})=\overline{U}(e_{q}),\tau_{\xi}<e_{q}<\kappa_{b}^{+}\wedge \theta_{\lambda}\}}\right)
\nonumber\\
\hspace{-0.3cm}&=:&\hspace{-0.3cm}\,
qg_{1}(x)+qg_{2}(x)+qg_{3}(x)+qg_{4}(x).\nonumber
\end{eqnarray}

Note that $\int_{0}^{t}\mathbf{1}_{\{X(s)=\overline{X}(s)\}}\mathrm{d}s=W_{q}(0) \,\overline{X}(t)$ under $\mathbb{P}_{0}$; see Chapters IV and VII of Bertoin (1996), the proof of Part (ii) of Theorem 1 in Pistorius (2004) or the first three paragraphs in Section 5 of Li et al. (2019).
By \eqref{part1} we have
\begin{eqnarray}\label{qg3}
qg_{3}(x)\hspace{-0.3cm}&=&\hspace{-0.3cm}\mathbb{E}_{x}\left(\int_{0}^{\infty}q \mathrm{e}^{-qL^{-1}(L(t))}h(X(L^{-1}(L(t))))\mathbf{1}_{\{X(t)=\overline{X}(t),\,L^{-1}(L(t))<\tau_{b}^{+}\wedge \tau_{\xi}\}}\mathrm{d}t\right)
\nonumber\\
\hspace{-0.3cm}&=&\hspace{-0.3cm}
W(0)\mathbb{E}_{x}\left(\int_{0}^{\infty}q \mathrm{e}^{-qL^{-1}(L(t))}h(X(L^{-1}(L(t))))\mathbf{1}_{\{L^{-1}(L(t))< \tau_{b}^{+}\wedge \tau_{\xi}\}}\mathrm{d}L_{t}\right)
\nonumber\\
\hspace{-0.3cm}&=&\hspace{-0.3cm}
q W(0)\int_{0}^{b-x} \mathbb{E}_{x}\left(\mathrm{e}^{-qL^{-1}(t)}\mathbf{1}_{\{L^{-1}(t)< \tau_{\xi}\}}\right)h(x+t)\mathrm{d}t
\nonumber\\
\hspace{-0.3cm}&=&\hspace{-0.3cm}
q W(0)\int_{x}^{b} \exp\left(-\int_{x}^{s}\frac{W_{q}^{\prime}(\overline{\xi}\left(z\right))}
{W_{q}(\overline{\xi}\left(z\right))}\mathrm{d}z\right)h(s)\mathrm{d}s,
\end{eqnarray}
where we have used the fact that $L^{-1}(t)$ has the same law as the first exit time $\tau_{x+t}^{+}$ under $\mathbb{P}_{x}$.

By the strong Markov property of $(U,\overline{U})$, the definition of $T_{1}=\inf\{t\geq 0: U(t)>\overline{X}(\tau_{\xi}) \}$ (i.e., $\tau_{\xi}<T_{1}$ holds implicitly) given in the definition of $U$ in Section \ref{2}, the memoryless property of the exponentially distributed random variable, as well as \eqref{two.sid.exit.Y} and \eqref{12}, one has
\begin{eqnarray}
\label{25}
\hspace{-0.2cm}
qg_{4}(x)
\hspace{-0.3cm}&=&\hspace{-0.3cm}
\mathbb{E}_{x}\left(\left.\mathbb{E}_{x}\left(h(U(e_{q}))\mathbf{1}_{\{U(e_{q})=\overline{U}(e_{q}),\tau_{\xi}<T_{1}\leq e_{q}<\kappa_{b}^{+}\wedge \theta_{\lambda}\}}\right|\mathcal{F}_{T_{1}}\right)\right)
\nonumber\\
\hspace{-0.3cm}&=&\hspace{-0.3cm}
\mathbb{E}_{x}\left(\mathbf{1}_{\{T_{1}<e_{q}\wedge \kappa_{b}^{+}\wedge \theta_{\lambda}\}}
\mathbb{E}_{\overline{X}(\tau_{\xi})}\left(h(U(e_{q}))\mathbf{1}_{\{U(e_{q})=\overline{U}(e_{q}),e_{q}<\kappa_{b}^{+}\wedge \theta_{\lambda}\}}\right)\right)
\nonumber\\
\hspace{-0.3cm}&=&\hspace{-0.3cm}
\mathbb{E}_{x}\left(\mathbb{E}_{x}\left(\left.\mathrm{e}^{-q T_{1}}\mathrm{e}^{-\lambda (T_{1}-\tau_{\xi})}\mathbf{1}_{\{T_{1}< \kappa_{b}^{+}\}}\left(qg_{3}(\overline{X}(\tau_{\xi}))+qg_{4}(\overline{X}(\tau_{\xi}))\right)\right|\mathcal{F}_{\tau_{\xi}}\right)\right)
\nonumber\\
\hspace{-0.3cm}&=&\hspace{-0.3cm}
q\mathbb{E}_{x}\left(\mathrm{e}^{-q\tau_{\xi}}\mathbf{1}_{\{\tau_{\xi}<\kappa_{b}^{+}\}}\frac{1}{Z_{q+\lambda}(\overline{\xi}(\overline{X}(\tau_{\xi})))}\left(g_{3}(\overline{X}(\tau_{\xi}))+g_{4}(\overline{X}(\tau_{\xi}))\right)\right)
\nonumber\\
\hspace{-0.3cm}&=&\hspace{-0.3cm}
q\int_{x}^{b}\frac{g_{3}(s)+g_{4}(s)}{Z_{q+\lambda}(\overline{\xi}(s))}
\exp\left(-\int_{x}^{s}\frac{W_{q}^{\prime}(\overline{\xi}(z))}{ W_{q}(\overline{\xi}(z))}\mathrm{d}z\right)
\nonumber\\
\hspace{-0.3cm}&&\hspace{-0.3cm}
\times\left(\frac{W_{q}^{\prime}(\overline{\xi}(s))}{W_{q}(\overline{\xi}(s))}Z_{q}(\overline{\xi}(s))-qW_{q}(\overline{\xi}(s))\right)\mathrm{d}s,
\end{eqnarray}
where we also used the fact that
$\tau_{\xi}<\kappa_{b}^{+}$ implies $T_{1}<\kappa_{b}^{+}$
(see also \eqref{part2}), and the fact that
$\tau_{\xi}<e_{q}$
combined with
$U(e_{q})=\overline{U}(e_{q})$ implies $T_{1}\leq e_{q}$.

By the compensation formula, the memoryless property for exponential random variable and \eqref{upper.lower.boun.resu.}, $qg_{1}(x)$ can be expressed as
\begin{eqnarray}\label{qg1.raw.}
\hspace{-0.3cm}&&\hspace{-0.3cm}
\mathbb{E}_{x}\left(\int_{0}^{\infty}\sum_{g}\mathrm{e}^{-qg}\prod\limits_{r<g}\mathbf{1}_{\{\overline{\varepsilon}_{r}\leq \overline{\xi}(x+L(r)),\,L(g)\leq b-x\}}\,h\left(x+L(g)-\varepsilon_{g}(t-g)\right)
\right.
\nonumber\\
\hspace{-0.3cm}&&\hspace{0.5cm}
\left.
\times q\mathrm{e}^{-q (t-g)}\mathbf{1}_{\{g<t<g+\zeta_{g}\wedge \rho_{\overline{\xi}(x+L(g))}^{+}(g)\}}\mathrm{d}t\right)
\nonumber\\
\hspace{-0.3cm}&=&\hspace{-0.3cm}
\mathbb{E}_{x}\left(\sum_{g}\mathrm{e}^{-qg}
\prod\limits_{r<g}\mathbf{1}_{\{\overline{\varepsilon}_{r}\leq \overline{\xi}(x+L(r)),\,L(g)\leq b-x\}}\right.
\nonumber\\
\hspace{-0.3cm}&&\hspace{0.5cm}
\left.
\times\int_{0}^{\infty}q \mathrm{e}^{-q s}h\left(x+L(g)-\varepsilon_{g}(s)\right)\mathbf{1}_{\{s<\zeta_{g}\wedge\rho_{\overline{\xi}(x+L(g))}^{+}(g)\}}
\mathrm{d}s\right)
\nonumber\\
\hspace{-0.3cm}&=&\hspace{-0.3cm}
\mathbb{E}_{x}\left(\int_{0}^{\infty}\mathrm{e}^{-qt}\prod\limits_{r<t}\mathbf{1}_{\{\overline{\varepsilon}_{r}\leq \overline{\xi}(x+L(r)),\,L(t)\leq b-x\}}\right.
\nonumber\\
\hspace{-0.3cm}&&\hspace{0.5cm}
\times
\left.\left(\int_{\mathcal{E}}
\int_{0}^{\infty}q \mathrm{e}^{-q s}h\left(x+L(t)-\varepsilon(s)\right)\mathbf{1}_{\{s<\zeta\wedge\rho_{\overline{\xi}(x+L(t))}^{+}\}}
\mathrm{d}s \,n\left(\mathrm{d}\varepsilon\right)\right)\mathrm{d}L(t)\right)
\nonumber\\
\hspace{-0.3cm}&=&\hspace{-0.3cm}
q\int_{0}^{b-x}\mathbb{E}_{x}\left(\mathrm{e}^{-qL_{t-}^{-1}}
\mathbf{1}_{\{L_{t-}^{-1}< \tau_{\xi}\}}\right)
\int_{0}^{\infty}n\left(\mathrm{e}^{-qs}h(x+t-\varepsilon(s))\mathbf{1}_{\{s<\zeta\wedge\rho^{+}_{\overline{\xi}(x+t)}\}}\right)
\mathrm{d}s\mathrm{d}t
\nonumber\\
\hspace{-0.3cm}&=&\hspace{-0.3cm}
q\int_{x}^{b}\exp\left(-\int_{x}^{t}\frac{W_{q}^{\prime}(\overline{\xi}(z))}{ W_{q}(\overline{\xi}(z))}\mathrm{d}z\right)\int_{0}^{\infty}
n\left(\mathrm{e}^{-qs}h(t-\varepsilon(s))\mathbf{1}_{\{s<\zeta\wedge\rho^{+}_{\overline{\xi}(t)}\}}\right)\mathrm{d}s\mathrm{d}t,
\end{eqnarray}
where $g$ is the left-end point of the excursion $\varepsilon_{g}$, as introduced at the end  of Section 2.
Applying the same arguments as in (\ref{qg3}) and (\ref{qg1.raw.}) we have
\begin{eqnarray}\label{h1}
\hspace{-0.3cm}&&\hspace{-0.3cm}\mathbb{E}_{x}\left(h(X(e_{q}))\mathbf{1}_{\{e_{q}<\tau_{b}^{+}\wedge \tau_{c}^{-}\}}\right)
\nonumber\\
\hspace{-0.3cm}&=&\hspace{-0.3cm}
\mathbb{E}_{x}\left(h(X(e_{q}))\mathbf{1}_{\{X(e_{q})=\overline{X}(e_{q}),\,e_{q}<\tau_{b}^{+}\wedge \tau_{c}^{-}\}}\right)+\mathbb{E}_{x}\left(h(X(e_{q}))\mathbf{1}_{\{X(e_{q})<\overline{X}(e_{q}),\,e_{q}<\tau_{b}^{+}\wedge \tau_{c}^{-}\}}\right)
\nonumber\\
\hspace{-0.3cm}&=&\hspace{-0.3cm}
q\int_{x}^{b}\frac{W_{q}(x-c)}{ W_{q}(t-c)}\left(W(0)h(t)+\int_{0}^{\infty}n\left(\mathrm{e}^{-qs}h(t-\varepsilon(s))\mathbf{1}_{\{s<\zeta\wedge\rho^{+}_{t-c}\}}\right)\mathrm{d}s\right)\mathrm{d}t,
\end{eqnarray}
where the identity
$$\mathbb{E}_{x-c}\left(\mathrm{e}^{-q \tau_{t-c}^{+}};\tau_{t-c}^{+}<\tau_{0}^{-}\right)=\frac{W_{q}(x-c)}{ W_{q}(t-c)},\quad -\infty<c\leq x\leq t<\infty,$$
is used.
Equating the right hand sides of (\ref{h1}) and (\ref{h2}) and then differentiating the resulting equation with respect to $b$ gives
\begin{eqnarray}
\hspace{-0.3cm}&&\hspace{-0.3cm}
\frac{W_{q}(x-c)}{ W_{q}(b-c)}\left(W(0)h(b)+\int_{0}^{\infty}n\left(\mathrm{e}^{-qs}h(b-\varepsilon(s))\mathbf{1}_{\{s<\zeta\wedge\rho^{+}_{b-c}\}}\right)\mathrm{d}s\right)
\nonumber\\
\hspace{-0.3cm}&=&\hspace{-0.3cm}
\frac{W_{q}(x-c)}{ W_{q}(b-c)}\left(h(b)W(0)+\int_{c}^{b}h(y)\left(W_{q}^{\prime}(b-y)-\frac{W_{q}^{\prime}(b-c)}{W_{q}(b-c)}W_{q}(b-y)\right)\mathrm{d}y\right),\nonumber
\end{eqnarray}
or equivalently,
\begin{eqnarray}\label{impo.iden.for.n.01}
\hspace{-0.3cm}&&\hspace{-0.3cm}
\int_{0}^{\infty}n\left(\mathrm{e}^{-qs}h(b-\varepsilon(s))\mathbf{1}_{\{s<\zeta\wedge\rho^{+}_{b-c}\}}\right)\mathrm{d}s
\nonumber\\
\hspace{-0.3cm}&=&\hspace{-0.3cm}
\int_{c}^{b}h(y)\left(W_{q}^{\prime}(b-y)-\frac{W_{q}^{\prime}(b-c)}{W_{q}(b-c)}W_{q}(b-y)\right)\mathrm{d}y.
\end{eqnarray}
Combining (\ref{impo.iden.for.n.01}) and (\ref{qg1.raw.}), we get
\begin{eqnarray}\label{qg1}
g_{1}(x)\hspace{-0.3cm}&=&\hspace{-0.3cm}\int_{x}^{b}\mathrm{e}^{-\int_{x}^{s}\frac{W_{q}^{\prime}(\overline{\xi}(z))}{ W_{q}(\overline{\xi}(z))}\mathrm{d}z}
\int_{\xi(s)}^{s}h(y)\left(W_{q}^{\prime}(s-y)-\frac{W_{q}^{\prime}(\overline{\xi}(s))}{W_{q}(\overline{\xi}(s))}W_{q}(s-y)\right)\mathrm{d}y
\mathrm{d}s.
\end{eqnarray}

Using the memoryless property of exponential random variable, $qg_{2}(x)$ can be rewritten as
\begin{eqnarray}\label{qg2}
qg_{2}(x)\hspace{-0.3cm}&=&\hspace{-0.3cm}\mathbb{E}_{x}\left(h(U(e_{q}))\mathbf{1}_{\{U(e_{q})<\overline{U}(e_{q}),\,\tau_{\xi}<e_{q}<T_{1}<\kappa_{b}^{+}\wedge \theta_{\lambda}\}}\right)
\nonumber\\
\hspace{-0.3cm}&&\hspace{-0.3cm}
+\mathbb{E}_{x}\left(h(U(e_{q}))\mathbf{1}_{\{U(e_{q})<\overline{U}(e_{q}),\,\tau_{\xi}<T_{1}<e_{q}<\kappa_{b}^{+}\wedge \theta_{\lambda}\}}\right)
\nonumber\\
\hspace{-0.3cm}&=&\hspace{-0.3cm}\mathbb{E}_{x}\left(h(U(e_{q}))\mathbf{1}_{\{\tau_{\xi}<e_{q}<T_{1}<\kappa_{b}^{+}\wedge \theta_{\lambda}\}}\right)
+\mathbb{E}_{x}\left(h(U(e_{q}))\mathbf{1}_{\{U(e_{q})<\overline{U}(e_{q}),\,\tau_{\xi}<T_{1}<e_{q}<\kappa_{b}^{+}\wedge \theta_{\lambda}\}}\right)
\nonumber\\
\hspace{-0.3cm}&=:&\hspace{-0.3cm}\,qg_{21}(x)+qg_{22}(x),
\end{eqnarray}
where we also took use of the fact that $\tau_{\xi}<\kappa_{b}^{+}$ implies $T_{1}<\kappa_{b}^{+}$ as in \eqref{part2}.

Using once again the facts that $\tau_{\xi}<T_{1}$ holds implicitly and $\tau_{\xi}<\kappa_{b}^{+}$ implies $T_{1}<\kappa_{b}^{+}$,
we have by (\ref{reso.meas.Y}) and \eqref{12}
\begin{eqnarray}\label{qg21}
qg_{21}(x)
\hspace{-0.3cm}&=&\hspace{-0.3cm}\mathbb{E}_{x}\left(\mathbb{E}_{x}\left(\left.h(U(e_{q}))\mathbf{1}_{\{\tau_{\xi}<e_{q}<T_{1}<\kappa_{b}^{+}\wedge \theta_{\lambda}\}}\right|\mathcal{F}_{\tau_{\xi}}\right)\right)
\nonumber\\
\hspace{-0.3cm}&=&\hspace{-0.3cm}
\mathbb{E}_{x}\left(\mathbf{1}_{\{\tau_{\xi}<e_{q}\wedge\kappa_{b}^{+}\}}\mathbb{E}_{x}\left(\left.h(U(e_{q}))\mathbf{1}_{\{e_{q}<T_{1}< \theta_{\lambda}\}}\right|\mathcal{F}_{\tau_{\xi}}\right)\right)
\nonumber\\
\hspace{-0.3cm}&=&\hspace{-0.3cm}
\mathbb{E}_{x}\left(\mathbf{1}_{\{\tau_{\xi}<e_{q}\wedge\kappa_{b}^{+}\}}\left.\mathbb{E}_{}\left(h(\xi(z)+Y(e_{q}))\mathbf{1}_{\{e_{q}<\sigma^{+}_{\overline{\xi}(z)}<e_{\lambda}\}}\right)\right|_{z=\overline{X}(\tau_{\xi})}\right)
\nonumber\\
\hspace{-0.3cm}&=&\hspace{-0.3cm}
q\mathbb{E}_{x}\left(\mathrm{e}^{-q \tau_{\xi}}\mathbf{1}_{\{\tau_{\xi}<\tau_{b}^{+}\}}\int_{0}^{\overline{\xi}(\overline{X}(\tau_{\xi}))}\frac{Z_{\lambda}(y)\,h(\xi(\overline{X}(\tau_{\xi}))+y)}{Z_{\lambda}(\overline{\xi}(\overline{X}(\tau_{\xi})))}
\frac{W_{q+\lambda}(\overline{\xi}(\overline{X}(\tau_{\xi}))-y)}{Z_{q+\lambda}(\overline{\xi}(\overline{X}(\tau_{\xi})))}\mathrm{d}y\right)
\nonumber\\
\hspace{-0.3cm}&=&\hspace{-0.3cm}
q\int_{x}^{b}
\exp\left(-\int_{x}^{s}\frac{W_{q}^{\prime}(\overline{\xi}(z))}{ W_{q}(\overline{\xi}(z))}\mathrm{d}z\right)
\left(\frac{W_{q}^{\prime}(\overline{\xi}(s))}{W_{q}(\overline{\xi}(s))}Z_{q}(\overline{\xi}(s))-qW_{q}(\overline{\xi}(s))\right)
\nonumber\\
\hspace{-0.3cm}&&\hspace{-0.3cm}
\,\,\,\,\,\,\,\,\,\times\int_{0}^{\overline{\xi}(s)}\frac{Z_{\lambda}(y)h(\xi(s)+y)
W_{q+\lambda}(\overline{\xi}(s)-y)}
{Z_{\lambda}(\overline{\xi}(s))Z_{q+\lambda}(\overline{\xi}(s))}\mathrm{d}y\mathrm{d}s,
\end{eqnarray}
where for  the fourth equation the following equation is applied
\begin{eqnarray}
\hspace{-0.3cm}&&\hspace{-0.3cm}
\mathbb{E}_{}\left(h(Y(e_{q}))\mathbf{1}_{\{e_{q}<\sigma^{+}_{a}<e_{\lambda}\}}\right)
\nonumber\\
\hspace{-0.3cm}&=&\hspace{-0.3cm}
\mathbb{E}_{}\left(\mathrm{e}^{-\lambda \sigma^{+}_{a}}h(Y(e_{q}))\mathbf{1}_{\{e_{q}<\sigma^{+}_{a}\}}\right)
\nonumber\\
\hspace{-0.3cm}&=&\hspace{-0.3cm}
q\int_{0}^{\infty}\mathrm{e}^{-(q+\lambda)t}\,\mathbb{E}_{}\left(h(Y(t))\mathbf{1}_{\{t<\sigma^{+}_{a}\}}\mathbb{E}_{}\left(\left.\mathrm{e}^{-\lambda \left(\sigma^{+}_{a}-t\right)}\right|\mathcal{F}_{t}\right)\right)
\mathrm{d}t
\nonumber\\
\hspace{-0.3cm}&=&\hspace{-0.3cm}
q\int_{0}^{\infty}\mathrm{e}^{-(q+\lambda)t}\,\mathbb{E}_{}\left(\frac{Z_{\lambda}(Y(t))}{Z_{\lambda}(a)}\,h(Y(t))\mathbf{1}_{\{t<\sigma^{+}_{a}\}}\right)
\mathrm{d}t, \quad a\in(0,\infty).\nonumber
\end{eqnarray}

In addition, observing that $\tau_\xi<T_1$ for $\tau_\xi<\infty$, by (\ref{two.sid.exit.Y}) and \eqref{12} one can rewrite $qg_{22}(x)$ as
\begin{eqnarray}\label{qg22}
qg_{22}(x)\hspace{-0.3cm}&=&\hspace{-0.3cm}
\mathbb{E}_{x}\left(\mathbb{E}_{x}\left(\left.h(U(e_{q}))\mathbf{1}_{\{U(e_{q})<\overline{U}(e_{q}),\,\tau_{\xi}<T_{1}<e_{q}<\kappa_{b}^{+}\wedge \theta_{\lambda}\}}\right|\mathcal{F}_{T_{1}}\right)\right)
\nonumber\\
\hspace{-0.3cm}&=&\hspace{-0.3cm}
\mathbb{E}_{x}\left(\mathbf{1}_{\{T_{1}<e_{q}\wedge\kappa_{b}^{+},\,T_{1}-\tau_{\xi}<e_{\lambda}\}}
\left(qg_{1}(\overline{X}(\tau_{\xi}))+qg_{2}(\overline{X}(\tau_{\xi}))\right)\right)
\nonumber\\
\hspace{-0.3cm}&=&\hspace{-0.3cm}
\mathbb{E}_{x}\left(\mathbb{E}_{x}\left(\left.\mathrm{e}^{-qT_{1}}
\mathrm{e}^{-\lambda (T_{1}-\tau_{\xi})}\mathbf{1}_{\{T_{1}<\kappa_{b}^{+}\}}
\left(qg_{1}(\overline{X}(\tau_{\xi}))+qg_{2}(\overline{X}(\tau_{\xi}))\right)\right|\mathcal{F}_{\tau_{\xi}}\right)\right)
\nonumber\\
\hspace{-0.3cm}&=&\hspace{-0.3cm}
\mathbb{E}_{x}\left(\mathrm{e}^{-q\tau_{\xi}}\mathbf{1}_{\{\tau_{\xi}<\tau_{b}^{+}\}}\frac{1}{Z_{q+\lambda}(\overline{\xi}(\overline{X}(\tau_{\xi})))}\left(qg_{1}(\overline{X}(\tau_{\xi}))+qg_{2}(\overline{X}(\tau_{\xi}))\right)\right)
\nonumber\\
\hspace{-0.3cm}&=&\hspace{-0.3cm}
q\int_{x}^{b}
\frac{g_{1}(s)+g_{2}(s)}{Z_{q+\lambda}(\overline{\xi}(s))}\mathrm{e}^{-\int_{x}^{s}\frac{W_{q}^{\prime}(\overline{\xi}(z))}{ W_{q}(\overline{\xi}(z))}\mathrm{d}z}
\left(\frac{W_{q}^{\prime}(\overline{\xi}(s))}{W_{q}(\overline{\xi}(s))}Z_{q}(\overline{\xi}(s))-qW_{q}(\overline{\xi}(s))\right)\mathrm{d}s.
\end{eqnarray}

Combining (\ref{qg3}), \eqref{25}, (\ref{qg1}), (\ref{qg2}), (\ref{qg21}) and (\ref{qg22}), we obtain the following differential equation on $g(x)$ with boundary condition $g(b)=0$.
\begin{eqnarray}\label{g'}
g^{\prime}(x)\hspace{-0.3cm}&=&\hspace{-0.3cm}\frac{W_{q}^{\prime}(\overline{\xi}(x))}{ W_{q}(\overline{\xi}(x))}g(x)- W(0)h(x)
-\frac{g_{3}(x)+g_{4}(x)}{Z_{q+\lambda}(\overline{\xi}(x))}
\left(\frac{W_{q}^{\prime}(\overline{\xi}(x))}{W_{q}(\overline{\xi}(x))}
Z_{q}(\overline{\xi}(x))-qW_{q}(\overline{\xi}(x))\right)
\nonumber\\
\hspace{-0.3cm}&&\hspace{-0.3cm}
-\int_{\xi(x)}^{x}h(y)\left(W_{q}^{\prime}(x-y)-\frac{W_{q}^{\prime}(\overline{\xi}(x))}{W_{q}(\overline{\xi}(x))}W_{q}(x-y)\right)\mathrm{d}y
\nonumber\\
\hspace{-0.3cm}&&\hspace{-0.3cm}
-\frac{g_{1}(x)+g_{2}(x)}{Z_{q+\lambda}(\overline{\xi}(x))}
\left(\frac{W_{q}^{\prime}(\overline{\xi}(x))}{W_{q}(\overline{\xi}(x))}Z_{q}(\overline{\xi}(x))-qW_{q}(\overline{\xi}(x))\right)
\nonumber\\
\hspace{-0.3cm}&&\hspace{-0.3cm}
-\left(\frac{W_{q}^{\prime}(\overline{\xi}(x))}{W_{q}(\overline{\xi}(x))}Z_{q}(\overline{\xi}(x))-qW_{q}(\overline{\xi}(x))\right)
\int_{0}^{\overline{\xi}(x)}\frac{Z_{\lambda}(y)h(\xi(x)+y)
W_{q+\lambda}(\overline{\xi}(x)-y)}
{Z_{\lambda}(\overline{\xi}(x))Z_{q+\lambda}(\overline{\xi}(x))}\mathrm{d}y
\nonumber\\
\hspace{-0.3cm}&=&\hspace{-0.3cm}
\ell_{1}(x)g(x)- W(0)h(x)
-\int_{\xi(x)}^{x}h(y)\left(W_{q}^{\prime}(x-y)-\frac{W_{q}^{\prime}(\overline{\xi}(x))}{W_{q}(\overline{\xi}(x))}W_{q}(x-y)\right)\mathrm{d}y
\nonumber\\
\hspace{-0.3cm}&&\hspace{-0.3cm}
-\left(\frac{W_{q}^{\prime}(\overline{\xi}(x))}{W_{q}(\overline{\xi}(x))}Z_{q}(\overline{\xi}(x))-qW_{q}(\overline{\xi}(x))\right)
\int_{\xi(x)}^{x}\frac{Z_{\lambda}(y-\xi(x))h(y)
W_{q}(x-y)}
{Z_{\lambda}(\overline{\xi}(x))Z_{q+\lambda}(\overline{\xi}(x))}
\mathrm{d}y.
\end{eqnarray}
 Solving equation (\ref{g'}) yields
\begin{eqnarray}
\label{eq.gene.reso.meas.}
g(x)\hspace{-0.3cm}&=&\hspace{-0.3cm}\int_{0}^{\infty}\mathrm{e}^{-qt}\mathbb{E}_{x}\left(h(U(t));t<\kappa_{b}^{+}\wedge \theta_{\lambda}\right)\mathrm{d}t
\nonumber\\
\hspace{-0.3cm}&=&\hspace{-0.3cm}
W_{q}(0)\int_{x}^{b}h(y)
\exp\left(-\int_{x}^{y}
\ell_{1}(w)
\mathrm{d}w\right)\mathrm{d}y
+\int_{x}^{b}\exp\left(-\int_{x}^{y}
\ell_{1}(w)
\mathrm{d}w\right)
\nonumber\\
\hspace{-0.3cm}&&\hspace{-0.3cm}
\times\left(\int_{\xi(y)}^{y}h(z)\left(W_{q}^{\prime}(y-z)-\frac{W_{q}^{\prime}(\overline{\xi}(y))}{W_{q}(\overline{\xi}(y))}W_{q}(y-z)\right)\mathrm{d}z\right.
\nonumber\\
\hspace{-0.3cm}&&\hspace{-0.3cm}
\left.+
\left(\frac{W_{q}^{\prime}(\overline{\xi}(y))}{W_{q}(\overline{\xi}(y))}Z_{q}(\overline{\xi}(y))-qW_{q}(\overline{\xi}(y))\right)
\int_{\xi(y)}^{y}\frac{Z_{\lambda}(z-\xi(y))h(z)
W_{q}(y-z)}
{Z_{\lambda}(\overline{\xi}(y))Z_{q+\lambda}(\overline{\xi}(y))}
\mathrm{d}z\right)  \mathrm{d}y.
\end{eqnarray}
The expression (\ref{resovent.meas.}) follows immediately from (\ref{eq.gene.reso.meas.}).
\end{proof}

\medskip
The following result gives an expression of the expectation of the total discounted capital injections until time $\kappa_{b}^{+}\wedge \theta_{\lambda}$.

\medskip
\begin{thm}\label{3.3}
For $q\in (0,\infty)$, we have
\begin{eqnarray}\label{expr.of.exp.tot.dis.cap.inj.}
V_\xi(x;b)
\hspace{-0.3cm}&=&\hspace{-0.3cm}
\int_{x}^{b}
\exp\left(-\int_{x}^{y}
\ell_{1}(w)
\mathrm{d}w\right)
\ell_{4}(y)\mathrm{d}y,\quad x\in(-\infty,b],
\end{eqnarray}
where $\ell_{1}$ is defined as in Theorem \ref{3.1}, and
\begin{eqnarray}
\ell_{4}(y)\hspace{-0.3cm}&=&\hspace{-0.3cm}Z_{q}(\overline{\xi}(y))-\psi^{\prime}(0+)W_{q}(\overline{\xi}(y))-\frac{\overline{Z}_{q}(\overline{\xi}(y))-\psi^{\prime}(0+)\overline{W}_{q}(\overline{\xi}(y))}{W_{q}(\overline{\xi}(y))}W_{q}^{\prime}(\overline{\xi}(y))
\nonumber\\
\hspace{-0.3cm}&&\hspace{-0.3cm}
+
\frac{\overline{Z}_{q+\lambda}(\overline{\xi}(y))-\psi^{\prime}(0+)\overline{W}_{q+\lambda}(\overline{\xi}(y))}{Z_{q+\lambda}(\overline{\xi}(y))}
\left(\frac{W_{q}^{\prime}(\overline{\xi}(y))}{W_{q}(\overline{\xi}(y))}Z_{q}(\overline{\xi}(y))-qW_{q}(\overline{\xi}(y))\right).\nonumber
\end{eqnarray}
\end{thm}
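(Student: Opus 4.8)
The plan is to follow the scheme already used for Proposition~\ref{3.1} and Theorem~\ref{3.2}: condition on the first draw-down epoch, derive a renewal-type integral equation for $V_\xi(\cdot;b)$, differentiate it into a first-order linear ODE, and solve it under the boundary condition $V_\xi(b;b)=0$. Since no capital is injected on $\{\tau_b^+<\tau_\xi\}$ (where $\kappa_b^+=\tau_b^+$), the capital splits into that paid during the first excursion cycle and that paid afterwards, giving
\[ V_\xi(x;b)=\mathbb{E}_x\Big(\int_{\tau_\xi}^{(\tau_\xi+e_\lambda^{(1)})\wedge T_1}\mathrm{e}^{-qt}\,\mathrm{d}R(t);\,\tau_\xi<\tau_b^+\Big)+\mathbb{E}_x\Big(\mathrm{e}^{-qT_1}V_\xi(\overline{X}(\tau_\xi);b)\mathbf{1}_{\{\tau_\xi<\kappa_b^+,\,T_1-\tau_\xi\le e_\lambda^{(1)}\}}\Big). \]
The second (future) term is handled exactly as in \eqref{part2} and \eqref{25}: by the strong Markov property of $(U,\overline{U})$ at $T_1$, the memoryless property of $e_\lambda^{(1)}$, and \eqref{two.sid.exit.Y}, the inner conditional expectation produces the factor $1/Z_{q+\lambda}(\overline{\xi}(\overline{X}(\tau_\xi)))$ (using, as in \eqref{part2}, that $\{\tau_\xi<\kappa_b^+\}$ may be replaced by $\{\tau_\xi<\tau_b^+\}$), after which \eqref{12} turns it into $\int_x^b \tfrac{V_\xi(s;b)}{Z_{q+\lambda}(\overline{\xi}(s))}\exp(-\int_x^s\tfrac{W_q^{\prime}}{W_q}(\overline{\xi}(z))\,\mathrm{d}z)\big(\tfrac{W_q^{\prime}}{W_q}Z_q-qW_q\big)(\overline{\xi}(s))\,\mathrm{d}s$.

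The heart of the argument is the first-cycle term. Here I would exploit the absence of positive jumps: conditionally on $\mathcal{F}_{\tau_\xi}$, the post-$\tau_\xi$ dynamics split as (reflected process) $=\xi(\overline{X}(\tau_\xi))+Y^*$ and (cumulative capital) $=(\xi(\overline{X}(\tau_\xi))-X(\tau_\xi))+\widehat{R}^*$, where $X^*:=X(\tau_\xi+\cdot)-X(\tau_\xi)$ is a fresh copy of $X$ started at $0$, $Y^*$ is its reflection at $0$, and $\widehat{R}^*=(-\underline{X^*})\vee 0$ is the associated regulator; the record is reached again at $\sigma_z^+:=\inf\{u:Y^*(u)>z\}$ with $z=\overline{\xi}(\overline{X}(\tau_\xi))$. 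Crucially the overshoot enters \emph{only} through the initial lump, so the two pieces decouple. The instantaneous undershoot $\xi(\overline{X}(\tau_\xi))-X(\tau_\xi)$ is always charged (as $e_\lambda^{(1)}>0$ a.s.) and its $\mathrm{e}^{-q\tau_\xi}$-discounted expectation is computed directly by \eqref{14}, producing precisely the first three terms of $\ell_4$. The continuous-and-jump regulator contributes $\mathrm{e}^{-q\tau_\xi}\Theta(\overline{\xi}(\overline{X}(\tau_\xi)))$, where, since truncation at the independent clock $e_\lambda^{(1)}$ is equivalent to adding killing rate $\lambda$,
\[ \Theta(z):=\mathbb{E}_0\Big(\int_0^{\sigma_z^+}\mathrm{e}^{-(q+\lambda)u}\,\mathrm{d}\widehat{R}^*(u)\Big)=\frac{\overline{Z}_{q+\lambda}(z)-\psi'(0+)\overline{W}_{q+\lambda}(z)}{Z_{q+\lambda}(z)}. \]

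Establishing this last identity for the expected discounted total capital injection of a spectrally negative reflected process up to its first passage above $z$ is the step I expect to be the main obstacle, as the regulator $\widehat{R}^*$ accrues both by creeping at $0$ and by jumps (whenever $X^*$ jumps below its current infimum), so it is not a pure local time. I would obtain it either from the reflected fluctuation identities of Pistorius (2004) or by re-running the excursion/compensation computation used for \eqref{impo.iden.for.n.01}--\eqref{qg1} against $\mathrm{d}\widehat{R}^*$ in place of the occupation functional; the appearance of $\overline{Z}_{q+\lambda}-\psi'(0+)\overline{W}_{q+\lambda}$ in $\ell_4$ (mirroring the $\overline{Z}_q-\psi'(0+)\overline{W}_q$ signature already seen in \eqref{14}) makes the stated form the natural target. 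Applying \eqref{12} to this contribution gives $\int_x^b \Theta(\overline{\xi}(s))\exp(-\int_x^s\tfrac{W_q^{\prime}}{W_q}(\overline{\xi}(z))\,\mathrm{d}z)\big(\tfrac{W_q^{\prime}}{W_q}Z_q-qW_q\big)(\overline{\xi}(s))\,\mathrm{d}s$, the remaining term of $\ell_4$. Collecting the three contributions yields an integral equation of exactly the form met before; differentiating in $x$ by the Leibniz rule, the boundary value at $s=x$ produces $-\ell_4(x)$ while the $\exp$-factor produces the coefficient $\tfrac{W_q^{\prime}}{W_q}(\overline{\xi}(x))-\tfrac{1}{Z_{q+\lambda}(\overline{\xi}(x))}\big(\tfrac{W_q^{\prime}}{W_q}Z_q-qW_q\big)(\overline{\xi}(x))=\ell_1(x)$, so that $V_\xi^{\prime}(x;b)=\ell_1(x)V_\xi(x;b)-\ell_4(x)$. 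Solving this linear ODE with $V_\xi(b;b)=0$ gives \eqref{expr.of.exp.tot.dis.cap.inj.}. Beyond the reflected-capital identity, the only points requiring care are the interchange of expectation, local-time integration and excursion measure, which parallels the bookkeeping already carried out in the proof of Theorem~\ref{3.2}.
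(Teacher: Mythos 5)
Your proposal is correct and follows essentially the same route as the paper: the same three-way split (undershoot at $\tau_{\xi}$ via \eqref{14}, regulator over the first cycle with the exponential clock absorbed as an extra killing rate $\lambda$, and the renewal term producing the factor $1/Z_{q+\lambda}(\overline{\xi}(\cdot))$), followed by differentiation to the ODE $V_{\xi}^{\prime}=\ell_{1}V_{\xi}-\ell_{4}$ with $V_{\xi}(b;b)=0$. The identity you flag as the main obstacle is exactly the one the paper imports from the proof of Theorem 1 of Avram et al.\ (2007), and your form $\bigl(\overline{Z}_{q+\lambda}(z)-\psi^{\prime}(0+)\overline{W}_{q+\lambda}(z)\bigr)/Z_{q+\lambda}(z)$ coincides with theirs since $Z_{q+\lambda}(z)-1=(q+\lambda)\overline{W}_{q+\lambda}(z)$.
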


\begin{proof}[Proof:]\,\,\,
For $q\in(0,\infty)$ and $x\in(-\infty,b]$, we have
\begin{eqnarray}\label{disc.total.cap.}
V_\xi(x;b)\hspace{-0.3cm}&=&\hspace{-0.3cm}\mathbb{E}_{x}\left(\int_{0}^{\kappa_{b}^{+}\wedge \theta_{\lambda}}\mathrm{e}^{-q t}\mathrm{d}R(t)\right)
=\mathbb{E}_{x}\left(\mathrm{e}^{-q\tau_{\xi}}\mathbf{1}_{\{\tau_{\xi}<\tau_{b}^{+}\}}\left(\xi(\overline{X}(\tau_{\xi}))-X(\tau_{\xi})\right)\right)
\nonumber\\
\hspace{-0.3cm}&&\hspace{-0.3cm}
+\mathbb{E}_{x}\left(\mathbf{1}_{\{\tau_{\xi}<\tau_{b}^{+}\}}\int_{\tau_{\xi}+}^{T_{1}\wedge \theta_{\lambda}}\mathrm{e}^{-q t}\mathrm{d}R(t)\right)
+\mathbb{E}_{x}\left(\mathbf{1}_{\{\tau_{\xi}<\kappa_{b}^{+}\}}\int_{T_{1}\wedge \theta_{\lambda}}^{\kappa_{b}^{+}\wedge \theta_{\lambda}}\mathrm{e}^{-q t}\mathrm{d}R(t)\right)
\nonumber\\
\hspace{-0.3cm}&=:&\hspace{-0.3cm}V_{1}(x;b)+V_{2}(x;b)+V_{3}(x;b).
\end{eqnarray}

By \eqref{14}, $V_{1}(x;b)$ can be expressed as
\begin{eqnarray}\label{l1}
V_{1}(x;b)
\hspace{-0.3cm}&=&\hspace{-0.3cm}\int_{x}^{b}
\exp\left(-\int_{x}^{s}\frac{W_{q}^{\prime}(\overline{\xi}\left(z\right))}
{W_{q}(\overline{\xi}\left(z\right))}\mathrm{d}z\right)
\Bigg(Z_{q}(\overline{\xi}(s))-\psi^{\prime}(0+)W_{q}(\overline{\xi}(s))
\nonumber\\
\hspace{-0.3cm}&&\hspace{-0.3cm}\hspace{4cm}
-\frac{\overline{Z}_{q}(\overline{\xi}(s))-\psi^{\prime}(0+)\overline{W}_{q}(\overline{\xi}(s))}{W_{q}(\overline{\xi}(s))}W_{q}^{\prime}(\overline{\xi}(s))\Bigg)\mathrm{d}s.
\end{eqnarray}
By the Markov property for the reflected process $(U,\overline{U})$,
\begin{eqnarray}
\label{v0(0}
V_{2}(x;b)\hspace{-0.3cm}&=&\hspace{-0.3cm}
\mathbb{E}_{x}\left(\mathbf{1}_{\{\tau_{\xi}<\tau_{b}^{+},\,e_{\lambda}^{(1)}\geq T_{1}-\tau_{\xi}\}}\int_{\tau_{\xi}+}^{T_{1}}\mathrm{e}^{-q t}\mathrm{d}R(t)\right)
+\mathbb{E}_{x}\left(\mathbf{1}_{\{\tau_{\xi}<\tau_{b}^{+},\,e_{\lambda}^{(1)}<T_{1}-\tau_{\xi}\}}\int_{\tau_{\xi}+}^{\tau_{\xi}+e_{\lambda}^{(1)}}\mathrm{e}^{-q t}\mathrm{d}R(t)\right)
\nonumber\\
\hspace{-0.3cm}&=&\hspace{-0.3cm}
\mathbb{E}_{x}\left(\mathrm{e}^{-q\tau_{\xi}}\mathbf{1}_{\{\tau_{\xi}<\tau_{b}^{+}\}}\left(\mathbb{E}\left(\left.\mathrm{e}^{-\lambda \sigma_{z}^{+}}\int_{0}^{\sigma_{z}^{+}}
\mathrm{e}^{-qt}\mathrm{d}\left(-\underline{X}(t)\right)\right)\right|_{z=\overline{\xi}(\overline{X}(\tau_{\xi}))}
\right)\right)
\nonumber\\
\hspace{-0.3cm}&&\hspace{-0.3cm}
+\mathbb{E}_{x}\left(\mathrm{e}^{-q\tau_{\xi}}\mathbf{1}_{\{\tau_{\xi}<\tau_{b}^{+}\}}\left(\mathbb{E}\left(\left.\mathbf{1}_{\{e_{\lambda}<\sigma_{z}^{+}\}}\int_{0}^{e_{\lambda}}
\mathrm{e}^{-qt}\mathrm{d}\left(-\underline{X}(t)\right)\right)\right|_{z=\overline{\xi}(\overline{X}(\tau_{\xi}))}
\right)\right)
\nonumber\\
\hspace{-0.3cm}&=&\hspace{-0.3cm}
\mathbb{E}_{x}\left(\mathrm{e}^{-q\tau_{\xi}}\mathbf{1}_{\{\tau_{\xi}<\tau_{b}^{+}\}}\left(-\frac{\psi^{\prime}(0+)}{q+\lambda}+\frac{\overline{Z}_{q+\lambda}(\overline{\xi}(\overline{X}(\tau_{\xi})))+\frac{\psi^{\prime}(0+)}{q+\lambda}}{Z_{q+\lambda}(\overline{\xi}(\overline{X}(\tau_{\xi})))}
\right)\right),
\end{eqnarray}
where the following equations are applied
\begin{eqnarray}
\hspace{-0.3cm}&&\hspace{-0.3cm}
\mathbb{E}\left(\mathbf{1}_{\{e_{\lambda}<\sigma_{z}^{+}\}}\int_{0}^{e_{\lambda}}
\mathrm{e}^{-qt}\mathrm{d}\left(-\underline{X}(t)\right)\right)
\nonumber\\
\hspace{-0.3cm}&=&\hspace{-0.3cm}
\mathbb{E}\left(\int_{0}^{\sigma_{z}^{+}} \int_{0}^{t}\mathrm{e}^{-q s}\mathrm{d}\left(-\underline{X}(s)\right)\mathrm{d}\left(-\mathrm{e}^{-\lambda t}\right)\right)
\nonumber\\
\hspace{-0.3cm}&=&\hspace{-0.3cm}
- \mathbb{E}\left(\mathrm{e}^{-\lambda \sigma_{z}^{+}}
\int_{0}^{\sigma_{z}^{+}}\mathrm{e}^{-q t}\mathrm{d}\left(-\underline{X}(t)\right)\right)
+
\mathbb{E}\left(\int_{0}^{\sigma_{z}^{+}}
\mathrm{e}^{-(q+\lambda) t}\mathrm{d}\left(-\underline{X}(t)\right)\right),\quad z\in(0,\infty),\nonumber
\end{eqnarray}
and
$$\mathbb{E}\left(\int_{0}^{\sigma_{z}^{+}}
\mathrm{e}^{-(q+\lambda)t}\mathrm{d}\left(-\underline{X}(t)\right)\right)=-\frac{\psi^{\prime}(0+)}{q+\lambda}+\frac{\overline{Z}_{q+\lambda}(z)+\frac{\psi^{\prime}(0+)}{q+\lambda}}{Z_{q+\lambda}(z)},\quad z\in(0,\infty),$$
which can be found in the proof of Theorem 1 of Avram et al. (2007).
Combining \eqref{v0(0} and \eqref{12} yields
\begin{eqnarray}
\label{l2.}
V_{2}(x;b)
\hspace{-0.3cm}&=&\hspace{-0.3cm}
\int_{x}^{b}\left(-\frac{\psi^{\prime}(0+)}{q+\lambda}+\frac{\overline{Z}_{q+\lambda}(\overline{\xi}(s))+\frac{\psi^{\prime}(0+)}{q+\lambda}}{Z_{q+\lambda}(\overline{\xi}(s))}\right)
\exp\left(-\int_{x}^{s}\frac{W_{q}^{\prime}(\overline{\xi}(z))}{ W_{q}(\overline{\xi}(z))}\mathrm{d}z\right)
\nonumber\\
\hspace{-0.3cm}&&\hspace{-0.3cm}\,\,\,\,\,\,\,\times
\left(\frac{W_{q}^{\prime}(\overline{\xi}(s))}{W_{q}(\overline{\xi}(s))}Z_{q}(\overline{\xi}(s))-qW_{q}(\overline{\xi}(s))\right)\mathrm{d}s.
\end{eqnarray}

Making use of \eqref{12}  again, one can get
\begin{eqnarray}\label{l3}
V_{3}(x;b)\hspace{-0.3cm}&=&\hspace{-0.3cm}
\mathbb{E}_{x}\left(\mathbb{E}_{x}\left(\left.\mathbf{1}_{\{\tau_{\xi}<\kappa_{b}^{+},\,T_{1}-\tau_{\xi}<e_{\lambda}^{(1)}\}}\int_{T_{1}}^{\kappa_{b}^{+}\wedge \theta_{\lambda}}\mathrm{e}^{-q t}\mathrm{d}R(t)\right|\mathcal{F}_{T_{1}}\right)\right)
\nonumber\\
\hspace{-0.3cm}&=&\hspace{-0.3cm}
\mathbb{E}_{x}\left(\mathbb{E}_{x}\left(\left.\mathrm{e}^{-q\tau_{\xi}}\,\mathrm{e}^{-(q+\lambda)(T_{1}-\tau_{\xi})}\mathbf{1}_{\{\tau_{\xi}<\kappa_{b}^{+}\}}V_\xi(\overline{X}(\tau_{\xi});b)\right|\mathcal{F}_{\tau_{\xi}}\right)\right)
\nonumber\\
\hspace{-0.3cm}&=&\hspace{-0.3cm}
\mathbb{E}_{x}\left(\mathrm{e}^{-q\tau_{\xi}}\mathbf{1}_{\{\tau_{\xi}<\tau_{b}^{+}\}}\frac{V_\xi(\overline{X}(\tau_{\xi});b)}{Z_{q+\lambda}(\overline{\xi}(\overline{X}(\tau_{\xi})))}\right)
\nonumber\\
\hspace{-0.3cm}&=&\hspace{-0.3cm}
\int_{x}^{b}\frac{V_\xi(s;b)}{Z_{q+\lambda}(\overline{\xi}(s))}
\mathrm{e}^{-\int_{x}^{s}\frac{W_{q}^{\prime}(\overline{\xi}(z))}{ W_{q}(\overline{\xi}(z))}\mathrm{d}z}
\left(\frac{W_{q}^{\prime}(\overline{\xi}(s))}{W_{q}(\overline{\xi}(s))}Z_{q}(\overline{\xi}(s))-qW_{q}(\overline{\xi}(s))\right)\mathrm{d}s.
\end{eqnarray}

Denote by $V_{\xi}^{\prime}(x;b)$ the derivative of $V_{\xi}(x;b)$ with respect to its first argument. Combining (\ref{disc.total.cap.}), (\ref{l1}), (\ref{l2.}) and (\ref{l3}) we have
\begin{eqnarray}\label{l'}
V_{\xi}^{\prime}(x;b)
\hspace{-0.3cm}&=&\hspace{-0.3cm}
\frac{W_{q}^{\prime}(\overline{\xi}(x))}{ W_{q}(\overline{\xi}(x))}V_\xi(x;b)
-\frac{V_\xi(x;b)}{Z_{q+\lambda}(\overline{\xi}(x))}
\left(\frac{W_{q}^{\prime}(\overline{\xi}(x))}{W_{q}(\overline{\xi}(x))}Z_{q}(\overline{\xi}(x))-qW_{q}(\overline{\xi}(x))\right)
\nonumber\\
\hspace{-0.3cm}&&\hspace{-0.3cm}
-\left(Z_{q}(\overline{\xi}(x))-\psi^{\prime}(0+)W_{q}(\overline{\xi}(x))-\frac{\overline{Z}_{q}(\overline{\xi}(x))-\psi^{\prime}(0+)\overline{W}_{q}(\overline{\xi}(x))}{W_{q}(\overline{\xi}(x))}W_{q}^{\prime}(\overline{\xi}(x))\right)
\nonumber\\
\hspace{-0.3cm}&&\hspace{-0.3cm}
-\left(-\frac{\psi^{\prime}(0+)}{q+\lambda}+\frac{\overline{Z}_{q+\lambda}(\overline{\xi}(x))+\frac{\psi^{\prime}(0+)}{q+\lambda}}{Z_{q+\lambda}(\overline{\xi}(x))}\right)\left(\frac{W_{q}^{\prime}(\overline{\xi}(x))}{W_{q}(\overline{\xi}(x))}Z_{q}(\overline{\xi}(x))-qW_{q}(\overline{\xi}(x))\right)
\nonumber\\
\hspace{-0.3cm}&=&\hspace{-0.3cm}
\ell_{1}(x)V_\xi(x;b)-\ell_{4}(x).
\end{eqnarray}
Solving (\ref{l'}) with boundary condition $V_\xi(b, b)=0 $,  we obtain (\ref{expr.of.exp.tot.dis.cap.inj.}).
\end{proof}

\section{Numerical examples}\label{4}
To exemplify the main results, we discuss some numerical examples for one-sided jump-diffusion process $X$ with Laplace exponent $\psi(\theta)=\mu\theta+\frac{\sigma^2}{2}\theta^2-\frac{a\theta}{\theta+c}$ for all $\theta\in\mathbb{R}$ s.t. $\theta\neq -c$. See Example \ref{ex:Ex1} for the corresponding scale function. We set $\mu=0.075$, $a=0.5$ and $c=9$ (on average once every two years the firm suffers an instantaneous loss of $10\%$ of its value), and $q=5\%$. 

Figures \ref{fig:exit} and \ref{fig:Vxi} display various shapes of ruin probability $\mathbb{P}_x(\theta_{\lambda}<\infty)$ (\ref{eq:ruinprobab}) and the expected nett present value $V_{\xi}(x):=V_{\xi}(x;b=\infty)$ of the total amount of required capital injection to the firm as a function of initial surplus $x$ and the default monitoring frequency $\lambda$. The computation was performed for different forms of drawdown function: $\xi(x)=Kx$, with $0<K<1$, and $\xi(x)=\min\{1,Kx\}$. The former dictates that default is announced and followed by capital injection as soon as the surplus process has crossed below $K\%$ of its last record high, whereas the latter deals with the case of injecting capital to the firm as soon as the surplus is less than one dollar or below $K\%$ of the last record high, whichever is smaller. The results are presented for two cases: $\sigma=0.2$ (the process has paths of unbounded variation) and $\sigma=0$ (paths with bounded variation).

Over all, we observe that the ruin probability $\mathbb{P}_x(\theta_{\lambda}<\infty)$ decreases as the initial surplus increases. This implies that firms with higher initial endowment/surplus has lower probability of ruin than those with lower value of surplus at the beginning. Furthermore, when $\sigma\neq 0$ in which case the firm has additional (immediate) exposure to risk, say from investing in financial market, the ruin probability is higher than those firms which do not have any risk exposure ($\sigma=0$) other than the claim from the insurance holder, which arrives at exponential random time. From the sample paths point of view, the presence of Brownian motion allows the paths to reach a new maximum level and then makes an excursion below that level before a jump arrives. As a result, such movement triggers the observation clock start to run and put the firm into a risky position of getting default. Moreover, we also notice from the figure that the higher the observation frequency $\lambda$, the higher the ruin probability. This is to say that the larger $\lambda$ gives the firm lesser time to come out of dilution period during which the firm is in financial distress, resulting in a higher chance of going ruin. The choice of drawdown level $\xi(x)$ also determines the shape of the ruin probability. Under the drawdown level $Kx$, the ruin probability is higher for larger value of $K$. This is due to the fact that the higher value of $K$ sets the default level higher causing the firm to go default/ruin sooner than lower value of $K$, in particular when $\sigma\neq 0$. As $\min\{1,Kx\}\leq Kx$, default is expected to occur earlier under drawdown level $\xi(x)=Kx$ than  $\xi(x)=\min\{1,Kx\}$ for the same reason explained. For the latter, the ruin probability decreases linearly when the surplus is about less than one unit, then decreases at exponential rate when the surplus is larger than that value. Similar observation is observed for $\sigma=0$ with only exception that the curve has lower degree of smoothness than the case $\sigma\neq 0$ in which case the scale function is twice continuously differentiable over $(0,\infty)$. 

Similar feature is exhibited by the expected total amount of discounted cash for capital injection function $V_{\xi}(x)$. Healthier firm with larger initial surplus $x$ requires less capital injection in total than that of unhealthier firm with lower surplus. When the firm has more uncertainties as a result of investing in financial market, the case $\sigma\neq 0$, the firm requires more capital injection than those firm which do not have riskier exposure ($\sigma=0$) other than the claim from the insurance holder arriving in exponential random time. As explained above, the presence of Brownian motion puts the firm in riskier situation with higher probability of ruin when the surplus process makes an excursion from its last record high. This induces the firm to ask for more financial coverage (injection) to deal with during the dilution period. Moreover, the longer the observation window (the lower the value of $\lambda$), the firm receives more capital injection from stakeholder than otherwise. The choice of drawdown level $\xi(x)$ also shifts the curve $\xi(x)$. For $\xi(x)=Kx$, the curve is higher for larger value of $K$. The upward shift of the curve is attributed by the fact that the larger value of $\lambda$ increases the default threshold to higher level giving the firm higher chance of default resulting the firm in asking for more capital to prevent ruin to occur at earlier stage, in particular when $\sigma\neq 0$. As $\min\{1,Kx\}\leq Kx$, default is expected to occur later for $\xi(x)=\min\{1,Kx\}$ than that of under the drawdown level $\xi(x)=Kx$ which in turn requiring lesser amount of capital injection. Under $\xi(x)=\min\{1,Kx\}$, the function $V_{\xi}(x)$ decreases linearly with quite high slope (rate) when the surplus is about less than one unit, then decreases exponentially at lower rate when the surplus is larger than that value. Overall, unlike the ruin probability $\mathbb{P}_x(\theta_{\lambda}<\infty)$, the function $V_{\xi}(x)$ retains its convexity in all cases having zero value at infinity implying the function to be positive for all $x$.

The above analyses conclude our numerical study on the ruin probability and nett present value of the capital injection which summarizes their various shape w.r.t changing the value of observation frequency $\lambda$ and initial surplus $x$.

\begin{figure}[t!]
\centering
\begin{tabular}{cc}
\subf{\includegraphics[width=0.465\textwidth]{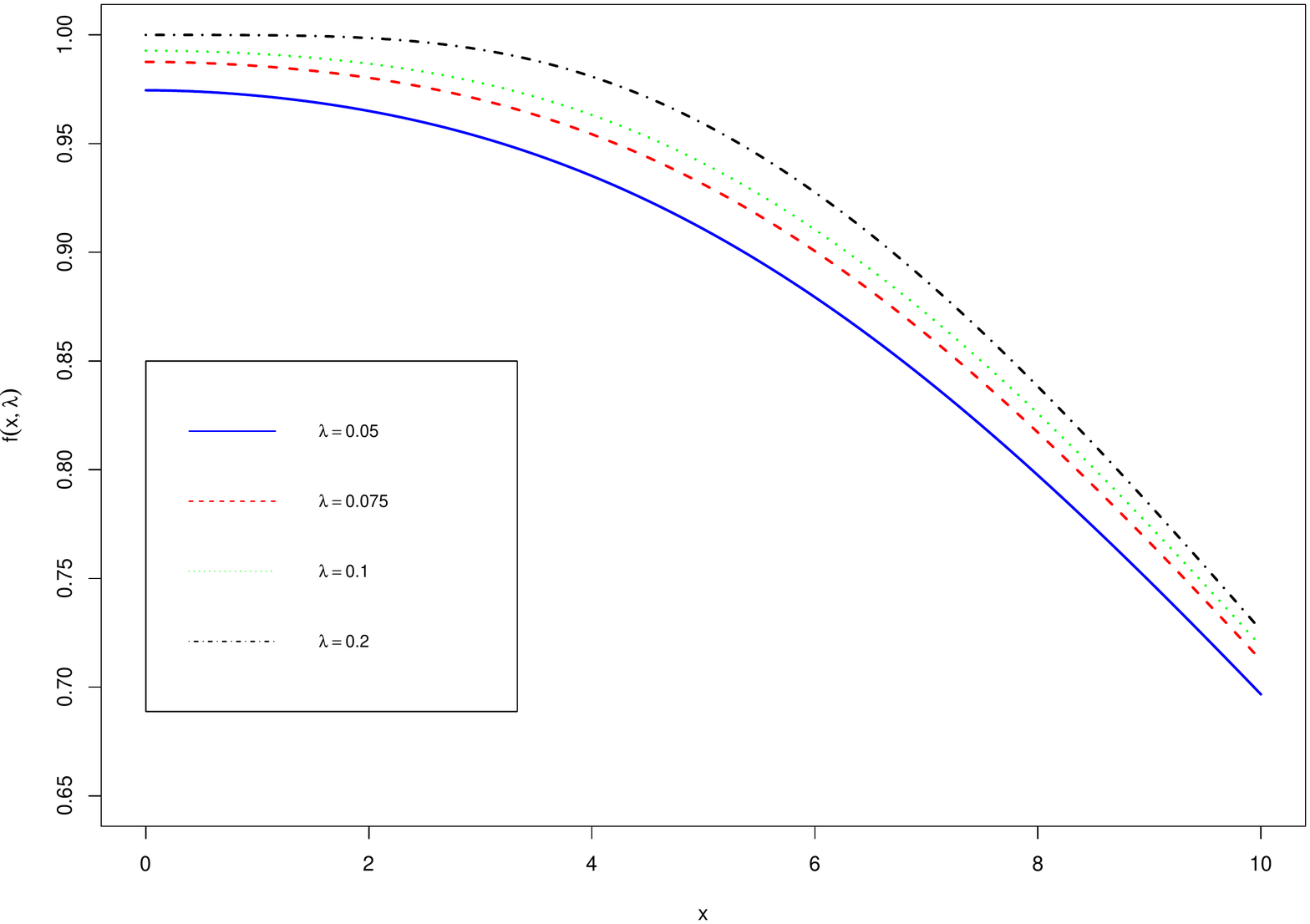}}
       {(a) $\xi(x)=0.8x$ and $\sigma=0.2$.}
&
\subf{\includegraphics[width=0.465\textwidth]{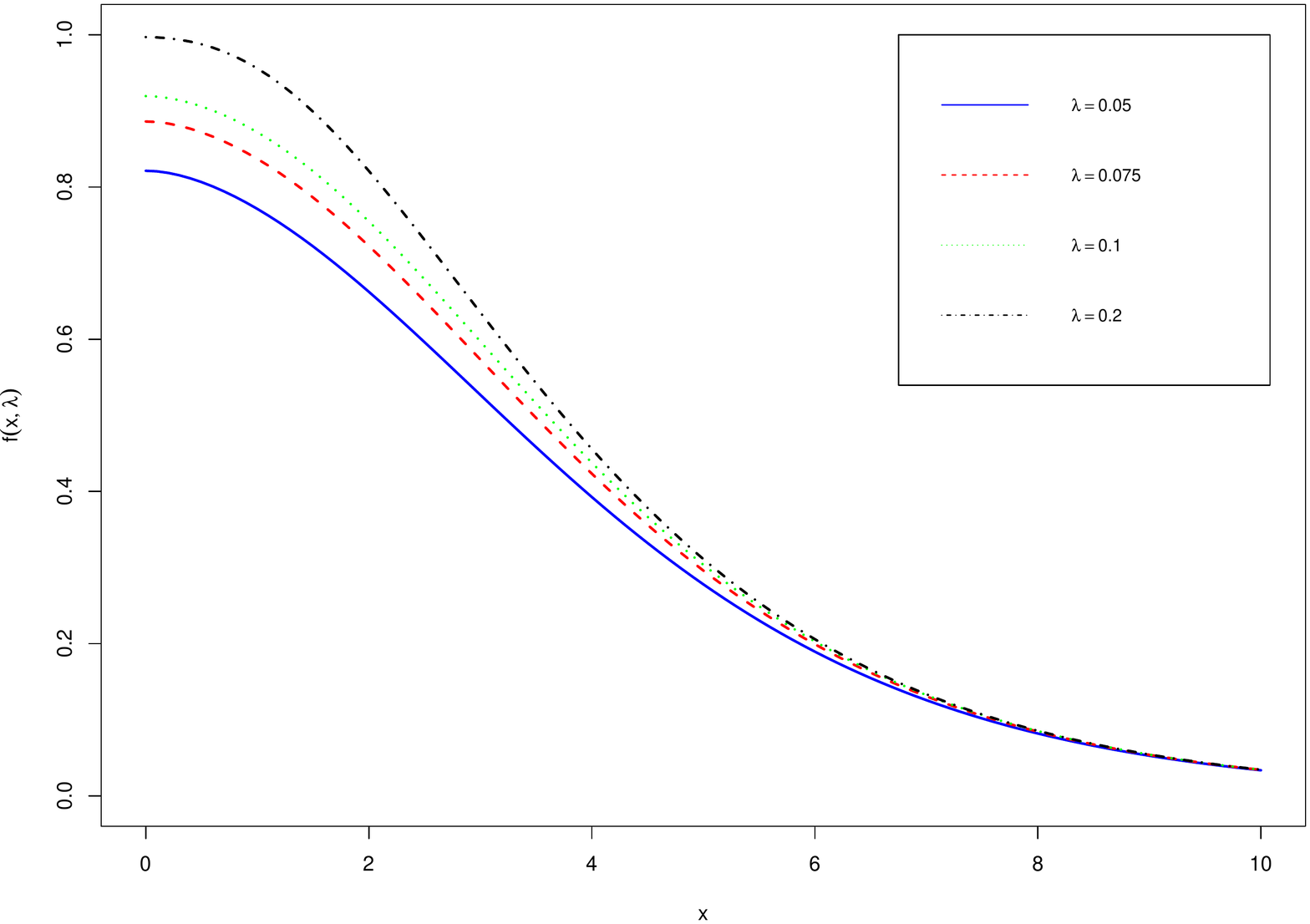}}
       {(b) $\xi(x)=0.8x$ and $\sigma=0$.}
\\
\subf{\includegraphics[width=0.465\textwidth]{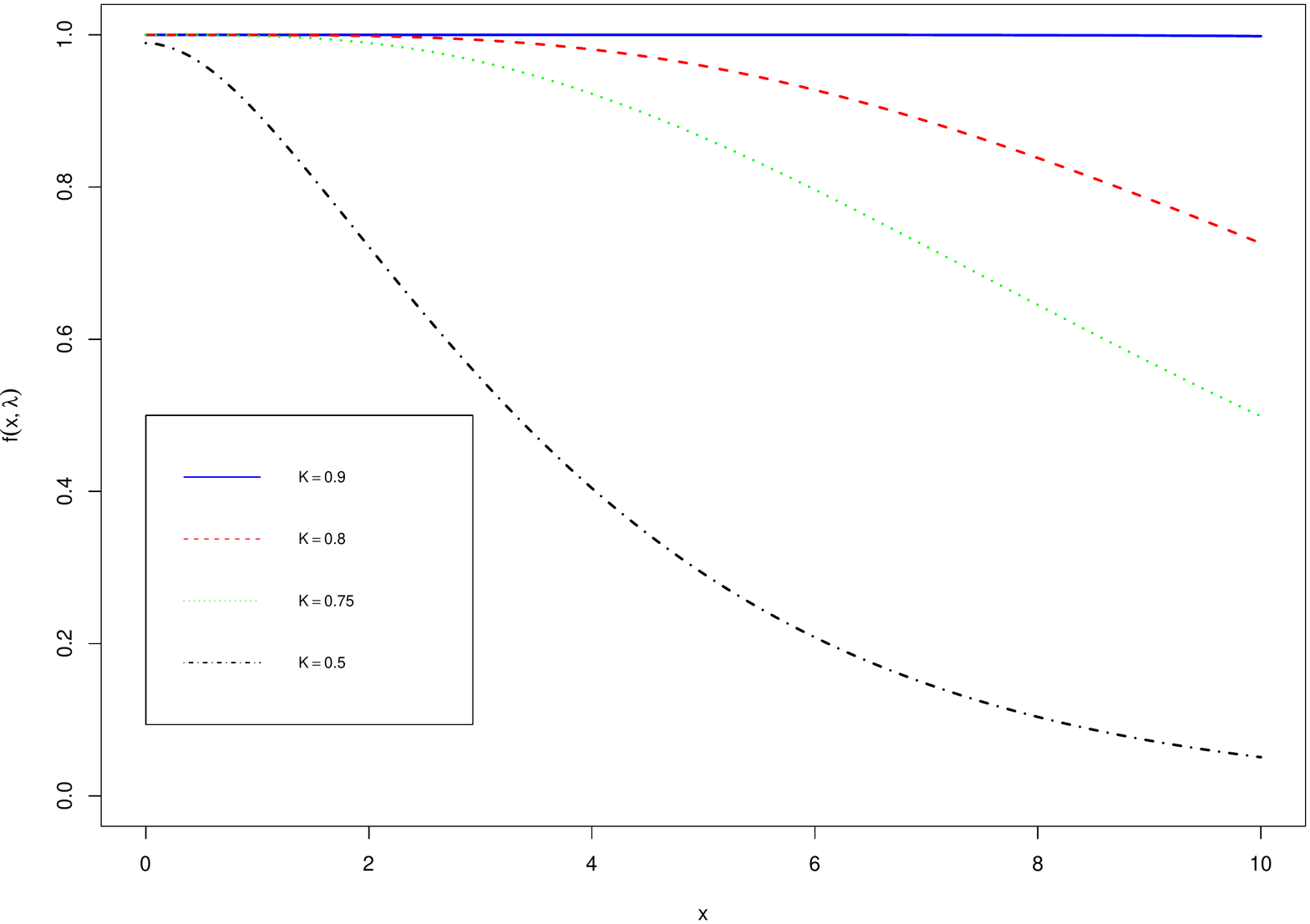}}
       {(c) $\xi(x)=Kx$, $\lambda=0.2$, and $\sigma=0.2$.}

&
\subf{\includegraphics[width=0.465\textwidth]{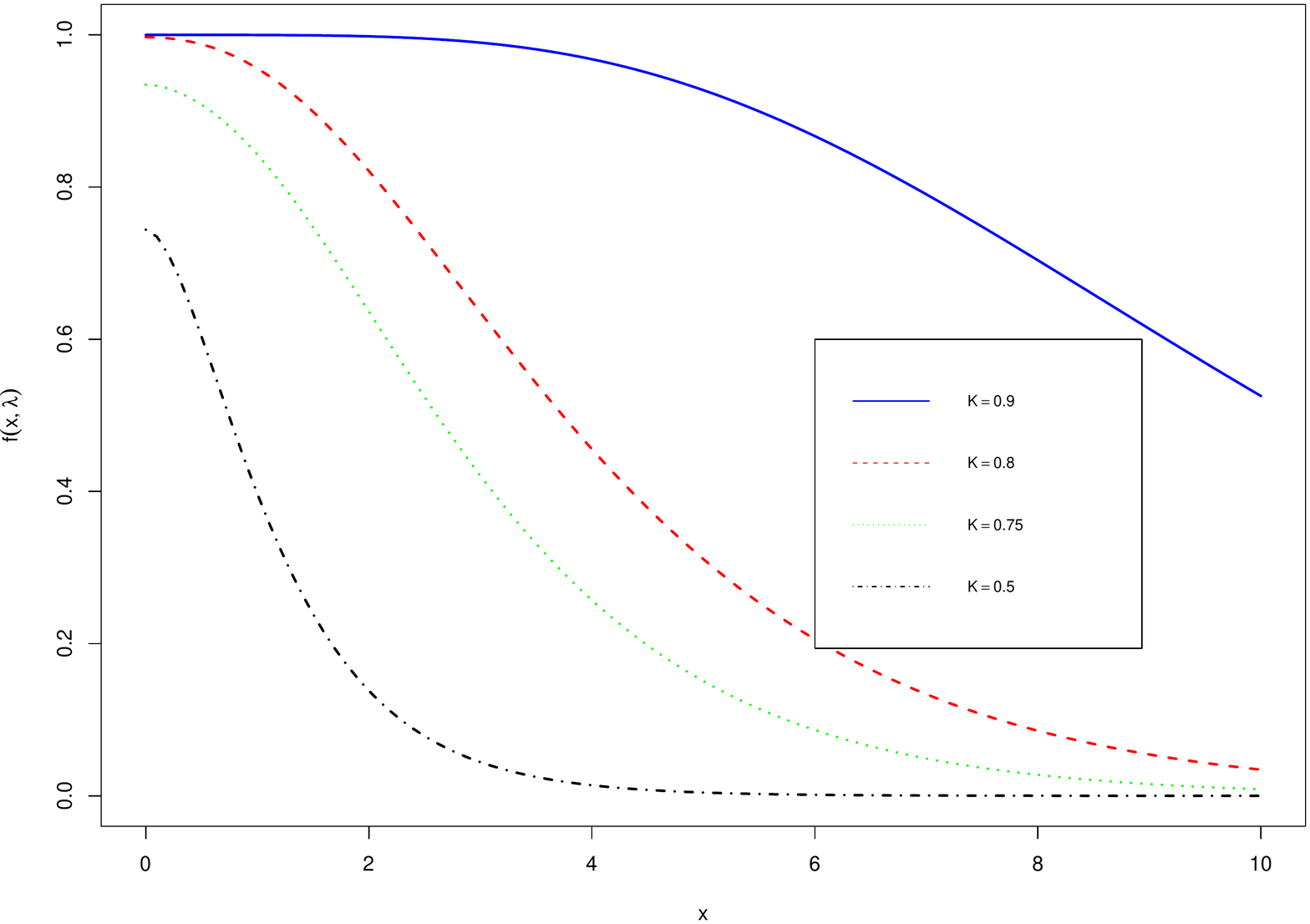}}
       {(d)  $\xi(x)=Kx$, $\lambda=0.2$, and $\sigma=0$.}
\\
\subf{\includegraphics[width=0.465\textwidth]{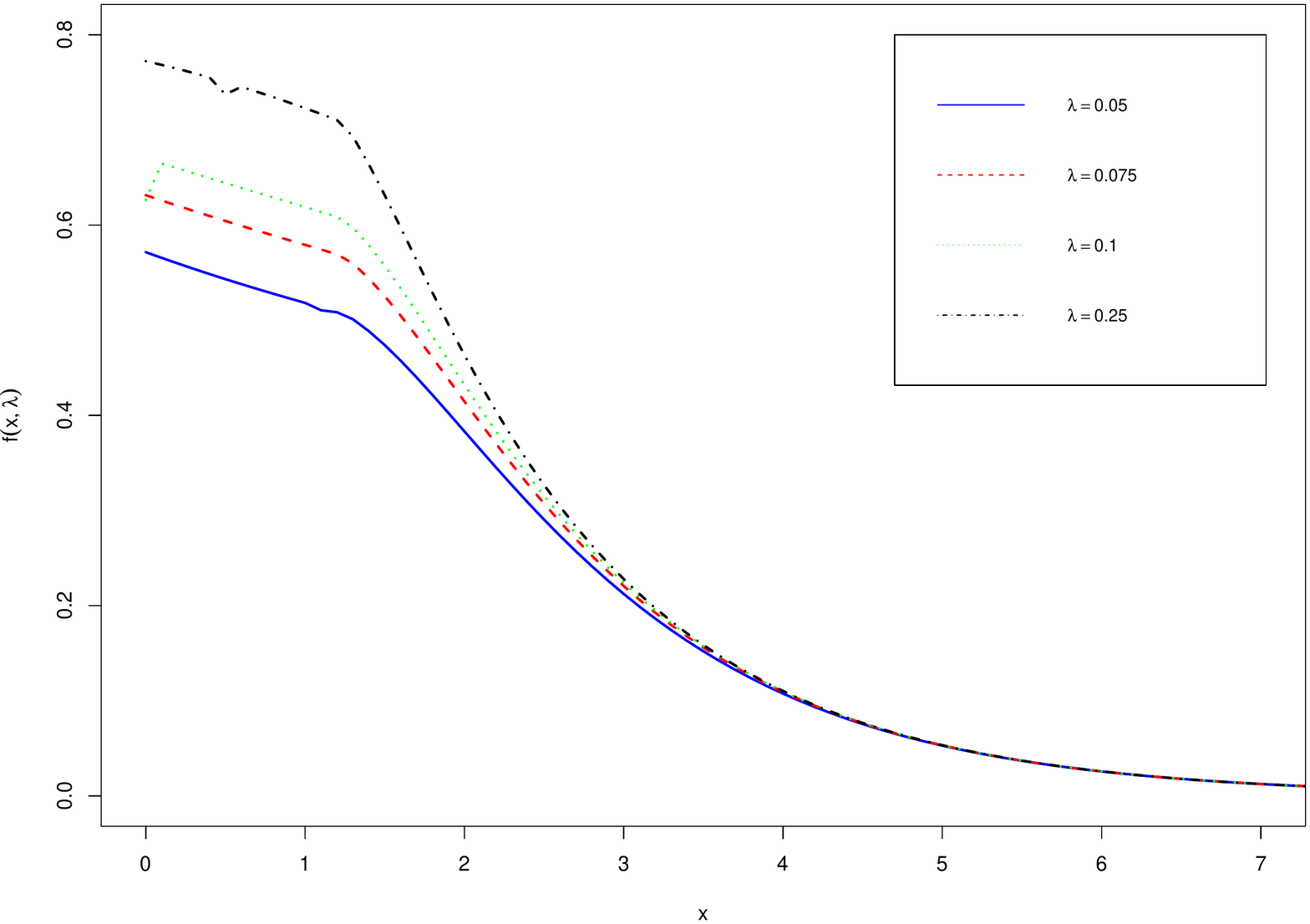}}
       {(e) $\xi(x)=\min(1,0.8x)$ and $\sigma=0.2$.}

&
\subf{\includegraphics[width=0.465\textwidth]{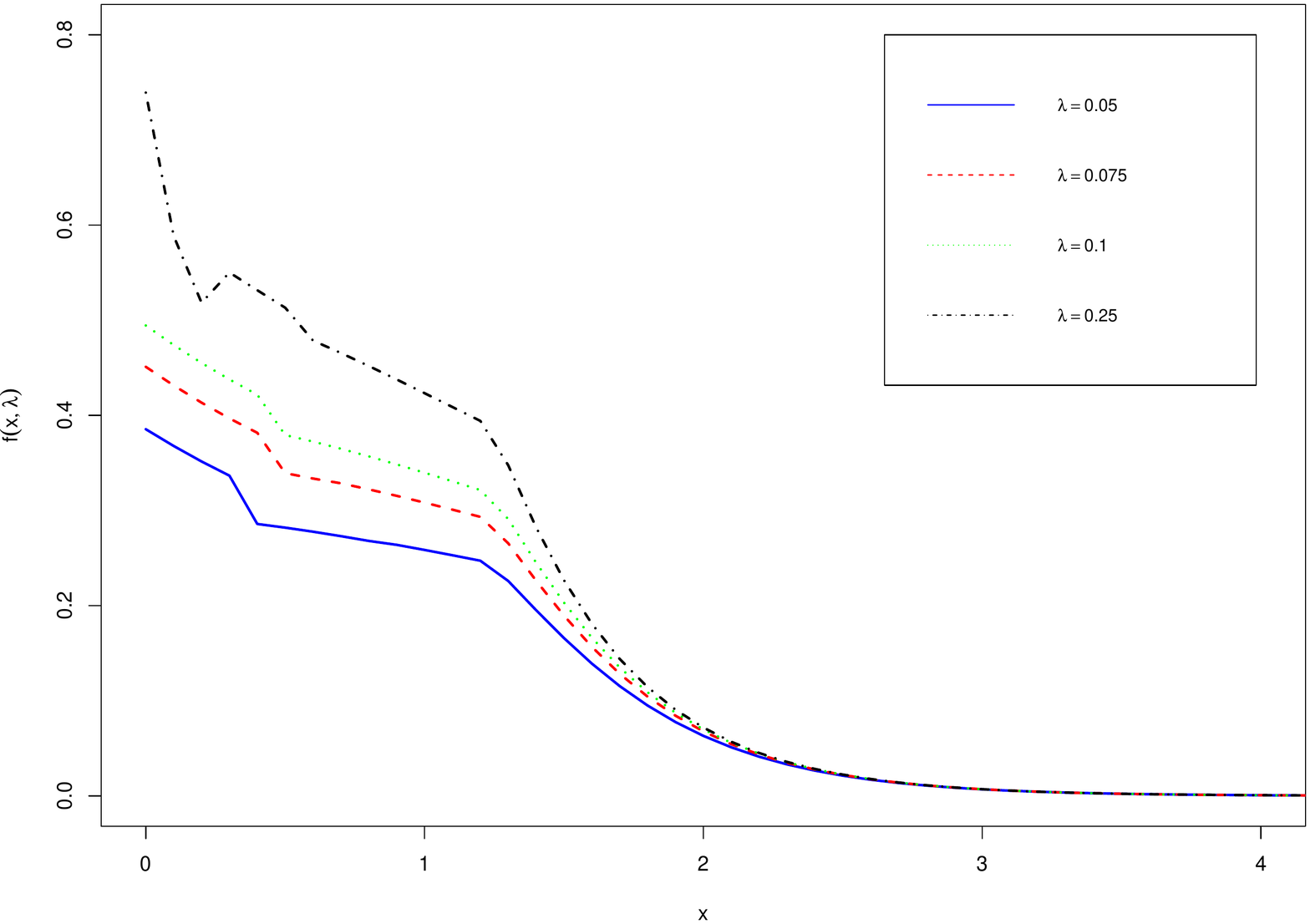}}
       {(f)  $\xi(x)=\min(1,0.8x)$ and $\sigma=0$.}
\\
\end{tabular}
\caption{Ruin probability $\mathbb{P}_x(\theta_{\lambda}<\infty)$ for downward jump-diffusion process with Laplace exponent $\psi(\theta)=\mu\theta+\frac{\sigma^2}{2}\theta^2-\frac{a\theta}{\theta+c}$ for $\mu=0.075, a=0.5, c=9, q=0.05$.}\label{fig:exit}
\end{figure}

\begin{figure}[t!]
\centering
\begin{tabular}{cc}
\subf{\includegraphics[width=0.465\textwidth]{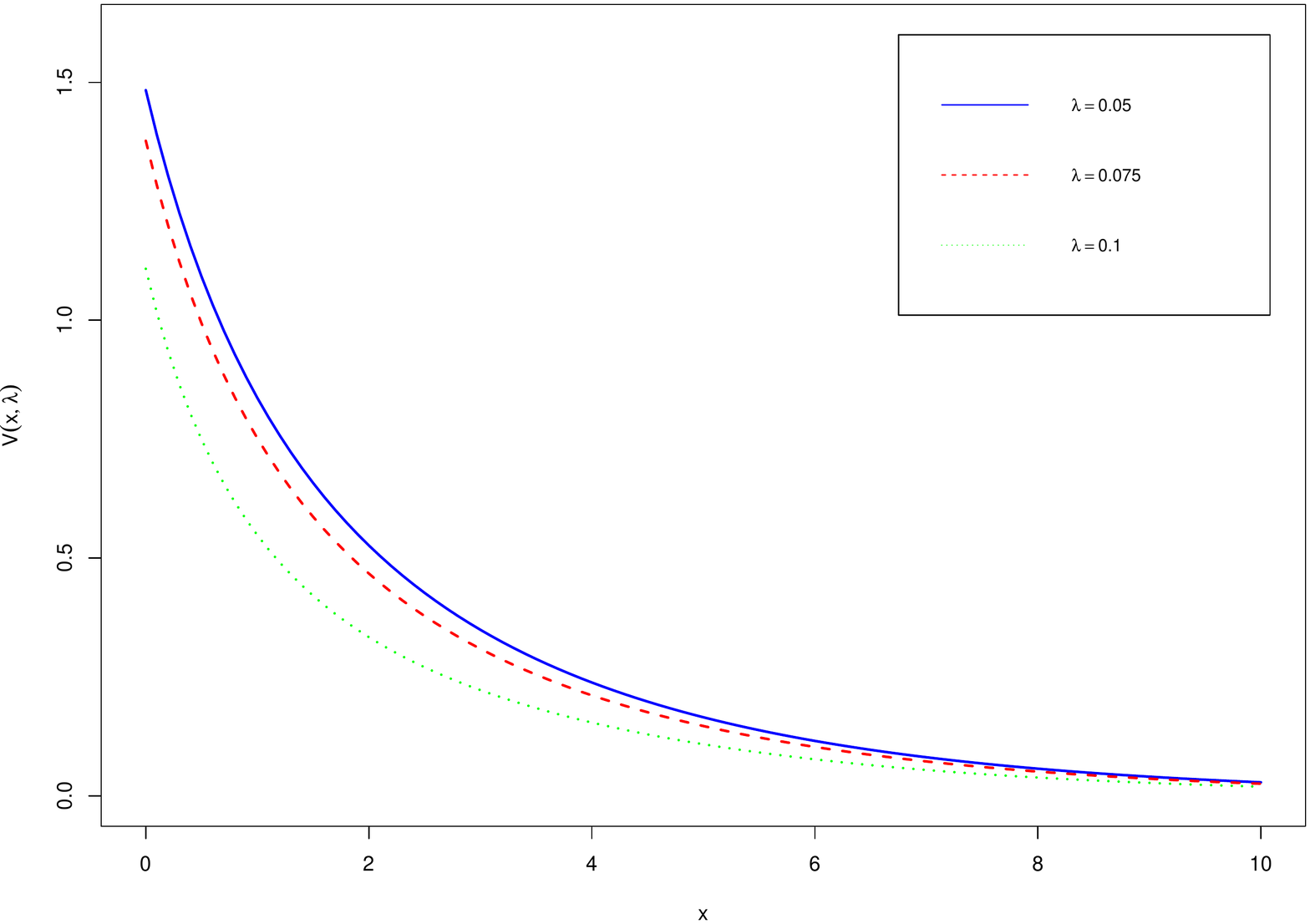}}
       {(a) $\xi(x)=0.8x$ and $\sigma=0.2$.}
&
\subf{\includegraphics[width=0.465\textwidth]{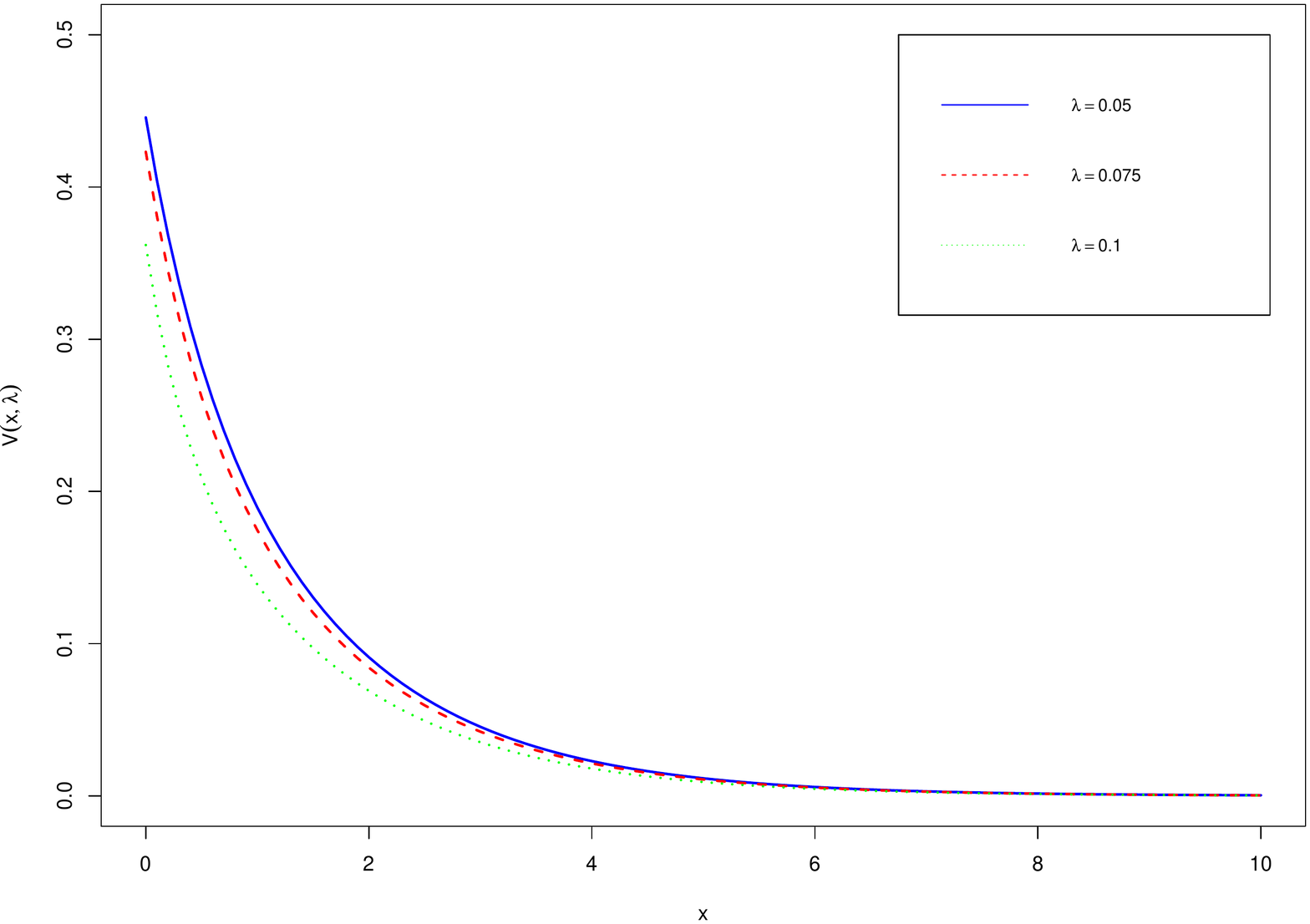}}
       {(b) $\xi(x)=0.8x$ and $\sigma=0$.}
\\
\subf{\includegraphics[width=0.465\textwidth]{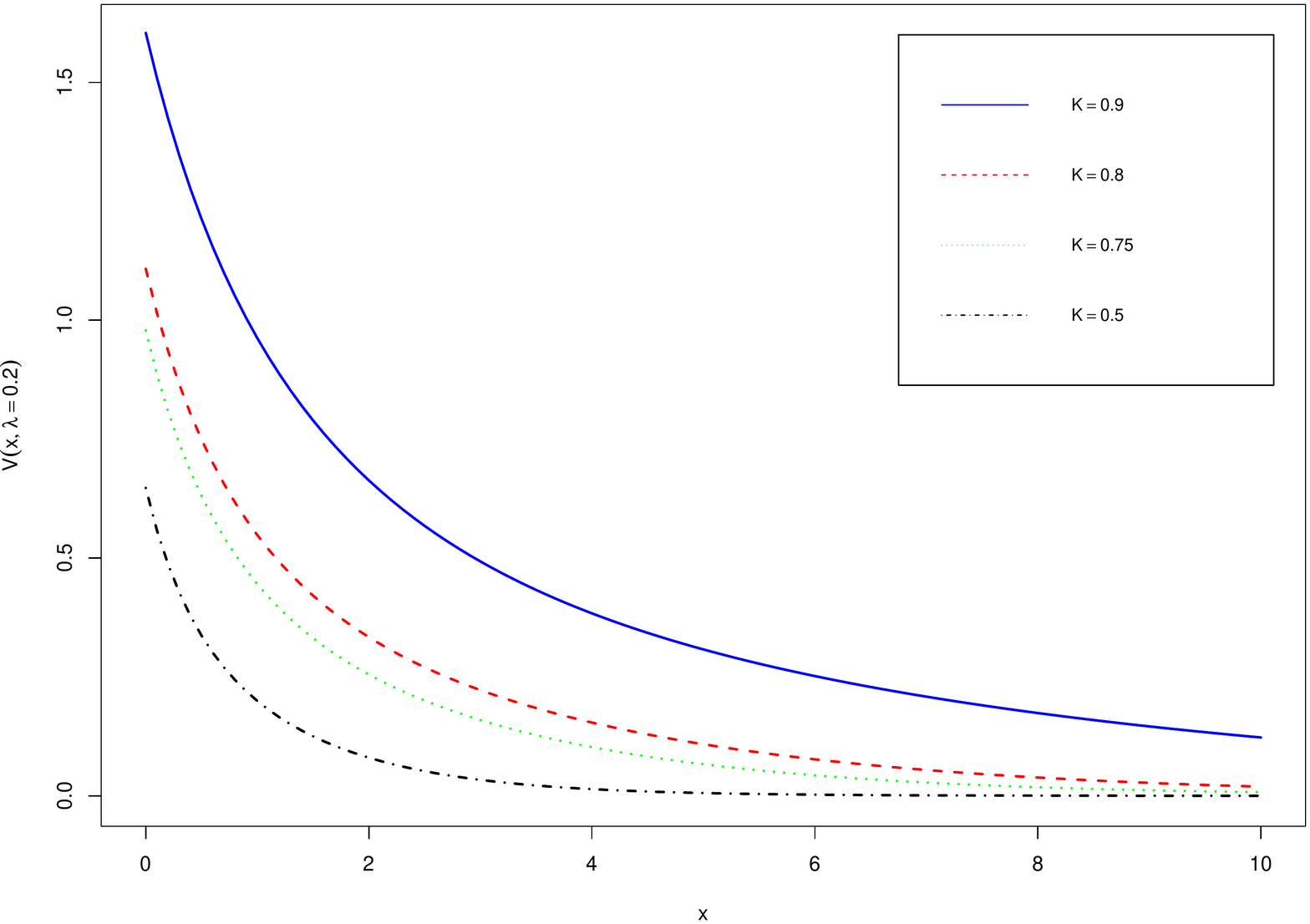}}
       {(c) $\xi(x)=Kx$, $\lambda=0.2$, and $\sigma=0.2$.}

&
\subf{\includegraphics[width=0.465\textwidth]{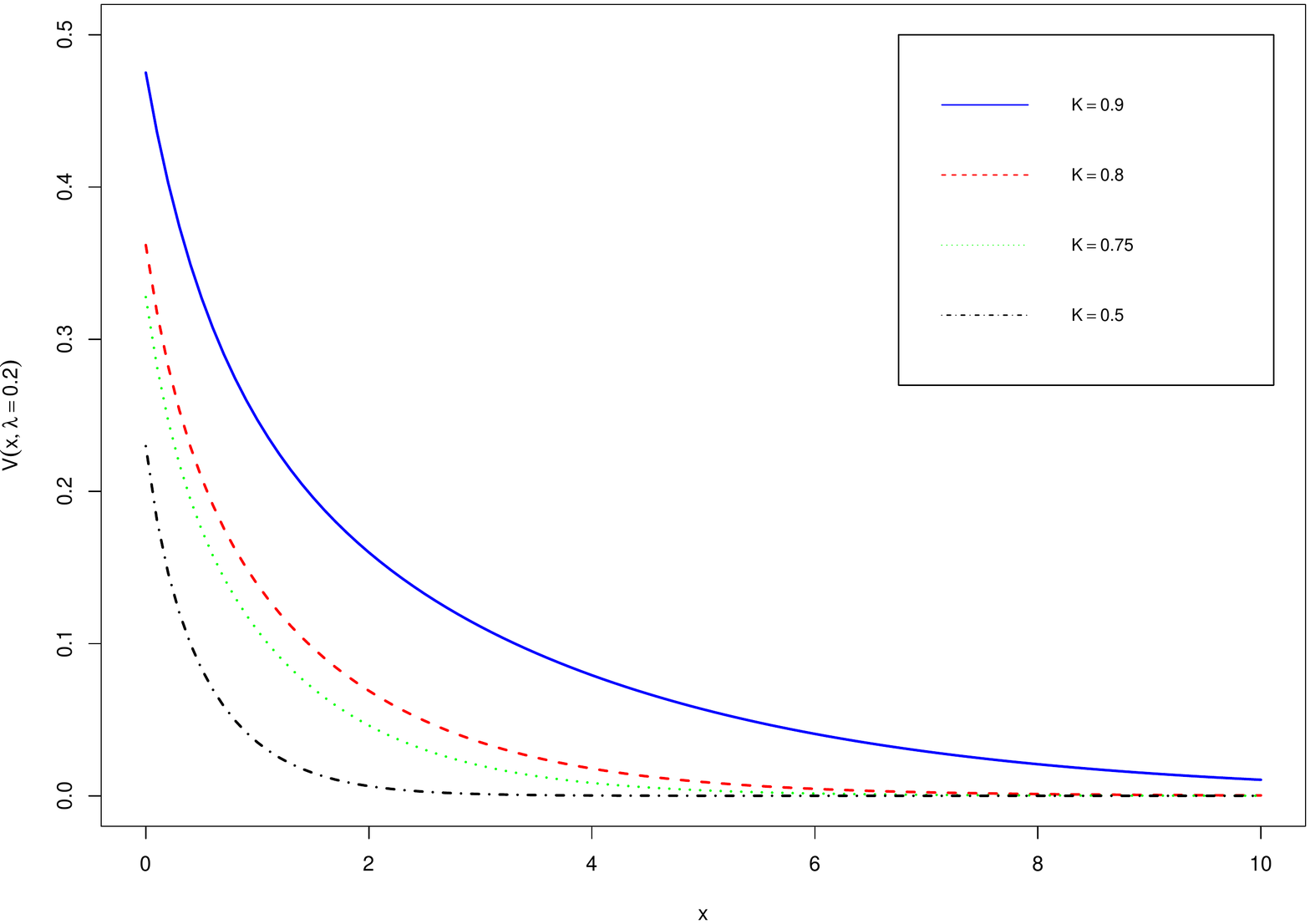}}
       {(d) $\xi(x)=Kx$, $\lambda=0.2$, and $\sigma=0$.}
\\
\subf{\includegraphics[width=0.465\textwidth]{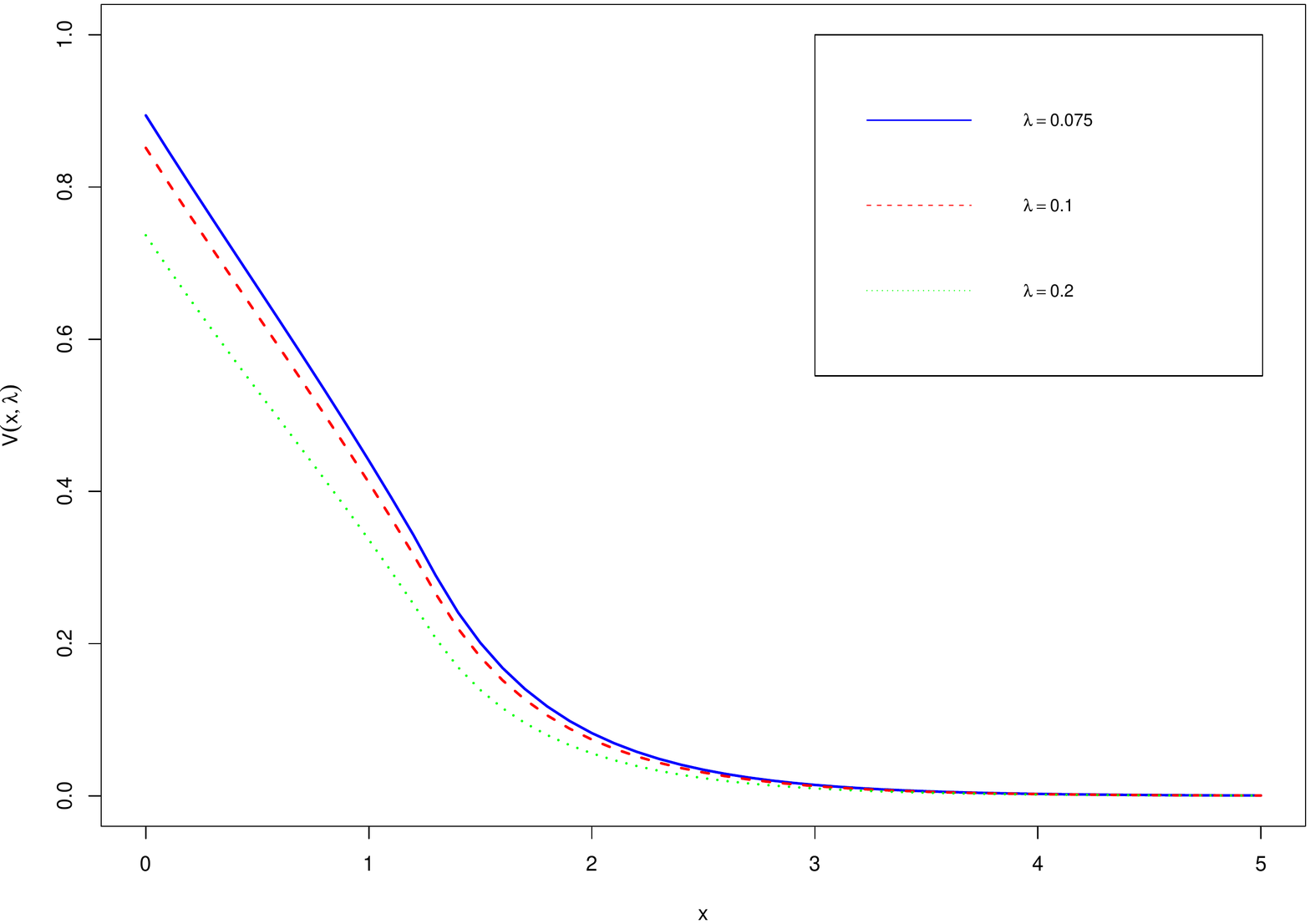}}
       {(e) $\xi(x)=min(1,0.8x)$ and $\sigma=0.2$.}

&
\subf{\includegraphics[width=0.465\textwidth]{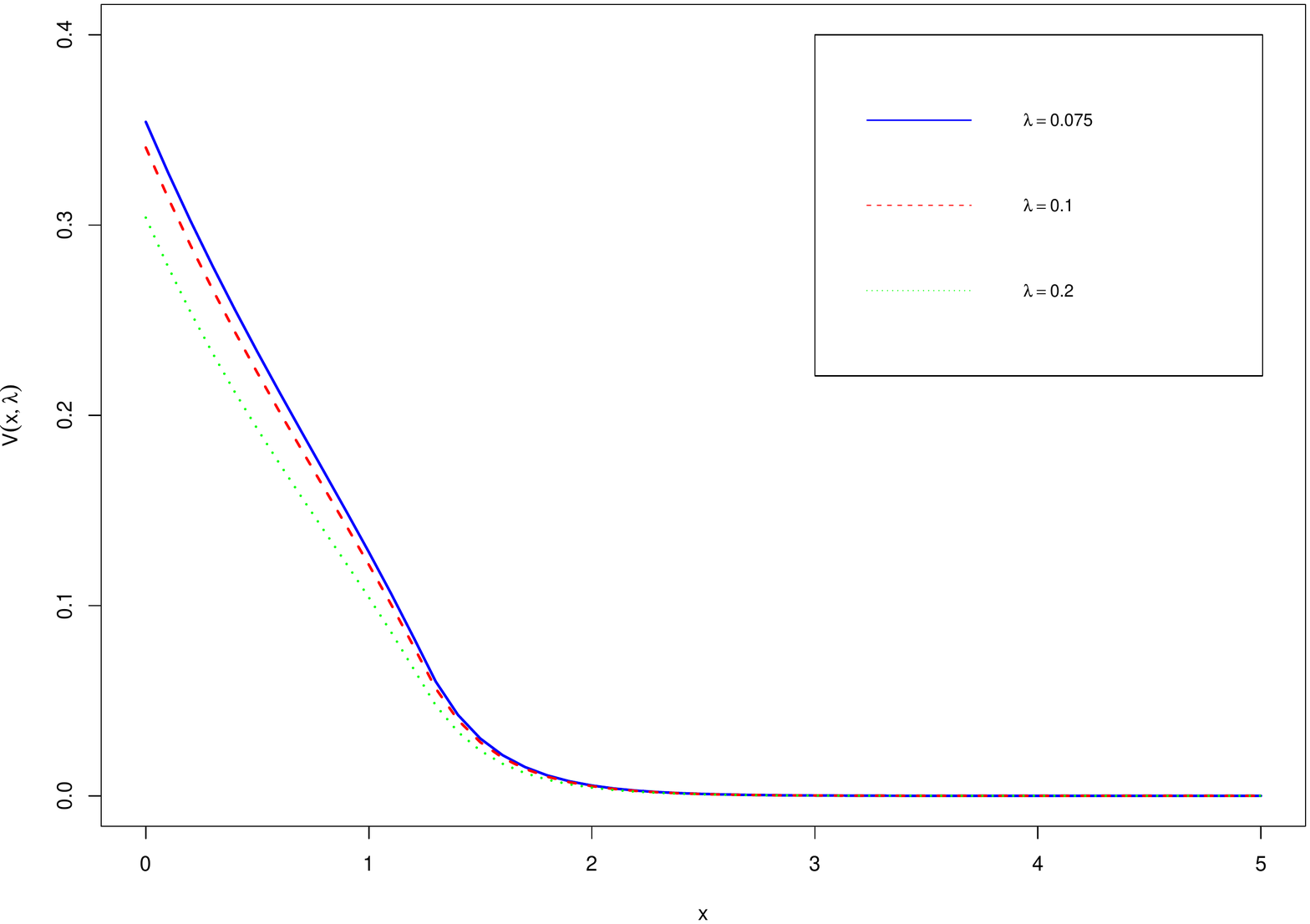}}
       {(f) $\xi(x)=min(1,0.8x)$ and $\sigma=0$.}
\\
\end{tabular}
\caption{Expected capital injection $V(x):=V_{\xi}(x;\infty)$ for jump-diffusion process with Laplace exponent $\psi(\theta)=\mu\theta+\frac{\sigma^2}{2}\theta^2-\frac{a\theta}{\theta+c}$ for $\mu=0.075, a=0.5, c=9, q=0.05$.}\label{fig:Vxi}
\end{figure}

\section{Conclusion}\label{5}
This paper presents some distributional identities concerning excursion below the last record high of surplus process, driven by downward jumps L\'evy process, with capital injection. Capital injection is provided to the firm as soon as the process goes below a drawdown level and is continuously paid until the process goes above the record or ruin occurs, which is announced at the first time the process has undertaken an excursion below the record high longer than an independent exponential period of time. Identities are given explicitly in terms of the scale function of the L\'evy process. The latter makes possible to have a fast numerical computation of the identities and do analysis on the impact of observation frequency and initial surplus to the ruin probability and the expected nett present value of the required total capital injection. The choice of some drawdown functions was made to study various shapes of the ruin probability and the nett present value of the capital injection. Numerical study shows that the results implied by the model is found to be consistent with an observation one would have in financial market. We leave this for further research.

\vspace{1cm}
\section*{Acknowledgements}
Wenyuan Wang and Xianghua Zhao
thank Concordia University where the first draft of  this paper was finished during their visits.
Wenyuan Wang acknowledges the support of the National Natural
Science Foundation of China (No. 11601197). Wenyuan Wang and Xiaowen Zhou are
supported by a National Sciences and Engineering Research Council of Canada grant
(No. RGPIN-2016-06704). Xiaowen Zhou is supported by National Natural Science
Foundation of China (No. 11771018). Part of this work was carried out while Budhi Surya was visiting Department of Satistics of  New York University Stern School of Business in September 2019. He
acknowledges Faculty Strategic Research Grant No. 20859 of Victoria University and hospitality provided by the NYU Stern.

\end{document}